\newtheoremstyle{mythm}{3pt}{3pt}{}{16pt}{\bfseries}{:}{.5em}{}
\theoremstyle{mythm}
\newtheorem{theorem}{Theorem}
\newtheorem{definition}{Definition}
\newtheorem{remark}{Remark}
\newtheorem{corollary}{Corollary}
\newtheorem{lemma}{Lemma}
\newtheorem{construction}{Construction}
\newcommand{\cB}{\mathcal{B}}
\newcommand{\cC}{\mathcal{C}}
\newcommand{\cR}{\mathcal{R}}
\newcommand{\cS}{\mathcal{S}}
\newcommand{\cT}{\mathcal{T}}
\newcommand{\cV}{\mathcal{V}}
\newcommand{\cX}{\mathcal{X}}
\newcommand{\N}{\mathbb{N}}
\newcommand{\F}{\mathbb{F}}
\DeclareMathOperator{\loc}{Loc}
\DeclareMathOperator{\spn}{Span}
\DeclareMathOperator{\rank}{Rank}
\renewcommand{\le}{\leqslant}
\renewcommand{\leq}{\leqslant}
\renewcommand{\ge}{\geqslant}
\renewcommand{\geq}{\geqslant}
\newcommand{\mathset}[1]{\left\{#1\right\}}
\newcommand{\abs}[1]{\left|#1\right|}
\newcommand{\ceilenv}[1]{\left\lceil #1 \right\rceil}
\newcommand{\floorenv}[1]{\left\lfloor #1 \right\rfloor}
\newcommand{\parenv}[1]{\left( #1 \right)}
\newcommand{\oB}{\overline{B}}
\newcommand{\ocB}{\overline{\cB}}
\begin{document}
\title{On Optimal Locally Repairable Codes with Super-Linear Length\author{Han Cai, Ying Miao, Moshe Schwartz, and Xiaohu Tang}
\thanks{The material in this paper was submitted in part to the IEEE International Symposium on Information Theory 2019.}
\thanks{H. Cai and M. Schwartz are with the Department of Electrical and
Computer Engineering, Ben-Gurion University of the Negev, Beer Sheva
8410501, Israel (e-mail: hancai@aliyun.com; schwartz@ee.bgu.ac.il).}
\thanks{Y. Miao is with the Faculty of Engineering, Information and Systems,
University of Tsukuba, Tennodai 1-1-1, Tsukuba 305-8573, Japan (e-mail: miao@sk.tsukuba.ac.jp).}
\thanks{X. Tang is with the School of Information Science and Technology,
Southwest Jiaotong University, Chengdu, 610031, China (e-mail: xhutang@swjtu.edu.cn).}
  }
\date{}
\maketitle
\begin{abstract}
  Locally repairable codes which are optimal with respect to the bound
  presented by Prakash \emph{et al.} are considered. New upper bounds
  on the length of such optimal codes are derived. The new bounds both
  improve and generalize previously known bounds. Optimal codes are
  constructed, whose length is order-optimal when compared with the
  new upper bounds. The length of the codes is super-linear in the
  alphabet size.
\end{abstract}

\begin{IEEEkeywords}
Distributed storage, locally repairable code, packing, Steiner system.
\end{IEEEkeywords}

\section{Introduction}

Large-scale cloud storage and distributed file systems, such as Amazon
Elastic Block Store (EBS) and Google File System (GoogleFS), have
reached such a massive scale that disk failures are the norm and not
the exception. In those systems, to protect the data from disk
failures, the simplest solution is a straightforward replication of
data packets across different disks.  However, this solution suffers
from a large storage overhead. As an alternative solution, $[n,k]$ MDS
codes are used as storage codes, which encode $k$ information symbols
to $n$ symbols and store them across $n$ disks. Using MDS codes
leads to a dramatic improvement in redundancy compared with
replication. However, for MDS codes, when one node fails, the system
recovers it at the cost of contacting $k$ surviving symbols, thus
complicating the repair process.

To improve the repair efficiently, in \cite{HCL}, locally repairable
codes were introduced to reduce the number of symbols contacted during
the repair process of a failed node.  More precisely, locally
repairable codes ensure that a failed symbol can be recovered by
accessing only $r\ll k$ other symbols \cite{HCL}.

The original concept of locality only works when exactly one erasure
occurs (that is, one node fails). Over the past few years, several
generalizations have been suggested for the definition of locality. As
examples we mention locality with a single repair set tolerating
multiple erasures \cite{PKLK}, locality with disjoint multiple
repairable sets \cite{WZ,RPDV,SES,CMST}, hierarchical locality \cite{SAK},
and unequal locality \cite{KL}.
For constructions of locally repairable codes with multiple
or uniform repair sets, refer to \cite{PHO,HX,BT,CCFT} as examples.

In this paper, we focus on locally repairable codes with a single
repair set that can repair multiple erasures locally \cite{PKLK}. By
ensuring $\delta-1\geq 2$ redundancies in each repair set, this kind of
locally repairable codes guarantees the system can recover from
$\delta-1$ erasures by accessing $r$ surviving code symbols for each erasure. This is
denoted as $(r,\delta)$-locality.

Research on codes with $(r,\delta)$-locality has proceeded along two
main tracks. In the first track, upper bounds on the minimum Hamming
distance and the code length have been studied. Singleton-type bounds
were introduced for codes with $(r,\delta)$-locality in
\cite{PKLK,SDYL,WZ15}. In \cite{CM}, a bound depending on the size of
the alphabet was derived for the Hamming distance of codes with
$(r,\delta)$-locality. Via linear programming, another bound related
with the size of the alphabet was introduced in \cite{ABHMT}. Very
recently, in \cite{GXY}, an interesting connection between the length
of optimal linear codes with $(r,2)$-locality and the size of the
alphabet was derived.

In the second research track, constructions for optimal locally
repairable codes have been studied. In \cite{RKSV}, a construction of
optimal locally repairable codes was introduced based on Gabidulin
codes  over a finite filed with size $q=\Theta((r+\delta-1)^{(rn)/(r+\delta-1)})$.
By analyzing the structure of repair sets, optimal locally repairable
codes were also constructed in \cite{SDYL} with $q=\Theta({n\choose k})$.
In \cite{TB}, a construction of optimal locally repairable
codes with $q=\Theta(n)$ was proposed. In \cite{TPD} and \cite{WFEH},
optimal locally repairable
codes were constructed using matroid theory. The construction of
\cite{TB} was generalized in \cite{KBTY} to include more flexible
parameters when $n\leq q$. Recently, in \cite{LXY}, cyclic
optimal locally repairable codes with unbounded length were
constructed for $\delta=2$ and Hamming distance $d=3,4$. Finally, for the case of
$\delta=2$ and Hamming distance $d=5$, \cite{GXY,Jin,BCGLP} presented constructions of locally
repairable codes that have optimal distance as well as
order-optimal length $n=\Theta(q^2)$.

The main contribution of this paper is the study of optimal linear
codes with $(r,\delta)$-locality and length that is super-linear in
the field size. We analyze the structure of optimal locally repairable
codes. As a result, we derive a new upper bound on the length of optimal locally repairable
codes for the case of $\delta>2$.
Secondly, as a byproduct, we prove that the bound for $\delta=2$ in
\cite{GXY} not only holds for some other cases (see Corollary \ref{corollary_bound_2} in this paper) besides the one mentioned in
\cite{GXY} but also can be improved for the case $d>r+\delta$.
  Finally, we give a general
construction of locally repairable codes with length that is
super-linear in the field size.  Based on some special structures such
as packings and Steiner systems, locally repairable codes with optimal
Hamming distances and order-optimal length $\Omega(q^{\delta})$ with respect to the new
bound $(\delta>2)$ are obtained. This is to say, the bound for
$\delta>2$ is also asymptotically tight for some special cases.

The remainder of this paper is organized as follows. Section
\ref{sec-preliminaries} introduces some preliminaries about locally
repairable codes.  Section \ref{sec-bound} establishes an upper bound
for the length of optimal locally repairable codes for the case
$\delta>2$.  Section \ref{sec-construction} presents a construction of
optimal locally repairable codes with length $n>q$.  Section
\ref{sec-conclusion} concludes this paper with some remarks.

\section{Preliminaries}\label{sec-preliminaries}

We present the notation and basic definitions used throughout the
paper. For a positive integer $n\in\N$, we define
$[n]=\{1,2,\dots,n\}$. For any prime power $q$, let $\F_q$
denote the finite field with $q$ elements. An $[n,k]_q$ linear code
$\cC$ over $\F_q$ is a $k$-dimensional subspace of
$\F_q^n$ with a $k\times n$ generator matrix $G=({\bf
  g}_1,{\bf g}_2,\dots,{\bf g}_{n})$, where ${\bf g}_i$ is a column
vector of dimension $k$ for all $i\in [n]$. Specifically, it is called
an $[n,k,d]_q$ linear code if the minimum Hamming distance is $d$. For
a subset $S\subseteq [n]$, let $|S|$ denote the cardinality of $S$,
let $2^S$ denote the set of all subsets of $S$, and define
\[ \rank(S)= \rank(\spn{\{{\bf g}_i |i\in  S\}}).\]

In \cite{GHSY}, Gopalan \emph{et al.} introduce the following definition
for the locality of code symbols.  The $i$th $(1 \leq i \leq n)$ code
symbol $c_i$ of an $[n, k,d]_q$ linear code $\cC$ is said to
have locality $r$ $(1 \leq r \leq k)$, if it can be recovered by
accessing at most $r$ other symbols in $\cC$. More precisely,
symbol locality can also be rigorously defined as follows.

\begin{definition}[\cite{GHSY}]\label{def_r_local}
For any column ${\bf g}_i$ of $G$ with $i\in [n]$, define $\loc({\bf
  g}_i)$ as the smallest integer $r$ such that there exists an
$(r+1)$-subset $R_i=\{i,i_1,i_2,\dots,i_r\}\subseteq [n]$ satisfying
\begin{equation}\label{eqn_def_locality}
{\bf g}_i \in \spn(R_i\setminus\{i\}),\,\, {\text{i.e.}},\,\, {\bf g}_i=\sum_{t=1}^{r}\lambda_t{\bf g}_{i_t},\ \ \ \ \lambda_t\in \F_q.
\end{equation}
Equivalently, for any codeword $C=(c_1,c_2,\dots,c_n)\in
\cC$, the $i$th component
$$c_i=\sum_{t=1}^{r}\lambda_tc_{i_t},\,\,\,\,\lambda_t\in
\F_q.$$ Define $\loc(S)=\max_{ i\in S}\loc({\bf g}_i)$ for any
set $S\subseteq[n]$.  Then, an $[n,k,d]_q$ linear code $\cC$
is said to have information locality $r$ if there exists $S\subseteq
[n]$ with $\rank(S)=k$ satisfying $\loc(S)=r.$ Furthermore, an
$[n,k,d]_q$ linear code $\cC$ is said to have all symbol locality r if
$\loc([n]) = r$.
\end{definition}

To guarantee that the system can locally recover from multiple
erasures, say, $\delta-1$ erasures, the definition of locality was
generalized in \cite{PKLK} as follows.

\begin{definition}[\cite{PKLK}]\label{def_r_delta_i} The $j$th column ${\bf g}_j$, $j\in [n]$, of a generator matrix $G$ of an $[n,k]_q$ linear code $\cC$ is said to have  $(r, \delta)$-locality  if
there exists a subset $S_j\subseteq [n]$ such that:
\begin{itemize}
  \item $j\in S_j$ and $|S_j|\leq r+\delta-1$; and
  \item the minimum Hamming distance of the punctured code $\cC|_{S_j}$ obtained by deleting the code symbols $c_t$ ($t \in [n]\setminus S_j$) is at least $\delta$,
\end{itemize}
where the set $S_j$ is also called a $(r,\delta)$-repair set of ${\bf g}_j$.  The
code $\cC$ is said to have information $(r,\delta)$-locality
if there exists $S\subseteq [n]$ with $\rank(S)=k$ such that for each
$j\in S$, ${\bf g}_j$ has $(r, \delta)$-locality. Furthermore, the
code $\cC$ is said to have all symbol $(r,\delta)$-locality if
all the code symbols have $(r,\delta)$-locality.
\end{definition}

In \cite{PKLK} (for the case $\delta=2$ \cite{GHSY}), the following upper bound on the minimum Hamming distance of linear codes with information $(r,\delta)$-locality was derived.
\begin{lemma}[\cite{PKLK}] \label{lemma_bound_i}
  For an $[n,k,d]_q$ linear code with information $(r,\delta)$-locality,
\begin{equation}\label{eqn_bound_for_local_i}
d\leq n-k+1-\left(\left\lceil\frac{k}{r}\right\rceil-1\right)(\delta-1).
\end{equation}
Additionally, a locally repairable code is said to be \emph{optimal}
if its minimum Hamming distance attains this bound with equality.
\end{lemma}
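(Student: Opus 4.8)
The plan is to reduce the claim to the following standard reformulation of the Singleton bound for a linear code $\cC$ with generator matrix $G=(\mathbf{g}_1,\dots,\mathbf{g}_n)$: one has $d=n-\max\{\,|T|\ :\ T\subseteq[n],\ \rank(T)\le k-1\,\}$, because a nonzero codeword vanishing on the coordinates of $T$ exists precisely when $\{\mathbf{g}_i:i\in T\}$ spans a subspace of dimension at most $k-1$. Hence it suffices to construct one set $T\subseteq[n]$ with $\rank(T)=k-1$ and $|T|\ge (k-1)+(\lceil k/r\rceil-1)(\delta-1)$; substituting this into the identity yields exactly \eqref{eqn_bound_for_local_i}, and the concluding sentence of the statement is merely a definition requiring no proof.

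To produce such a $T$, I would fix an information set $\cI$ with $\rank(\cI)=k$ all of whose symbols carry $(r,\delta)$-repair sets, and grow a set $S$ greedily: starting from $S=\emptyset$, while some $\mathbf{g}_i$ with $i\in\cI$ lies outside $\spn(S)$, pick such an $i$ together with one of its repair sets $S_i$ and update $S\leftarrow S\cup S_i$. Each step admits two estimates. First, $\cC|_{S_i}$ has length at most $r+\delta-1$ and minimum Hamming distance at least $\delta$, so by the Singleton bound $\rank(S_i)=\dim\cC|_{S_i}\le r$; thus each step raises $\rank(S)$ by at least $1$ and at most $r$. Second, writing $J=S_i\cap S$ and $s=\rank(S_i)-\rank(J)$ (which is $\ge1$ because $\mathbf{g}_i$ lies in $\spn(S_i)$ but not in $\spn(J)\subseteq\spn(S)$), the subcode of $\cC|_{S_i}$ consisting of the codewords that vanish on $J$ has dimension $\dim\cC|_{S_i}-\dim\cC|_J=s$, is supported on the $|S_i\setminus S|$ coordinates of $S_i\setminus S$, and still has minimum distance $\ge\delta$; the Singleton bound applied to this subcode gives $|S_i\setminus S|\ge s+\delta-1$, while submodularity of the rank function bounds the rank increase at this step by $s$. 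Combining, every step adds at least $(\text{rank increase})+(\delta-1)$ new coordinates, so after $t$ steps $|S|\ge\rank(S)+t(\delta-1)$; note also that the greedy can always proceed until $\rank(S)=k$, since $\cI$ is an information set.

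To finish, run the greedy until $\rank(S)$ first reaches $k-1$ or more, say after $t$ steps. If $\rank(S)=k-1$ at that moment, take $T=S$; then $k-1=\rank(S)\le rt$ forces $t\ge\lceil(k-1)/r\rceil\ge\lceil k/r\rceil-1$, whence $|T|\ge (k-1)+(\lceil k/r\rceil-1)(\delta-1)$. The delicate part, which I expect to be the main obstacle, is the overshoot case $\rank(S)\ge k$ at step $t$: there one must retreat inside the last repair set, selecting a subset $S'\subseteq S_i$ of that final repair set with $\rank\bigl((S\setminus S_i)\cup S'\bigr)=k-1$ exactly — possible since adjoining coordinates of $S_i$ one at a time changes the rank by $0$ or $1$, and after $t-1$ steps the rank was at most $k-2$ while after all of $S_i$ it is at least $k$ — and set $T=(S\setminus S_i)\cup S'$. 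This truncated last step yields only $|T|\ge (k-1)+(t-1)(\delta-1)$, but overshooting forces $\rank(S)\ge k\le rt$, hence $t\ge\lceil k/r\rceil$ and $t-1\ge\lceil k/r\rceil-1$, so the required inequality survives. In both cases $\rank(T)=k-1$ and $|T|\ge (k-1)+(\lceil k/r\rceil-1)(\delta-1)$, which is exactly what the reformulated Singleton bound needs.
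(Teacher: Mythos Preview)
The paper does not prove this lemma at all; it is quoted from \cite{PKLK} (and, for $\delta=2$, from \cite{GHSY}) as a known result. Your greedy argument is precisely the standard proof from that literature: build up a union of repair sets one at a time, tracking the gap $|S|-\rank(S)$ and using the Singleton bound on the punctured code of each repair set to show the gap grows by at least $\delta-1$ per step, then stop at (or just below) rank $k-1$ and invoke Lemma~\ref{lemma_rank_and_dist}. The key per-step estimate you state---that the shortened code of $\cC|_{S_i}$ vanishing on $J=S_i\cap S$ has dimension $s=\rank(S_i)-\rank(J)$, effective length $|S_i\setminus S|$, and distance $\ge\delta$, so $|S_i\setminus S|\ge s+\delta-1$ while the rank increase is $\le s$ by submodularity---is exactly right.

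One small notational slip in your overshoot case: you write $T=(S\setminus S_i)\cup S'$, but $S\setminus S_i$ may be a proper subset of the set $S_{t-1}$ you had after $t-1$ steps (since $S_i$ can intersect earlier repair sets), and your size bound $|T|\ge(k-1)+(t-1)(\delta-1)$ uses $|S_{t-1}|\ge\rank(S_{t-1})+(t-1)(\delta-1)$. The fix is immediate: build $T$ by starting from $S_{t-1}$ and adjoining elements of $S_i\setminus S_{t-1}$ one at a time until the rank hits $k-1$; then $T\supseteq S_{t-1}$ and $|T|\ge|S_{t-1}|+\bigl((k-1)-\rank(S_{t-1})\bigr)\ge(k-1)+(t-1)(\delta-1)$, as you want. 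With that adjustment the argument is complete and matches the original proof.
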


The following lemma is very useful to determine the minimum Hamming distance.
\begin{lemma}(\cite{MS})\label{lemma_rank_and_dist}
An $[n,k]_q$ linear code $\mathcal{C}$ has minimum Hamming distance
$d$ if and only if $d$ is the largest integer such that
\begin{equation*}
|S|\leq n-d
\end{equation*}
for every $S\subseteq [n]$ with ${\rm Rank}(S)\leq k-1$.
\end{lemma}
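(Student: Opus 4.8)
The plan is to translate the statement into a condition on the columns of a generator matrix $G=(\mathbf{g}_1,\dots,\mathbf{g}_n)$ of $\cC$, via the standard correspondence between codewords and linear functionals on $\F_q^k$. First I would write each codeword as $C=\mathbf{x}G$ for $\mathbf{x}\in\F_q^k$ and note that its set of zero coordinates is $Z(\mathbf{x})=\{i\in[n]:\langle\mathbf{x},\mathbf{g}_i\rangle=0\}$, so the Hamming weight of $C$ equals $n-|Z(\mathbf{x})|$. The key observation is: for a subset $S\subseteq[n]$ there is a nonzero $\mathbf{x}$ with $S\subseteq Z(\mathbf{x})$ if and only if some nonzero $\mathbf{x}$ is orthogonal to $\spn\{\mathbf{g}_i:i\in S\}$, which (since the standard bilinear form on $\F_q^k$ is non-degenerate) holds if and only if $\spn\{\mathbf{g}_i:i\in S\}\neq\F_q^k$, i.e.\ $\rank(S)\le k-1$.

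Next I would make this quantitative. Set $s=\max\{|S|:S\subseteq[n],\ \rank(S)\le k-1\}$, which is well defined since $\rank(\emptyset)=0$. Choosing $S$ attaining this maximum and a nonzero $\mathbf{x}\perp\spn\{\mathbf{g}_i:i\in S\}$ (which exists by the key observation), the codeword $C=\mathbf{x}G$ is nonzero because $\rank(G)=k$, and it has weight $n-|Z(\mathbf{x})|\le n-s$; hence $d\le n-s$. Conversely, taking a nonzero codeword $C=\mathbf{x}G$ of minimum weight $d$, the set $Z(\mathbf{x})$ has size $n-d$ and satisfies $\rank(Z(\mathbf{x}))\le k-1$, since $\spn\{\mathbf{g}_i:i\in Z(\mathbf{x})\}\subseteq\mathbf{x}^{\perp}$ has dimension at most $k-1$. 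Thus $s\ge n-d$, and combining the two inequalities gives $s=n-d$.

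Finally I would unwind the definitions: the requirement ``$|S|\le n-d'$ for every $S$ with $\rank(S)\le k-1$'' is precisely ``$s\le n-d'$'', i.e.\ ``$d'\le n-s$'', so the largest integer $d'$ for which it holds is $n-s=d$, which is the asserted equivalence. I do not anticipate a real obstacle; the only points requiring care are the non-degeneracy step (that $\mathbf{x}\ne 0$ forces $\mathbf{x}G\ne 0$, using $\rank(G)=k$) and the fact that the universally quantified size condition must be read as a bound on the size of the \emph{largest} such $S$, which is exactly what licenses the ``largest integer'' formulation.
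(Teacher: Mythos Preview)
Your argument is correct and is the standard proof of this classical fact. Note, however, that the paper does not actually give its own proof of this lemma: it is stated with a citation to MacWilliams--Sloane and used as a black box. So there is nothing in the paper to compare against; your write-up supplies exactly the elementary generator-matrix argument one would expect, with the two delicate points (that $\mathbf{x}\neq 0$ implies $\mathbf{x}G\neq 0$ because $G$ has full row rank, and that the universally quantified size condition is equivalent to bounding the maximum such $|S|$) handled correctly.
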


\section{Bounds on the Length of Locally Repairable Codes}\label{sec-bound}

The goal of this section is to derive upper bounds on the length of
optimal locally repairable codes. Throughout this section, let
\[n=(r+\delta-1)w+m, \qquad k=ru+v,\]
where $\delta\geq 2$, $0\leq m\leq r+\delta-2$, and $0\leq v\leq r-1$
are all integers.

For the bounds and the construction we shall require a simple
combinatorial covering design which we now define.

\begin{definition}
  Let $n,T,s\in\N$. Also, let $\cX$ be a set of cardinality $n$,
  whose elements are called \emph{points}. Finally, let
  $\cB=\{B_1,B_2,\dots,B_{T}\}\subseteq 2^\cX$ be a set of
  \emph{blocks} such that $\bigcup_{i\in[T]} B_i=\cX$, and for all
  $i\in[T]$, $\abs{B_i}\leq s$ and $\bigcup_{j\in
    T\setminus\mathset{i}}B_j\neq \cX$.  We then say $(\cX,\cB)$ is an
  \emph{$(n,T,s)$-essential covering family (ECF)}. If all blocks are
  the same size we say $(\cX,\cB)$ is a \emph{uniform} $(n,T,s)$-ECF.
\end{definition}

An important quantity associated with any family of subsets,
$\cB\subseteq 2^{\cX}$, is its \emph{overlap},
denoted $D(\cB)$, and defined as
\begin{equation*}
D(\cB)=\sum_{B\in \cB}|B|-\left|\bigcup\limits_{B\in\cB}B\right|.
\end{equation*}
Obviously $D(\cB)\geq 0$ and $D(\mathcal{B})$ is monotonically increasing. Additionally, $D(\cB)=0$ if and only if its
sets are pairwise disjoint.

Particularly, we need to investigate the structures of
repair sets in three lemmas, whose proofs are given in Appendix.

\begin{lemma}\label{lemma_for_D(B)}
  Let $(\cX,\cB)$ be an $(n,T,r+\delta-1)$-ECF, and assume it is
  non-uniform or that $D(\cB)\neq 0$. Then for every $0\leq t\leq T$,
  there exists a subset $\cB'\subseteq \cB$, $\abs{\cB'}=t$, such that
  \begin{equation*}
    t(r+\delta-1)-\left|\bigcup_{B\in \cB'}B\right|\geq\min\left\{r+\delta-1-m,\left\lfloor t/2\right\rfloor\right\}.
\end{equation*}
 \end{lemma}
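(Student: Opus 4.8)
\emph{Proof plan.} Write $s=r+\delta-1$. For a subfamily $\cA\subseteq\cB$ put $U(\cA)=\bigcup_{B\in\cA}B$ and
\[
f(\cA)=\abs{\cA}\,s-\abs{U(\cA)}=\sum_{B\in\cA}(s-\abs{B})+D(\cA).
\]
Then $f(\cA)\geq 0$, $f$ is integer-valued and monotone under inclusion, and adjoining a single block $B\notin\cA$ changes it by $f(\cA\cup\mathset{B})-f(\cA)=s-\abs{B\setminus U(\cA)}\geq 0$. In this language the assertion is: for every $t$ there is a $t$-element $\cB'\subseteq\cB$ with $f(\cB')\geq\min\mathset{s-m,\floorenv{t/2}}$. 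The plan is to build one nested chain $\emptyset=\cA_0\subsetneq\cA_1\subsetneq\cdots\subsetneq\cA_p\subseteq\cB$ with $\abs{\cA_i}\leq\abs{\cA_{i-1}}+2$ (hence $\abs{\cA_i}\leq 2i$), with $f(\cA_i)\geq i$, and with $f(\cA_p)\geq s-m$; then for any $t$ one takes the largest $\cA_i$ with $\abs{\cA_i}\leq t$ and pads it arbitrarily up to size $t$ inside $\cB$.

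The core is the one-step extension. First note that here $f(\cB)=Ts-n\geq s-m$: if instead $Ts-n<s-m$, then $Ts<s(w+1)$, so $T\leq w$, and then $n=\abs{\bigcup_iB_i}\leq\sum_i\abs{B_i}\leq Ts\leq ws\leq n$ forces equality throughout, i.e.\ $m=0$, $T=w$, every $\abs{B_i}=s$, and the $B_i$ pairwise disjoint --- so $(\cX,\cB)$ would be uniform with $D(\cB)=0$, contrary to hypothesis. In particular, whenever a built-up $\cA_{i-1}$ satisfies $f(\cA_{i-1})<s-m\leq f(\cB)$, it is a proper subfamily, and we extend it thus. If some $B\in\cB\setminus\cA_{i-1}$ has $\abs{B\setminus U(\cA_{i-1})}\leq s-1$, set $\cA_i=\cA_{i-1}\cup\mathset{B}$; then $f$ grows by at least $1$ and $\abs{\cA_i}=\abs{\cA_{i-1}}+1$. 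Otherwise every unchosen block $B$ has $\abs{B\setminus U(\cA_{i-1})}=s$, forcing $\abs{B}=s$ and $B\cap U(\cA_{i-1})=\emptyset$; if no two unchosen blocks met, adjoining all of them --- full-size, pairwise disjoint, disjoint from $U(\cA_{i-1})$ --- would leave $f$ unchanged, giving $f(\cB)=f(\cA_{i-1})<s-m$, which contradicts $f(\cB)\geq s-m$. So pick two intersecting unchosen blocks $B,B'$ and set $\cA_i=\cA_{i-1}\cup\mathset{B,B'}$; adjoining $B$ then $B'$ raises $f$ by $0+\abs{B\cap B'}\geq 1$, while $\abs{\cA_i}=\abs{\cA_{i-1}}+2$. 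Iterating until $f\geq s-m$ terminates (each step raises the integer $f$ by at least $1$, and $f\leq f(\cB)$); and $f(\cA_{p-1})<s-m$ together with $f(\cA_{p-1})\geq p-1$ gives $p\leq s-m$, hence $\abs{\cA_p}\leq 2(s-m)$.

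It then remains to read off the bound. The case $t=0$ is trivial. If $t\geq\abs{\cA_p}$, pad $\cA_p$ to a $t$-subset $\cB'$ of $\cB$; by monotonicity $f(\cB')\geq f(\cA_p)\geq s-m\geq\min\mathset{s-m,\floorenv{t/2}}$. If $1\leq t<\abs{\cA_p}$, let $i$ be the largest index with $\abs{\cA_i}\leq t$; then $i<p$, so $\abs{\cA_{i+1}}>t$ while $\abs{\cA_{i+1}}\leq\abs{\cA_i}+2\leq 2i+2$, whence $t\leq 2i+1$ and therefore $i\geq\floorenv{t/2}$. Padding $\cA_i$ to a $t$-subset $\cB'$ of $\cB$ gives $f(\cB')\geq f(\cA_i)\geq i\geq\floorenv{t/2}$; and since $t<\abs{\cA_p}\leq 2(s-m)$ we have $\floorenv{t/2}\leq s-m-1<s-m$, so $\floorenv{t/2}=\min\mathset{s-m,\floorenv{t/2}}$, finishing all cases.

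The step I expect to be the main obstacle is the dichotomy inside the one-step extension: a naive greedy that always adds the block of largest increment can stall at increment $0$ for many consecutive steps --- precisely when all remaining blocks are full-size and disjoint from the current union --- and the only escape is the two-block look-ahead together with the counting identity $f(\cB)=Ts-n$, which confines the ``increments stay $0$ forever'' behaviour to the single configuration that the hypothesis rules out. The secondary subtlety is the bookkeeping that keeps the chain growing by at most two blocks per unit increase of $f$, since that is exactly what turns ``$f$ eventually reaches $s-m$'' into the $\floorenv{t/2}$ profile demanded by the statement.
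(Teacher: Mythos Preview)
Your proof is correct and follows essentially the same route as the paper's. The paper first pads each block up to size $s=r+\delta-1$ to obtain a uniform family $\overline{\cB}$, works with $D(\overline{\cB})$ (which coincides with your $f$), derives $D(\overline{\cB})\geq s-m$ from $D(\overline{\cB})>0$ and $D(\overline{\cB})\equiv -m\pmod{s}$, and then builds up the subfamily by repeatedly locating pairs of intersecting blocks; your version bypasses the padding by working directly with $f(\cA)=\abs{\cA}s-\abs{U(\cA)}$ and replaces the pair-partition argument by an explicit one-or-two-block greedy step, but the key mechanism --- $f(\cB)\geq s-m$ forces enough overlap among unchosen blocks to guarantee an increment --- is identical.
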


\begin{lemma}\label{lemma_initialization}
For any $[n,k]_q$ linear code $\cC$ with all symbol
$(r,\delta)$-locality, let $\Gamma\subseteq 2^{[n]}$ be the set of all
possible $(r,\delta)$-repair sets. Then we can find a subset
$\cR\subseteq \Gamma$ such that $([n],\cR)$ is an
$(n,\abs{\cR},r+\delta-1)$-ECF with $|\mathcal{R}|\geq \lceil\frac{k}{r}\rceil$.
\end{lemma}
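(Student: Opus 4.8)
The plan is to construct the subcollection $\cR$ greedily, repeatedly picking repair sets that cover at least one new point, and then prune away redundant sets. Concretely, since $\cC$ has all-symbol $(r,\delta)$-locality, every coordinate $j\in[n]$ lies in at least one repair set from $\Gamma$. First I would run a greedy covering: order the coordinates $1,2,\dots,n$; if coordinate $i$ is not yet covered by the sets chosen so far, add to $\cR$ some $S\in\Gamma$ with $i\in S$. This yields $\cR\subseteq\Gamma$ with $\bigcup_{S\in\cR}S=[n]$ and each $\abs{S}\le r+\delta-1$. Then I would prune: while there exists $S\in\cR$ with $\bigcup_{S'\in\cR\setminus\{S\}}S'=[n]$, delete $S$ from $\cR$. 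When pruning halts, every block of $\cR$ is essential, i.e.\ removing it loses at least one point, which is exactly the ECF condition. So $([n],\cR)$ is an $(n,\abs{\cR},r+\delta-1)$-ECF.

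**Next** I would establish the size bound $\abs{\cR}\ge\lceil k/r\rceil$. The idea is that a small ECF cannot support a code of dimension $k$. Suppose $\abs{\cR}=\ell$. Order the blocks of $\cR$ as $S_1,\dots,S_\ell$; by the essential property each $S_i$ contributes at least one point not in $\bigcup_{j\ne i}S_j$, but more usefully I want a rank bound. Within each repair set $S_i$, the punctured code $\cC|_{S_i}$ has minimum distance $\ge\delta$, so its dual-distance/parity structure forces $\rank(S_i)\le\abs{S_i}-(\delta-1)\le r$. The standard submodular-type argument (as in the proof of Lemma~\ref{lemma_bound_i}) then gives
\[
k=\rank([n])=\rank\Bigl(\bigcup_{i=1}^{\ell}S_i\Bigr)\le\sum_{i=1}^{\ell}\rank(S_i)\le \ell r,
\]
hence $\ell\ge k/r$, and since $\ell$ is an integer, $\ell\ge\lceil k/r\rceil$.

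**The main obstacle** is making the rank-additivity step precise: $\rank(\bigcup S_i)\le\sum\rank(S_i)$ is immediate from subadditivity of span, so actually the only delicate point is the per-set bound $\rank(S_i)\le r$. This follows because $\abs{S_i}\le r+\delta-1$ and $\cC|_{S_i}$ has minimum distance at least $\delta$: by Lemma~\ref{lemma_rank_and_dist} applied to the punctured code (or directly, since any $\delta-1$ columns of a generator matrix of $\cC|_{S_i}$ can be removed without dropping the rank, as the minimum distance $\ge\delta$ means every $\abs{S_i}-\delta+1$ columns already span the whole code space), we get $\rank(S_i)=\dim\cC|_{S_i}\le \abs{S_i}-(\delta-1)\le r$. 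Combining the greedy/pruning construction with this rank count completes the proof; the only subtlety worth spelling out is that pruning preserves the covering property by construction and terminates because $\abs{\cR}$ strictly decreases.
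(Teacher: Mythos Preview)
Your proposal is correct and follows essentially the same approach as the paper: start from a covering family of repair sets, prune away redundant blocks until every remaining block is essential, and then bound $|\cR|$ via $k=\rank(\bigcup_{R\in\cR}R)\le\sum_R\rank(R)\le |\cR|\,r$. The only cosmetic difference is that the paper begins the pruning directly from the full collection $\Gamma$ rather than first running your greedy pass, which is an unnecessary (though harmless) preliminary step.
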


\begin{lemma}\label{lemma_find_V_1}
  Let $\cC$ be an $[n,k]_q$ linear code with all symbol
  $(r,\delta)$-locality. Let $\cR$ be the ECF given by Lemma
  \ref{lemma_initialization}. Assume $\cV\subseteq \cR$ such that
  $|\cV|\leq \lceil\frac{k}{r}\rceil-1$. If $\Delta$ is an integer
  such that
  \begin{equation}\label{eqn_delta}
    |\cV|(r+\delta-1)-\left|\bigcup_{R\in \cV}R\right|\geq \Delta>0
  \end{equation}
  and $\lceil\frac{k+\Delta}{r}\rceil>\lceil \frac{k}{r}\rceil$, then
  there exists a set $S\subseteq [n]$ with $\rank(S)=k-1$ and
  \begin{equation}\label{eqn_Size_S}
    |S|\geq
    k+\left(\left\lceil\frac{k}{r}\right\rceil-1\right)(\delta-1).
  \end{equation}
\end{lemma}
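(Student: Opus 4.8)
The plan is to start from the sub-family $\cV$ and gradually enlarge it to a set $S$ of columns whose rank is exactly $k-1$, keeping careful track of the overlap $D$ so that the size bound \eqref{eqn_Size_S} is forced. First I would set $\cU=\bigcup_{R\in\cV}R$ and observe that, by \eqref{eqn_delta}, $\abs{\cU}\le\abs{\cV}(r+\delta-1)-\Delta$; since each repair set $R$ spans a punctured code of minimum distance at least $\delta$, the rank of $R$ is at most $\abs{R}-(\delta-1)\le r$, so $\rank(\cU)\le\abs{\cV}r-\Delta'$ for an appropriate $\Delta'$ derived from $\Delta$ (essentially $\Delta$ spread over the overlapping sets). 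The hypothesis $\lceil(k+\Delta)/r\rceil>\lceil k/r\rceil$ is exactly what guarantees that this accounting gives us ``room'': after absorbing the $\abs{\cV}$ repair sets we have consumed at least $\abs{\cV}(\delta-1)+\Delta$ coordinates beyond the rank they contribute, and $\rank(\cU)<k$ still holds because $\abs{\cV}\le\lceil k/r\rceil-1$.

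Next I would greedily add further repair sets from $\cR$ one at a time. Each new repair set $R$ either is already contained in the current coordinate set — in which case it contributes nothing and we discard it — or it adds at most $r$ to the rank while adding its full $\abs{R}\le r+\delta-1$ (minus whatever it already shares) to the coordinate count, hence contributes at least $\delta-1$ more ``slack'' between size and rank. I continue until the rank first reaches or exceeds $k-1$; because ranks go up by at most $r$ at each step and we started below $k$, I can stop at a family $\cW\supseteq\cV$ with $\rank(\bigcup_{R\in\cW}R)$ equal to $k-1$ or $k$. If it overshoots to $k$, I remove coordinates one by one from the last added block (these lie outside $\bigcup\cV$, so the overlap bookkeeping for $\cV$ is untouched) until the rank drops to exactly $k-1$; removing a coordinate drops the rank by at most $1$, so a value of exactly $k-1$ is attainable. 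Call the resulting coordinate set $S$, and note $\rank(S)=k-1$ by construction.

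For the size bound, I would count: writing $\cW=\cV\cup\{R_1,\dots,R_\ell\}$ for the added sets, the total size is $\abs{S}=\rank(S)+D$, where $D$ is the accumulated overlap (size minus rank over all the chosen blocks, after the trimming). From $\cV$ we already have slack $\ge\abs{\cV}(\delta-1)+\Delta$ (the $\abs{\cV}(\delta-1)$ from the rank deficiency of repair sets, plus $\Delta$ from \eqref{eqn_delta}), and each of the $\ell$ genuinely-added sets contributes slack $\ge\delta-1$; the trimming removes at most (rank we overshot) $\le r-1$ coordinates, which subtracts from the slack but the $+\Delta$ term plus $\lceil(k+\Delta)/r\rceil>\lceil k/r\rceil$ is tailored to compensate. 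The total number of blocks $\abs{\cW}=\abs{\cV}+\ell$ must be at least $\lceil k/r\rceil-1+1=\lceil k/r\rceil$... actually, since we need $\rank(S)=k-1$ and each block contributes rank at most $r$, we need $\abs{\cW}\ge\lceil(k-1)/r\rceil$, and combined with the $\Delta$-slack this yields overlap $D\ge(\lceil k/r\rceil-1)(\delta-1)$, whence $\abs{S}\ge(k-1)+(\lceil k/r\rceil-1)(\delta-1)$; the extra $+1$ to reach \eqref{eqn_Size_S} comes from the fact that $\Delta>0$ already buys one coordinate of slack over the naive count. The main obstacle I anticipate is making the slack accounting tight enough at the boundary — in particular handling the trimming step without losing the crucial unit contributed by $\Delta$, and checking that the condition $\lceil(k+\Delta)/r\rceil>\lceil k/r\rceil$ (rather than just $\Delta>0$) is exactly what is needed so that the added blocks number strictly more than $\lceil k/r\rceil-1$ worth of rank; this is where I would expect to invoke Lemma \ref{lemma_for_D(B)} or its proof technique to handle the non-uniform / $D(\cB)\ne 0$ bookkeeping carefully.
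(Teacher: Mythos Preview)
Your overall plan---grow $\cV$ by adjoining repair sets, track the ``slack'' $|S|-\rank(S)$, and stop once the rank hits $k-1$---is exactly the spirit of the paper's argument, but two of your bookkeeping claims fail when $\delta>2$, and the gap is not just a matter of care at the boundary.

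First, the assertion that each newly adjoined block contributes slack at least $\delta-1$ is false as stated. You discard a block $R$ only when it is \emph{set-contained} in the current coordinate set $U$; but if $|R\cap U|>|R|-\delta+1$ (yet $R\not\subseteq U$), the punctured MDS property forces $\spn(R)\subseteq\spn(U)$, so the rank increase is $0$ while the size increase $|R\setminus U|$ is strictly less than $\delta-1$. Thus the slack increase is $<\delta-1$. The paper avoids this by always choosing the next block $R'$ to satisfy $\rank(R'\cup U)>\rank(U)$; this forces $|R'\cap U|\le|R'|-\delta$, so one can peel off $\delta-1$ fresh coordinates that are redundant inside $R'$, yielding the desired slack (this is the content of the paper's extension lemma). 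Your discard criterion needs to be ``spanned by $U$'' rather than ``contained in $U$''.

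Second, even within the initial family $\cV$ the same obstruction can occur: some $R'\in\cV$ may satisfy $|R'\cap\bigcup_{R\ne R'}R|>|R'|-\delta+1$, and then the naive bound $\rank(\cU)\le\sum_{R\in\cV}\rank(R)$ does \emph{not} give slack $\ge|\cV|(\delta-1)$ (and certainly not $|\cV|(\delta-1)+\Delta$, which you never justify). The paper handles this by a genuine case split: if such a bad $R'$ exists, one discards it, extends the remaining family inside $\cR\setminus\{R'\}$ to size $\lceil k/r\rceil-1$, and then re-adjoins $R'$ at the end; the ECF property guarantees $R'\not\subseteq\bigcup_{R\in\cV''}R$, which supplies the missing unit of size. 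Only in the complementary case---every block in $\cV$ has small overlap with the rest---does the straightforward slack count go through. Finally, the role of the hypothesis $\lceil(k+\Delta)/r\rceil>\lceil k/r\rceil$ is not to compensate for trimming: it is used to show that after extending to $\lceil k/r\rceil-1$ blocks the rank is still at most $k-1-r$, which permits adjoining \emph{one further} block (hence $\lceil k/r\rceil$ blocks total) while keeping rank $\le k-1$; that extra block is what converts the bound $k-1+(\lceil k/r\rceil-1)(\delta-1)$ into the required $k+(\lceil k/r\rceil-1)(\delta-1)$. No trimming is needed.
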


We are now at a position to state and prove the first main
tool in proving our bounds.

\begin{theorem}\label{theorem_repair_sets}
  Let $\cC$ be an optimal $[n,k,d]_q$ linear code with all symbol
  $(r,\delta)$-locality, where optimality is with respect to the bound
  in Lemma \ref{lemma_bound_i}. Let $\Gamma\subseteq 2^{[n]}$ be
  the set of all possible $(r,\delta)$-repair sets. Write $k=ru+v$,
  for integers $u$ and $v$, and $0\leq v\leq r-1$.  If $(r+\delta-1) |
  n$, $k>r$, and additionally, $u\geq 2(r-v+1)$ or $v=0$, then
  there exists a set of $(r,\delta)$-repair sets $\cS\subseteq
  \Gamma$, such that all $R\in\cS$ are of cardinality
  $\abs{R}=r+\delta-1$, and $\cS$ is a partition of $[n]$.
\end{theorem}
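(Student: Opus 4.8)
The plan is to start from the ECF $\cR\subseteq\Gamma$ guaranteed by Lemma~\ref{lemma_initialization}, which satisfies $\abs{\cR}\geq\lceil k/r\rceil$ and covers $[n]$ with blocks of size at most $r+\delta-1$, and to show that under the hypotheses $(r+\delta-1)\mid n$, $k>r$, and ($u\geq 2(r-v+1)$ or $v=0$), this family must in fact consist of pairwise disjoint blocks each of size exactly $r+\delta-1$; such a family is automatically a partition of $[n]$ since $\abs{\bigcup_{R\in\cR}R}=n=(r+\delta-1)w$ forces $\abs{\cR}=w$ and every point to lie in a unique block. The core of the argument is therefore to rule out any ``defect'': either a block strictly smaller than $r+\delta-1$, or nontrivial overlap, i.e.\ $D(\cR)>0$.

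The key step is a proof by contradiction that invokes Lemma~\ref{lemma_find_V_1}. Suppose $\cR$ is non-uniform or $D(\cR)\neq 0$. Since $(r+\delta-1)\mid n$ means $m=0$, Lemma~\ref{lemma_for_D(B)} applied with a suitable $t\leq\lceil k/r\rceil-1$ produces a subfamily $\cV\subseteq\cR$ with $\abs{\cV}=t$ and $t(r+\delta-1)-\abs{\bigcup_{R\in\cV}R}\geq\min\{r+\delta-1,\lfloor t/2\rfloor\}=: \Delta>0$. I would choose $t$ as large as the constraint $\abs{\cV}\leq\lceil k/r\rceil-1$ allows (this is where $k>r$ enters, guaranteeing $\lceil k/r\rceil-1\geq 1$ so a nonempty such $\cV$ exists), and then check that the resulting $\Delta$ is large enough that $\lceil(k+\Delta)/r\rceil>\lceil k/r\rceil$. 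Granting that, Lemma~\ref{lemma_find_V_1} yields a set $S\subseteq[n]$ with $\rank(S)=k-1$ and $\abs{S}\geq k+(\lceil k/r\rceil-1)(\delta-1)$. By Lemma~\ref{lemma_rank_and_dist} this forces $d\leq n-\abs{S}\leq n-k-(\lceil k/r\rceil-1)(\delta-1) = n-k+1-(\lceil k/r\rceil-1)(\delta-1) - 1$, strictly violating the optimality bound of Lemma~\ref{lemma_bound_i}. This contradiction shows $\cR$ is uniform with $D(\cR)=0$, and then taking $\cS=\cR$ finishes the proof.

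The main obstacle is the bookkeeping in the middle step: one must verify that the defect quantity $\Delta$ extracted from Lemma~\ref{lemma_for_D(B)} is simultaneously (a) positive and (b) large enough to bump up the ceiling $\lceil k/r\rceil$, and this is exactly where the arithmetic hypothesis ``$u\geq 2(r-v+1)$ or $v=0$'' is needed. Writing $k=ru+v$, one has $\lceil k/r\rceil=u$ when $v=0$ and $u+1$ when $v\geq 1$; the condition $\lceil(k+\Delta)/r\rceil>\lceil k/r\rceil$ becomes $\Delta\geq r-v+1$ in the case $v\geq 1$ and $\Delta\geq 1$ in the case $v=0$. In the $v=0$ case any $\Delta>0$ works, so the only content is choosing $t\geq 2$ (possible since $u\geq 2$ follows from $k>r$), giving $\Delta\geq 1$. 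In the $v\geq 1$ case I need $\min\{r+\delta-1,\lfloor t/2\rfloor\}\geq r-v+1$, i.e.\ essentially $\lfloor t/2\rfloor\geq r-v+1$, i.e.\ $t\geq 2(r-v+1)$; since the largest admissible $t$ is $\lceil k/r\rceil-1=u$, the hypothesis $u\geq 2(r-v+1)$ is precisely what makes this choice of $t$ legal (one also checks $r+\delta-1\geq r-v+1$, which is immediate as $\delta\geq 2$ and $v\geq 1$). Assembling these case distinctions carefully, and confirming that the chosen $\cV$ meets all hypotheses of Lemma~\ref{lemma_find_V_1}, is the crux; the rest is routine.
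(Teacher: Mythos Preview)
Your overall strategy matches the paper's proof for the generic case, and your bookkeeping for $v\neq 0$ is correct: with $t=\lceil k/r\rceil-1=u$ and $u\geq 2(r-v+1)$ one indeed gets $\lfloor t/2\rfloor\geq r-v+1$, and together with $r+\delta-1\geq r-v+1$ this gives the required $\Delta\geq r-v+1$.

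However, there is a genuine gap in the $v=0$ case. You write ``choosing $t\geq 2$ (possible since $u\geq 2$ follows from $k>r$)'', but when $v=0$ the largest admissible value is $t=\lceil k/r\rceil-1=u-1$, not $u$. Thus $u\geq 2$ only yields $t\geq 1$; in the boundary case $k=2r$ (i.e.\ $u=2$) you are forced to take $t=1$, and then Lemma~\ref{lemma_for_D(B)} gives merely
\[
t(r+\delta-1)-\Bigl|\bigcup_{R\in\cV}R\Bigr|\ \geq\ \min\{r+\delta-1,\lfloor 1/2\rfloor\}=0,
\]
so you cannot extract any positive $\Delta$, and Lemma~\ref{lemma_find_V_1} does not apply. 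The paper closes exactly this gap by treating $k=2r$ separately: it picks two repair sets $R,R'\in\cR$ that either intersect or have $\min(|R|,|R'|)<r+\delta-1$, shows directly (via a short case analysis on $|R\cap R'|$) that $\rank(R\cup R')<|R\cup R'|-(\delta-1)$, then extends $R\cup R'$ to a set $S$ of rank $k-1$ with $|S|\geq k+\delta-1$, again contradicting optimality via Lemma~\ref{lemma_rank_and_dist}. You need to add this (or an equivalent) argument to cover $k=2r$.
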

\begin{proof}
  Let $\cR\subseteq\Gamma$ be the ECF obtained in Lemma
  \ref{lemma_initialization}. If $D(\cR)=0$ and $\abs{R}=r+\delta-1$
  for all $R\in \cR$, then set $\cS=\cR$ the theorem follows.

  Otherwise, we have $D(\cR)\ne 0$ or $|R|<r+\delta-1$ for some $R\in
  \cR$. We distinguish between two cases. First, assume
  $k>2r$. By Lemma \ref{lemma_initialization},
  we know that $|\cR|\geq \lceil k/r\rceil$.
  According to Lemma \ref{lemma_for_D(B)} we can find a
  $(\lceil\frac{k}{r}\rceil-1)$-subset $\cV\subseteq\cR$ satisfying
\begin{equation*}
|\cV|(r+\delta-1)-\left|\bigcup_{R\in \cV}R\right|\geq \Delta=\min \left\{r+\delta-1,\left\lfloor\frac{\lceil\frac{k}{r}\rceil-1}{2}\right\rfloor\right\}
>0.
\end{equation*}
Since $u\geq 2(r-v+1)$ or $v=0$, we have
$\lceil\frac{k+\Delta}{r}\rceil>\lceil\frac{k}{r}\rceil$.  Therefore,
by Lemma \ref{lemma_find_V_1}, there is a set $S\subseteq[n]$ with
$\rank(S)=k-1$ and \begin{equation*} |S|\geq
  k+\left(\left\lceil\frac{k}{r}\right\rceil-1\right)(\delta-1).
\end{equation*}
Thus, by Lemma \ref{lemma_rank_and_dist}
\[ d \leq n-\abs{S} \leq n-k-\left(\left\lceil\frac{k}{r}\right\rceil-1\right)(\delta-1).\]
This is a contradiction to the optimality of $\cC$ with respect to the
bound in Lemma \ref{lemma_bound_i}.

In the second case, $r<k\leq 2r$. We note that we only need to
consider the case $v=0$, namely, $k=2r$, since if $v\ne 0$ then the
condition $u\geq 2(r-v+1)\geq 2$ implies that $k=ur+v>2r$.  We
therefore assume $k=2r$. If $D(\cR)\ne0$ or $|R|<r+\delta-1$ for some
$R\in \cR$ then we can find two distinct repair sets $R,R'\in \cR$
such that $R\cap R'\ne \emptyset$ or $\min(|R|,|R'|)<r+\delta-1$. In
either case, we have $\rank(R\cup R')<2r=k$.

We again distinguish between two cases depending on $|R\cap R'|$. For
the first case, if $|R\cap R'|\leq \min(|R|,|R'|)-\delta+1$ then we
have $\rank(R\cup R')\leq |R\cup R'|-2(\delta-1)<|R\cup R'|-\delta+1$.
In the second case, when $|R\cap R'|> \min(|R|,|R'|)-\delta+1$, assume
without loss of generality, that $|R\cap R'|> |R'|-\delta+1$, then
$\rank(R\cup R')=\rank(R)\leq |R|-\delta+1<|R\cup R'|-\delta+1.$

We now construct a set $S\subseteq[n]$ by arbitrarily adding
coordinates to $R\cup R'\subseteq S$ such that
$\rank(S)=k-1$. Therefore, $|S|-(k-1)\geq |R\cup R'|-\rank(R\cup
R')>\delta-1$, or equivalently, $\abs{S}\geq k+\delta-1$. Again by Lemma \ref{lemma_rank_and_dist},
we get
\[ d\leq n-\abs{S}\leq n-k-(\delta-1),\]
which is again a contradiction with the optimality of $\cC$ with
respect to the bound in Lemma \ref{lemma_bound_i}.
\end{proof}

We take a short break to consider the special case of $\delta=2$. This
special case was studied in \cite{GXY} and an upper bound on the
length of optimal codes was obtained.
\begin{theorem}[\cite{GXY}]\label{theorem_bound_old}
Let $\cC$ be an optimal $[n,k,d]_q$ code with all symbol $(r,2)$-locality.
If $d\geq 5$, $(r+1)|n$, and
$$\frac{n}{r+1}\geq \left(d-2-\left\lfloor\frac{d-2}{r+1}\right\rfloor\right)(3r+2)+\left\lfloor\frac{d-2}{r+1}\right\rfloor+1,$$
then
\begin{equation*}
\begin{split}
n\leq &\begin{cases}
\frac{(d-a)(r+1)}{4(q-1)r} q^{\frac{4(d-2)}{d-a}},& \text{if }a=1,2,\\
\frac{r+1}{r}\left(\frac{d-a}{4(q-1)} q^{\frac{4(d-3)}{d-a}}+1\right),& \text{if }a=3,4,\\
\end{cases}\\
=&\begin{cases}
O\parenv{dq^{\frac{4(d-2)}{d-a}-1}},& \text{if }a=1,2,\\
O\parenv{dq^{\frac{4(d-3)}{d-a}-1}},& \text{if }a=3,4,\\
\end{cases}
\end{split}
\end{equation*}
where $a\equiv d\pmod{4}$.
\end{theorem}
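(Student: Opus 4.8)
The plan is to follow the route of \cite{GXY}, in three stages: a structural reduction, a packing bound on the number of local groups, and an arithmetic clean-up; the tools above (Theorem~\ref{theorem_repair_sets} and Lemmas~\ref{lemma_bound_i} and \ref{lemma_rank_and_dist}) already supply the first stage in close to the form needed. First I would reduce to a transparent configuration: by a structural argument in the spirit of the $\delta=2$ case of Theorem~\ref{theorem_repair_sets} --- the hypothesis that $n/(r+1)$ is at least the stated bound is exactly what forces $k>r$ and rules out the degenerate configurations, so the argument carries over --- we may assume that $[n]$ is partitioned into $T=n/(r+1)$ pairwise disjoint $(r,2)$-repair sets $R_1,\dots,R_T$, each of size $r+1$ and each carrying an $[r+1,r,2]_q$ local code. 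Thus the $r+1$ columns ${\bf g}_j$, $j\in R_i$, satisfy a single linear relation of full support and span an $r$-dimensional subspace $V_i=\spn\{{\bf g}_j: j\in R_i\}$ of $\F_q^k$. Writing $k=ru+v$, the equality case of Lemma~\ref{lemma_bound_i} pins down the global redundancy: unwinding it gives $T-u\in\{\lfloor(d-2)/(r+1)\rfloor,\lfloor(d-2)/(r+1)\rfloor+1\}$, and applying Lemma~\ref{lemma_rank_and_dist} to unions of repair sets shows that any $u$ of the groups already span $\F_q^k$, i.e.\ no hyperplane of $\F_q^k$ contains $u$ or more of the subspaces $V_i$. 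Crucially, this ``resilience'' condition is far weaker than an MDS-type condition --- only a bounded number $T-u$ of groups may be deleted --- which is precisely why the eventual bound on $T$ is polynomial in $q$ rather than linear.

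The core of the proof is to convert this resilience, together with the full-support local relations, into the bound $T=O(q^{4(d-2)/(d-a)-1})$ (respectively $O(q^{4(d-3)/(d-a)-1})$ when $a\in\{3,4\}$). Here I would set up an inductive scheme that relates the bound for distance $d$ to the bounds for smaller distances, by shortening the code on carefully chosen repair sets so as to preserve optimality while decreasing the number of groups; the base case is a direct packing estimate. For the base case one attaches to each remaining group a bounded ``signature'' --- an element of a set of size $\Theta(q^{c})$ built from the local parity coefficients together with the incidence of $V_i$ with a fixed generic flag in $\F_q^k$ --- and uses the resilience to show that no signature is shared by too many groups, whence a second-moment (Sidon-type) count bounds the number of groups by $O(q^{c})$. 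Balancing the number of inductive steps against the exponent $c$ available in the base case produces the ratio $4(d-2)/(d-a)$; the case distinction on $a\equiv d\pmod 4$ reflects the parity of $d$ together with whether one further coordinate can be shortened without destroying optimality, which is what shifts the numerator from $d-2$ to $d-3$.

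The main obstacle is exactly this counting step, in both of its parts. For the induction, the subtle point is that shortening an arbitrary repair set need not preserve optimality --- Lemma~\ref{lemma_bound_i} must remain tight at every stage --- so one must choose which group to remove while tracking its interaction with the global parity checks; for the base case, one must design the signature map so that its range has size exactly $q^{c}$ with the optimal $c$ and so that resilience genuinely forbids heavy fibres of that map. Once both ingredients are in hand, substituting $n=(r+1)T$ and carrying the constants through the induction produces the stated closed-form bounds, and extracting the dominant power of $q$ gives the advertised $O(dq^{4(d-2)/(d-a)-1})$ and $O(dq^{4(d-3)/(d-a)-1})$ estimates. Since the statement is quoted from \cite{GXY}, the delicate packing bound is carried out in full there.
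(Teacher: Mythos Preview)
The paper does not prove Theorem~\ref{theorem_bound_old}; it is quoted verbatim from \cite{GXY}. However, the paper's own Theorem~\ref{theorem_bound_delta>2} is stated to ``follow the same path as the proof of \cite[Theorem~3.2]{GXY}'' and is written out in full, so that proof is the correct point of comparison.

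Your first stage (reduction to a partition of $[n]$ into $T=n/(r+1)$ disjoint $[r+1,r,2]$ local groups) matches both \cite{GXY} and the paper's Theorem~\ref{theorem_repair_sets}. Your second stage, however, is not what \cite{GXY} does. There is no inductive shortening on $d$, no ``signature'' map built from local parity coefficients and flag incidences, and no Sidon-type second-moment count. The actual argument is a single direct reduction followed by a single classical bound: once the parity-check matrix has the block form with one all-ones local row per group (equation~\eqref{eqn_matrix_p_1} in the paper), one subtracts within each group the first column from the remaining $r$ columns (equation~\eqref{eqn_matrix_p_2}). This yields an ordinary $[wr,k]_q$ code $\cC_2$ with no locality, and a short lifting argument shows its distance is at least about $d/2$. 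One then applies the \emph{Hamming (sphere-packing) bound} to $\cC_2$, exactly once. The exponent $4(d-2)/(d-a)$ falls out because the packing radius is $\lfloor (d_2-1)/2\rfloor \approx d/4$, and the four residues $a\in\{1,2,3,4\}$ simply track the two successive floors in $d\mapsto d_2\mapsto \lfloor(d_2-1)/2\rfloor$.

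So the gap is concrete: your proposal replaces a one-shot column-difference-plus-Hamming-bound argument with an unspecified inductive packing scheme, and the mechanism you describe for producing the exponent (``balancing the number of inductive steps against the exponent $c$'') does not correspond to anything in the proof. The ``resilience'' observation you make (that no hyperplane contains $u$ of the $V_i$) is correct but is not the lever actually used.
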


While we obtain the exact same
bound as \cite{GXY}, our bound is an improvement since it has more
relaxed conditions. In particular, the bound of Theorem \ref{theorem_bound_old} requires
$\frac{n}{r+1}\geq (d-2-\left\lfloor\frac{d-2}{r+1}\right\rfloor)(3r+2)+\lfloor\frac{d-2}{r+1}\rfloor+1,$
i.e., $k=\Omega(dr^2)$ whereas we require $k=\Omega(r^2)$. We now provide
the exact claim:

\begin{corollary}\label{corollary_bound_2}
Let $\cC$ be an optimal $[n,k,d]_q$ code with all symbol $(r,2)$-locality.
If $d\geq 5$, $k>r$, $(r+1)|n$, and additionally, $r|k$ or $u\geq 2(r+1-v)$
 (equivalently, $k\geq 2r^2+2r-(2r-1)\langle k \rangle_r$), then
\begin{equation*}
\begin{split}
n\leq &\begin{cases}
\frac{(d-a)(r+1)}{4(q-1)r} q^{\frac{4(d-2)}{d-a}},& \text{if }a=1,2,\\
\frac{r+1}{r}\left(\frac{d-a}{4(q-1)} q^{\frac{4(d-3)}{d-a}}+1\right),& \text{if }a=3,4,\\
\end{cases}\\
=&\begin{cases}
O\parenv{dq^{\frac{4(d-2)}{d-a}-1}},& \text{if }a=1,2,\\
O\parenv{dq^{\frac{4(d-3)}{d-a}-1}},& \text{if }a=3,4,\\
\end{cases}
\end{split}
\end{equation*}
where $a\equiv d\pmod{4}$ and $\langle k \rangle_r$ denotes the least nonnegative integer congruent to $k$ modulo $r$.
\end{corollary}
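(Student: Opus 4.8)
The plan is to reduce the corollary to Theorem \ref{theorem_bound_old} of \cite{GXY} by first extracting, under the weaker hypotheses stated here, exactly the combinatorial structure on which the proof in \cite{GXY} relies. Since $\delta=2$ gives $r+\delta-1=r+1$, the hypotheses ``$k>r$, $(r+1)\mid n$, and $r\mid k$ or $u\geq 2(r+1-v)$'' are precisely the hypotheses of Theorem \ref{theorem_repair_sets} specialized to $\delta=2$. Thus Theorem \ref{theorem_repair_sets} applies to the optimal code $\cC$ and yields a collection $\cS$ of $(r,2)$-repair sets, each of size exactly $r+1$, that partitions $[n]$; in particular $\cC$ has $n/(r+1)$ pairwise disjoint repair sets, each of rank at most $r$, covering all coordinates.

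Next I would revisit the proof of Theorem \ref{theorem_bound_old} in \cite{GXY}. There the stronger length condition $\tfrac{n}{r+1}\geq (d-2-\lfloor\tfrac{d-2}{r+1}\rfloor)(3r+2)+\lfloor\tfrac{d-2}{r+1}\rfloor+1$ is used solely to force the existence of such a size-$(r+1)$ repair-set partition for an optimal all-symbol $(r,2)$-locality code; the remainder of the argument---an iterated shortening/puncturing of $\cC$ along whole repair sets, followed by a counting estimate on the resulting code that produces the displayed inequality---uses only that partition together with $d\geq 5$ and the optimality of $\cC$. Having obtained the partition from Theorem \ref{theorem_repair_sets} instead, I would run that same argument verbatim and arrive at the identical upper bound on $n$, now valid under the relaxed hypotheses of the corollary. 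The elementary equivalence between ``$u\geq 2(r+1-v)$'' and ``$k\geq 2r^2+2r-(2r-1)\langle k\rangle_r$'' is just the substitution $k=ru+v$, $v=\langle k\rangle_r$.

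The step I expect to be the crux is the audit in the previous paragraph: one must go through \cite{GXY}'s proof line by line and confirm that every appeal to their length condition is in service of producing the repair-set partition (ruling out overlapping or undersized repair sets, much as in the $k\leq 2r$ case treated inside the proof of Theorem \ref{theorem_repair_sets}), with no hidden dependence on the larger value of $n$. Once this is verified, nothing further is needed; the passage from the partition to the numerical bound is exactly as in \cite{GXY} and would not be reproduced in full.
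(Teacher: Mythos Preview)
Your proposal is correct and matches the paper's own proof essentially verbatim: the paper simply states that the result follows by replacing \cite[Theorem 3.1]{GXY} with Theorem \ref{theorem_repair_sets} and then continuing with the same proof as \cite[Theorem 3.2]{GXY}. Your observation that the stronger length hypothesis in \cite{GXY} is used only to produce the size-$(r+1)$ repair-set partition is exactly the point, and the algebraic equivalence you note for the condition on $u$ is correct.
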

\begin{proof}
The desired result directly follows by
  replacing \cite[Theorem 3.1]{GXY} with Theorem
  \ref{theorem_repair_sets}, and continuing with the same proof as
  \cite[Theorem 3.2]{GXY}.
\end{proof}

We bring another corollary that stems from Theorem
\ref{theorem_repair_sets}.  It slightly extends \cite[Theorem 9]{SDYL}, originally proved only for $r|k$, and has a very similar
proof which we give for completeness.

\begin{corollary}\label{corollary_delta>2}
Let $\cC$ be an optimal $[n,k,d]_q$ linear code with all symbol
$(r,\delta)$-locality, where optimality is with respect to the bound
in Lemma \ref{lemma_bound_i}. If $k>r$, $n=w(r+\delta-1)$, and
additionally $r|k$ or $u\geq 2(r+1-v)$, then there are
$w$ pairwise-disjoint $(r,\delta)$-repair sets,
$R_1,\dots,R_w\subseteq[n]$, such that for all $1\leq i\leq w$,
$\abs{R_i}=r+\delta-1$, and the punctured code $\cC|_{R_i}$ is a
linear $[r+\delta-1,r,\delta]_q$ MDS code.
\end{corollary}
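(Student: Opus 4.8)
The plan is to invoke Theorem~\ref{theorem_repair_sets} to get the combinatorial partition, and then upgrade each block of that partition from a mere $(r,\delta)$-repair set to an MDS-structured punctured code by exploiting the optimality of $\cC$ once more. Concretely, since $k>r$, $(r+\delta-1)\mid n$, and $r\mid k$ or $u\geq 2(r+1-v)$, all the hypotheses of Theorem~\ref{theorem_repair_sets} are met, so there exists a set $\cS\subseteq\Gamma$ of $(r,\delta)$-repair sets, each of cardinality exactly $r+\delta-1$, forming a partition of $[n]$. Write $\cS=\{R_1,\dots,R_w\}$, noting $|\cS|=n/(r+\delta-1)=w$ because the blocks are disjoint and of size $r+\delta-1$. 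This already gives everything except the claim that each $\cC|_{R_i}$ is an $[r+\delta-1,r,\delta]_q$ MDS code.

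For the MDS claim, I would first observe that, by the defining property of an $(r,\delta)$-repair set, the punctured code $\cC|_{R_i}$ has length $r+\delta-1$ and minimum distance at least $\delta$, hence dimension at most $(r+\delta-1)-(\delta-1)=r$ by the Singleton bound; so it suffices to show $\rank(R_i)=r$ for every $i$ and that $d(\cC|_{R_i})$ is exactly $\delta$ (the latter then forces MDS, since a length-$(r+\delta-1)$ code of dimension $r$ and distance $\geq\delta$ meets Singleton with equality). The key is a counting/contradiction argument: suppose $\rank(R_{i_0})\leq r-1$ for some $i_0$. Then one can build a large low-rank coordinate set $S$ as follows — take $R_{i_0}$ together with enough of the remaining blocks to push the rank up to exactly $k-1$. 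Each full block $R_j$ ($j\neq i_0$) added contributes at most $r$ to the rank while contributing $r+\delta-1$ to $|S|$, i.e. a ``rank deficit'' of at least $\delta-1$ per block; combined with the deficit of at least $1$ coming from $R_{i_0}$ itself (since $\rank(R_{i_0})\leq|R_{i_0}|-\delta$), a careful bookkeeping shows $|S|-(k-1)\geq(\lceil k/r\rceil-1)(\delta-1)+1$, i.e. $|S|\geq k+(\lceil k/r\rceil-1)(\delta-1)$. Then Lemma~\ref{lemma_rank_and_dist} yields $d\leq n-|S|\leq n-k-(\lceil k/r\rceil-1)(\delta-1)$, contradicting optimality (here $\lceil k/r\rceil=u$ when $r\mid k$, and the condition $u\geq 2(r+1-v)$ handles the ceiling carefully in the other case, exactly as in the proof of Theorem~\ref{theorem_repair_sets}). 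Hence $\rank(R_i)=r$ for all $i$. An entirely analogous argument — replacing one block by a set $S'$ with $\rank(S')=k-1$ that ``wastes'' an extra coordinate because $d(\cC|_{R_{i_0}})>\delta$ would let us drop a coordinate of $R_{i_0}$ while keeping its rank — shows $d(\cC|_{R_i})=\delta$ for every $i$.

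The main obstacle I anticipate is the bookkeeping in the rank-versus-size count when $r\nmid k$: one must choose which blocks to adjoin so that the running rank lands exactly on $k-1$ rather than overshooting, and then verify that the accumulated deficit is still at least $(\lceil k/r\rceil-1)(\delta-1)+1$. This is precisely the point where the technical hypothesis $u\geq 2(r+1-v)$ is used — it guarantees, as in Lemma~\ref{lemma_find_V_1} and Theorem~\ref{theorem_repair_sets}, that a surplus of rank of size up to $r+\delta-1$ added to $k$ still increases $\lceil\,\cdot/r\rceil$, so the ceiling in the distance bound behaves as needed. Since the statement explicitly invites us to follow \cite[Theorem 9]{SDYL}, I would lift that argument verbatim for the $r\mid k$ case and then patch in the ceiling estimate from Theorem~\ref{theorem_repair_sets} to cover $u\geq 2(r+1-v)$; no genuinely new idea beyond Theorem~\ref{theorem_repair_sets} should be required.
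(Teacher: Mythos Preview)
Your approach is essentially the paper's: invoke Theorem~\ref{theorem_repair_sets} for the partition, then contradict optimality if some $\cC|_{R_i}$ fails to be MDS by assembling a large set $S$ of rank $<k$. One clarification and one simplification are in order, though.

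First, the two sub-claims you separate (``$\rank(R_i)=r$'' and ``$d(\cC|_{R_i})=\delta$'') are really a single claim: since each $R_i$ is an $(r,\delta)$-repair set we already have $d(\cC|_{R_i})\geq\delta$, and by Singleton $\rank(R_i)\leq r+\delta-1-d(\cC|_{R_i})+1\leq r$; hence $\cC|_{R_i}$ is MDS iff $\rank(R_i)=r$. So you only need the rank argument.

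Second, the bookkeeping you worry about is not needed, and in particular the hypothesis $u\geq 2(r+1-v)$ plays no further role once the partition is in hand. The paper simply takes
\[
S=R'\cup\bigcup_{1\leq i\leq \lceil k/r\rceil-1}R_i,
\]
where $R'\subseteq R_{\lceil k/r\rceil}$ has size $v$ (or $r$ if $v=0$) and the ``bad'' block $R_{i_0}$ is placed among the first $\lceil k/r\rceil-1$ blocks. Then $\rank(S)\leq |R'|+\sum_i\rank(R_i)<k$ directly (no need to hit $k-1$ exactly), while $|S|=k+(\lceil k/r\rceil-1)(\delta-1)$ by straightforward arithmetic valid for all $v$. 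Lemma~\ref{lemma_rank_and_dist} then gives the contradiction without any ceiling gymnastics.
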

\begin{proof}
  We contend that the repair sets, $\cS$, from Theorem
  \ref{theorem_repair_sets}, satisfy the requirements. Thus, it
  remains to prove that for each $\cC|_{R}$, $R\in\cS$, the Hamming
  distance is exactly $\delta$.
  Assume to the contrary, and without
  loss of generality, that $d(\cC|_{R_1})>\delta$.

  Note that $\bigcup_{1\leq i\leq w}R_i=[n]$
  means $\rank(\bigcup_{1\leq i\leq w}R_i)=k$ and then
  $w=\frac{n}{r+\delta-1}\geq \lceil\frac{k}{r}\rceil$ since
  $\rank(R_i)\leq r$ for $1\leq i\leq w$. Also recall our notation that $v\equiv k\bmod
  r$ and $0\leq v<r$.
  Fix some arbitrary set $R'\subseteq R_{\lceil\frac{k}{r}\rceil}$,
  with $\abs{R'}=v$ if $v\ne 0$, and $\abs{R'}=r$ if
  $v=0$. Consider now the set
  $$S=R'\cup\parenv{\bigcup_{1\leq i\leq \lceil\frac{k}{r}\rceil-1}R_{i}}.$$
  By the Singleton bound we have,
\begin{equation*}
\begin{split}
\rank(S)&\leq\rank(R')+\sum_{1\leq i\leq \lceil\frac{k}{r}\rceil-1}\rank(R_{i})\\
&\leq \begin{cases}
v+\sum_{1\leq i\leq \lceil\frac{k}{r}\rceil-1}(r+\delta-1-d(\cC|_{R_{i}})+1)< v+r(\lceil\frac{k}{r}\rceil-1)=k, & \text{ if $v\ne 0$,}\\
r+\sum_{1\leq i\leq \lceil\frac{k}{r}\rceil-1}(r+\delta-1-d(\cC|_{R_{i}})+1)< r+r(\lceil\frac{k}{r}\rceil-1)=k, & \text{ if $v=0$.}
\end{cases}
\end{split}
\end{equation*}
We also have
\begin{align*}
  |S| &= \begin{cases}
    v+(r+\delta-1)\parenv{\ceilenv{\frac{k}{r}}-1}, & \text{if $v\ne 0$,}\\
    r+(r+\delta-1)\parenv{\ceilenv{\frac{k}{r}}-1}, & \text{if $v=0$,}
  \end{cases}\\
    &=k+\parenv{\ceilenv{\frac{k}{r}}-1}(\delta-1).
\end{align*}
But now this contradicts the optimality of $\cC$ by Lemma \ref{lemma_rank_and_dist}.
\end{proof}

We now extend our scope and consider locally repairable codes for the
case of $\delta>2$. In the sequel, the discussion is based on the
structure of the repair sets given in Corollary \ref{corollary_delta>2}.

\begin{lemma}\label{lemma_bound_delta>2}
Let $n=w(r+\delta-1)$, $\delta>2$, $k=ur+v>r$, and additionally,
$r|k$ or $u\geq 2(r+1-v)$, where all parameters are integers. If there
exists an optimal $[n,k,d]_q$ linear code $\cC$ with all symbol
$(r,\delta)$-locality, then there exists a $[w(r+1),k,d']_q$ linear
code $\cC'$ with all symbol $(r,2)$-locality (i.e., locality $r$),
and $d'\geq 2\floorenv{(d-1)/\delta}+1$.
\end{lemma}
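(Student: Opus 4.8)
The plan is to use the structure supplied by Corollary~\ref{corollary_delta>2}: under the stated hypotheses, $[n]$ partitions into $w$ pairwise-disjoint repair sets $R_1,\dots,R_w$, each of size $r+\delta-1$, with $\cC|_{R_i}$ an $[r+\delta-1,r,\delta]_q$ MDS code. The idea is to ``shorten'' each block from length $r+\delta-1$ down to length $r+1$ while keeping locality $r$ (i.e.\ $(r,2)$-locality) and controlling the loss in minimum distance. Concretely, for each $i$ I would fix inside $R_i$ a sub-block $R_i'\subseteq R_i$ of size $r+1$, and let $\cC'$ be the code obtained from $\cC$ by puncturing on $[n]\setminus\bigcup_i R_i'$, so $\cC'$ has length $w(r+1)$. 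Since $\cC|_{R_i}$ is MDS of dimension $r$, any $r+1$ of its coordinates still form an $[r+1,r,2]_q$ MDS code, so $\cC'|_{R_i'}$ has minimum distance $2$; hence every coordinate of $\cC'$ has a repair set of size $r+1$ inside its own $R_i'$, giving $\cC'$ all-symbol locality $r$. One must also check $\dim\cC'=k$: puncturing cannot increase dimension, and the lower bound $\dim\cC'\geq k$ should follow because the punctured-out coordinates lie in repair sets whose rank is already $\le r$, so removing the ``extra'' $\delta-2$ coordinates from each block does not drop the rank of the whole support below $k$ — this is essentially the same rank bookkeeping used in Corollary~\ref{corollary_delta>2}.

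The distance bound $d'\ge 2\floorenv{(d-1)/\delta}+1$ is the substantive part. I would argue via Lemma~\ref{lemma_rank_and_dist} applied to $\cC'$: I need that every $S'\subseteq \bigcup_i R_i'$ with $\rank(S')\le k-1$ satisfies $|S'|\le w(r+1)-d'$. Given such an $S'$, ``inflate'' it to $S = S'\cup\parenv{\bigcup_{i:\,S'\cap R_i'\ne\emptyset}R_i}$, i.e.\ whenever $S'$ meets a block, throw in the whole block $R_i$. Because each $\cC|_{R_i}$ is MDS of dimension $r$, adding the $\le \delta-1$ extra coordinates of a met block does not change its rank contribution (it is already $\min(|S'\cap R_i|,r)\le r$ versus $r$ for the full block) — so $\rank(S)\le \rank(S') + (\text{number of blocks meeting }S'\text{ with }|S'\cap R_i'|<\text{full rank contribution})$, and with a little care $\rank(S)\le k-1+(\text{small slack})$. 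Then optimality of $\cC$ and Lemma~\ref{lemma_rank_and_dist} bound $|S|\le n-d$, and since $|S|=|S'| + \sum (\delta-2)$ over met blocks (the $R_i\setminus R_i'$ parts, each of size $\delta-2$), one recovers an upper bound on $|S'|$ of the form $|S'|\le w(r+1)-d'$ for the claimed $d'$. Getting the arithmetic to land exactly on $2\floorenv{(d-1)/\delta}+1$ will require tracking how many blocks $S'$ meets versus how much rank it consumes, using $\rank(R_i)=r$ inside each block.

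The main obstacle I anticipate is precisely this combinatorial balancing in the distance estimate: controlling $\rank(S)$ relative to $\rank(S')$ when $S'$ partially occupies several blocks, and ensuring the inflation step is tight enough that the bound $d'\ge 2\floorenv{(d-1)/\delta}+1$ comes out rather than something weaker. The factor $2$ and the floor suggest grouping blocks in pairs (as in the $\delta=2$ analysis underlying Theorem~\ref{theorem_bound_old}), so I would likely phrase the estimate block-by-block: each fully-contained block $R_i$ contributes $r$ to the rank and $r+\delta-1$ to $|S|$, hence $r+1$ to $|S'|$, while the ``deficiency'' $\delta-1$ per block is what $d$ exploits in $\cC$; translating each unit of $\cC$-distance into $\approx 1/\delta$ units of $\cC'$-distance, doubled by the pairing argument, is what yields $2\floorenv{(d-1)/\delta}$. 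A secondary nuisance is the boundary case where $S'$ meets a block in fewer than $r$ coordinates of full rank (so $\rank$ of that partial block is $<r$), which must be handled so as not to lose the tightness — but this only helps, since a rank-deficient partial block frees up distance budget.
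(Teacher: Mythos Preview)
Your construction of $\cC'$ as the puncture of $\cC$ to $r+1$ coordinates per block is correct, and in fact simpler than the paper's route (the paper first applies column transformations to the parity-check matrix, then deletes $\delta-2$ columns and $\delta-2$ local rows per block). Your dimension and locality arguments are fine: no nonzero codeword of $\cC$ can be supported on the $\delta-2$ deleted coordinates of each block (since $\cC|_{R_i}$ has distance $\delta$), so $\dim\cC'=k$; and $\cC|_{R_i'}$ is an $[r+1,r,2]_q$ MDS code, giving all-symbol $(r,2)$-locality.

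The distance argument, however, has a genuine gap. Your inflation $S = S'\cup\bigcup_{i:\,S'\cap R_i'\ne\emptyset}R_i$ does \emph{not} preserve $\rank\le k-1$. If $S'$ meets a block $R_i'$ in only $a<r$ coordinates, then $\rank(S'\cap R_i')=a$ but $\rank(R_i)=r$, so adding $R_i$ can raise the rank by $r-a$. In the extremal case (a maximal rank-deficient $S'$, of size $w(r+1)-d'$), every block is met by $S'$ once $d'<r+1$, so your $S=[n]$ and $\rank(S)=k$; Lemma~\ref{lemma_rank_and_dist} for $\cC$ then says nothing.

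The repair is to work with the complement of $S'$, i.e.\ with codewords. Let $c'\in\cC'$ be nonzero and lift it (uniquely, since $\dim\cC'=k$) to $c\in\cC$. For each block with $c|_{R_i}\ne 0$ one has $\mathrm{wt}(c|_{R_i})\ge\delta$ by the MDS property, hence $\mathrm{wt}(c'|_{R_i'})\ge\delta-(\delta-2)=2$. Thus if $\mathrm{wt}(c')\le 2t$ with $t=\lfloor(d-1)/\delta\rfloor$, then $c$ is supported on at most $t$ blocks, and
\[
\mathrm{wt}(c)\le \mathrm{wt}(c')+t(\delta-2)\le 2t+t(\delta-2)=t\delta\le d-1,
\]
forcing $c=0$. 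Hence $d'\ge 2t+1$. No ``pairing'' is needed; the factor $2$ is exactly $\delta-(\delta-2)$. This is precisely the paper's argument phrased on the codeword side: the paper shows, dually, that any $2t$ columns of the parity-check matrix $P'$ hit at most $t$ blocks with $\ge 2$ columns each, so after restoring the $\delta-2$ deleted columns per such block one has at most $t\delta\le d-1$ columns of $P$, which are linearly independent since $\cC$ has distance $d$.
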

\begin{proof}
By Corollary \ref{corollary_delta>2}, and up to a rearrangement of the
code coordinates, the code $\cC$ has parity-check matrix $P$ of the
following form,
\begin{equation*}
P=\begin{pmatrix}
      L^{(1)} & 0 & 0 & \dots & 0 \\
      0 &  L^{(2)} & 0 & \dots & 0 \\
      0 & 0 & L^{(3)} & \dots & 0 \\
      \vdots & \vdots & \vdots & \ddots & \vdots \\
      0 & 0 & 0 & 0 & L^{(w)} \\
      H_1 & H_2 & H_3 & \dots & H_w \\
    \end{pmatrix},
\end{equation*}
where $L^{(i)}=(I_{\delta-1}, P_i)$ is a $(\delta-1)\times
(r+\delta-1)$ matrix for all $1\leq i\leq w$. Herein, without loss of generality, we
assume $L^{(i)}$ with canonical form for $1\leq i\leq w$. For all $1\leq i\leq w$, rewrite the
$(\delta-1)\times (r+\delta-1)$ matrix $L^{(i)}=(I_{\delta-1} P_i)$ as
\[L^{(i)}=
\begin{pmatrix}
  L^{(i)}_{1,1} & L^{(i)}_{1,2} \\
  L^{(i)}_{2,1} & L^{(i)}_{2,2} \\
\end{pmatrix},\]
where $L^{(i)}_{2,2}$ is a $(\delta-2)\times (\delta-2)$ matrix. It is
easy to check that $\det(L^{(i)}_{2,2})\ne 0$ for all $1\leq i\leq w$,
since $L^{(i)}$ is a parity-check matrix of an
$[r+\delta-1,r,\delta]_q$ MDS code according to Corollary
\ref{corollary_delta>2}.  By column linear transformations, the matrix
$P$ is equivalent to
\begin{equation}\label{eqn_d_columns_P_trans}
\begin{pmatrix}
  Q_1 & 0 & 0 & \dots & 0 \\
  0 &  Q_2 & 0 & \dots & 0 \\
  0 & 0 & Q_3 & \dots & 0 \\
  \vdots & \vdots & \vdots & \ddots & \vdots \\
  0 & 0 & 0 & 0 & Q_w \\
  H'_1 & H'_2 & H'_3 & \dots & H'_w \\
\end{pmatrix},
\end{equation}
where
\begin{align}
\label{eqn_def_R}
Q_i&=\begin{pmatrix}
        Q_{i,1}=L^{(i)}_{1,1}-L^{(i)}_{1,2}(L^{(i)}_{2,2})^{-1}L^{(i)}_{2,1} & L^{(i)}_{1,2} \\
        0 & L^{(i)}_{2,2} \\
\end{pmatrix},\\
\label{eqn_def_H}
H'_i&=(H'_{i,1}=H_{i,1}-H_{i,2}(L^{(i)}_{2,2})^{-1}L^{(i)}_{2,1},H'_{i,2}=H_{i,2}) \text{ with } H_i=(H_{i,1},H_{i,2}).
\end{align}

Now consider the code $\cC'$ with parity-check matrix
\begin{equation}\label{eqn_matrix_p'}
P'=\begin{pmatrix}
Q_{1,1} & 0 & 0 & \dots & 0 \\
0 &  Q_{2,1} & 0 & \dots & 0 \\
0 & 0 & Q_{3,1} & \dots & 0 \\
\vdots & \vdots & \vdots & \ddots & \vdots \\
0 & 0 & 0 & 0 & Q_{w,1} \\
H_{1,1}' & H_{2,1}' & H_{3,1}' & \dots & H_{w,1}' \\
\end{pmatrix},
\end{equation}
where $Q_{i,1}$ and $H'_{i,1}$, for $1\leq i\leq w$, are defined by
\eqref{eqn_def_R} and \eqref{eqn_def_H}, respectively.

Given a set of coordinates
$T=\mathset{t_1,\dots,t_\ell}\subseteq[r+\delta-1]$, and given
$A=(A_1,\dots,A_{r+\delta-1})$, we define the projection of $A$
onto $T$ by $\Delta_{T}(A)=(A_{t_1},A_{t_2},\dots,A_{t_l})$ (where the
order of coordinates in the projection will not matter to us). We
emphasize that $Q_{i,1}$, for all $1\leq i\leq w$, does not have a
zero coordinate, since according to Corollary \ref{corollary_delta>2},
$\Delta_{S_\tau}(Q_i)$ has full rank, where we define
$S_\tau=\{\tau\}\cup\{r+2,r+3,\dots,r+\delta-1\}$, $\tau\in[r+1]$.  Thus,
by \eqref{eqn_matrix_p'}, $\cC'$ is a code with all symbol
$(r,2)$-locality.

To complete the proof we only need to show $d'\geq 2t+1$, where we
define $t=\floorenv{(d-1)/\delta}$. Namely, we need to show that any
$2t$ columns of $P'$ are linearly independent. A selection of $2t$
columns from $P'$, denoted by $\cT'$, has the following general form,
\begin{equation*}
\Delta_{\cT'}(P')\triangleq\begin{pmatrix}
      \Delta_{T'_1}(Q_{1,1}) & 0 & 0 & \dots & 0 \\
      0 &  \Delta_{T'_2}(Q_{2,1}) & 0 & \dots & 0 \\
      0 & 0 & \Delta_{T'_3}(Q_{3,1}) & \dots & 0 \\
      \vdots & \vdots & \vdots & \ddots & \vdots \\
      0 & 0 & 0 & 0 & \Delta_{T'_w}(Q_{w,1}) \\
     \Delta_{T'_1}(H'_{1,1}) & \Delta_{T'_2}(H'_{2,1}) & \Delta_{T'_3}(H'_{3,1}) & \dots & \Delta_{T'_w}(H'_{w,1}) \\
  \end{pmatrix},
\end{equation*}
where $\sum_{1\leq i\leq w}|T'_i|=2t$. Since the locality of $\cC'$
guarantees recovery from any one erasure independently, the
non-trivial cases to consider are those where $T'_{\tau_i}\geq 2$ for
$1\leq \tau_i\leq w$ and $1\leq i\leq s$, where $s$ denotes the number
of sets $T'_i$ with $|T'_i|\geq 2$ and $s\leq \min(t,w)$.

With a coordinate selection $\cT'$ from $P'$ we naturally associate a
coordinate selection $\cT$ from $P$, defined by
\[
T_{\tau_i}=
T'_{\tau_i}\cup\{r+2,r+2,\dots, r+\delta-1\},
\]
for $1\leq i\leq s$, and with $\sum_{1\leq i\leq
  s}|T_{\tau_i}|=2t+s(\delta-2)\leq t\delta\leq d-1$. Recall that if
$\{r+2,r+3,\dots,r+\delta-1\}\subset T\subseteq [r+\delta-1]$ then
\eqref{eqn_d_columns_P_trans}, \eqref{eqn_def_R} and \eqref{eqn_def_H}
imply that
\[\begin{pmatrix}
\Delta_{T}(L^{(i)})\\
\Delta_{T}(H_i) \\
\end{pmatrix}
\qquad\text{and}\qquad
\begin{pmatrix}
  \Delta_{T}(Q_i)\\
  \Delta_{T}(H'_i) \\
\end{pmatrix}
\]
are rank equivalent, based on only invertible column linear
transformations for $1\leq i\leq w$.  Note that the distance of $\cC$
satisfies $d\geq \delta t + 1\geq 2t+s(\delta-2)+1$, which implies
that any $\sum_{1\leq i\leq s}|T_{\tau_i}|\leq 2t+s(\delta-2)$ columns
of $P$ have full rank of $\sum_{1\leq i\leq s}|T_{\tau_i}|$, i.e.,
\begin{equation}\label{eqn_rank_T}
\begin{split}
\sum_{1\leq i\leq s}|T_{\tau_i}|=&\rank\begin{pmatrix}
\Delta_{T_{\tau_1}}(L^{(\tau_1)}) & 0 & 0 & \dots & 0 \\
0 &  \Delta_{T_{\tau_2}}(L^{(\tau_2)}) & 0 & \dots & 0 \\
0 & 0 & \Delta_{T_{\tau_3}}(L^{(\tau_3)}) & \dots & 0 \\
\vdots & \vdots & \vdots & \ddots & \vdots \\
0 & 0 & 0 & 0 & \Delta_{T_{\tau_s}}(L^{(\tau_s)}) \\
\Delta_{T_{\tau_1}}(H_{\tau_1}) & \Delta_{T_{\tau_2}}(H_{\tau_2}) & \Delta_{T_{\tau_3}}(H_{\tau_3}) & \dots & \Delta_{T_{\tau_s}}(H_{\tau_s}) \\
\end{pmatrix}\\
=&\rank\begin{pmatrix}
\Delta_{T_{\tau_1}}(Q_{\tau_1}) & 0 & 0 & \dots & 0 \\
0 &  \Delta_{T_{\tau_2}}(Q_{\tau_2}) & 0 & \dots & 0 \\
0 & 0 & \Delta_{T_{\tau_3}}(Q_{\tau_3}) & \dots & 0 \\
\vdots & \vdots & \vdots & \ddots & \vdots \\
0 & 0 & 0 & 0 & \Delta_{T_{\tau_s}}(Q_{\tau_s}) \\
\Delta_{T_{\tau_1}}(H'_{\tau_1}) & \Delta_{T_{\tau_2}}(H'_{\tau_2}) & \Delta_{T_{\tau_3}}(H'_{\tau_3}) & \dots & \Delta_{T_{\tau_s}}(H'_{\tau_s}) \\
\end{pmatrix},
\end{split}
\end{equation}
where the second equality holds by \eqref{eqn_d_columns_P_trans},
\eqref{eqn_def_R}, \eqref{eqn_def_H} and the fact that
$\{r+2,r+3,\dots,r+\delta-1\}\subseteq T_{\tau_i}$ for $1\leq i\leq
s$.  Therefore, by \eqref{eqn_def_R}, \eqref{eqn_def_H}, and
\eqref{eqn_rank_T}, we have
 \begin{equation*}\label{eqn_delta_p'}
\rank\left(\Delta_{\cT'}(P')\right)=
\rank\begin{pmatrix}
\Delta_{T'_{\tau_i}}(Q_{\tau_1,1}) & 0 & 0 & \dots & 0 \\
0 &  \Delta_{T'_{\tau_2}}(Q_{\tau_2,1}) & 0 & \dots & 0 \\
0 & 0 & \Delta_{T'_{\tau_3}}(Q'_{\tau_3,1}) & \dots & 0 \\
\vdots & \vdots & \vdots & \dots & \vdots \\
0 & 0 & 0 & 0 & \Delta_{T'_{\tau_s}}(Q_{\tau_s,1}) \\
\Delta_{T'_{\tau_1}}(H'_{\tau_1,1}) & \Delta_{T'_{\tau_2}}(H'_{\tau_2,1}) & \Delta_{T'_{\tau_3}}(H'_{\tau_3,1}) & \dots & \Delta_{T'_{\tau_s}}(H'_{\tau_s,1}) \\
\end{pmatrix}
=\sum_{1\leq i\leq s}|T'_{\tau_i}|,
\end{equation*}
where $T'_{\tau_i}=T_{\tau_i}\setminus\{r+2,r+3,\dots,r+\delta-1\}$
for $1\leq i\leq s$.  This is to say, the code $\cC'$ can recover from any
$2t$ erasures, hence, $d'\geq 2t+1$.
\end{proof}

The following bound is derived from Lemma
\ref{lemma_bound_delta>2}. The proof follows the same path as the
proof of \cite[Theorem 3.2]{GXY}. We bring it here for completeness.

\begin{theorem}\label{theorem_bound_delta>2}
Let $n=w(r+\delta-1)$, $\delta\geq 2$, $k=ur+v$, and additionally,
$r|k$ or $u\geq 2(r+1-v)$, where all parameters are integers. Assume there
exists an optimal $[n,k,d]_q$ linear code $\cC$ with all symbol
$(r,\delta)$-locality, and define $t=\floorenv{(d-1)/\delta}$.  If
$2t+1>4$, then
\begin{align*}
n&\leq
\begin{cases}
\frac{(t-1)(r+\delta-1)}{2r(q-1)}q^{\frac{2(w-u)r-2v}{t-1}}, &\text{ if } t \text{ is odd},  \\
\frac{t(r+\delta-1)}{2r(q-1)}q^{\frac{2(w-u)r-2v}{t}}, &\text{ if } t \text{ is even},\\
\end{cases}\\
&=O\parenv{\frac{t(r+\delta)}{r}q^{\frac{(w-u)r-v}{\floorenv{t/2}}-1}},
\end{align*}
where $w-u$ can also be rewritten as $w-u=\lfloor(d-1+v)/(r+\delta-1)\rfloor$.
\end{theorem}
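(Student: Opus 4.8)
The plan is to reduce to the $\delta=2$ case, where a length bound is available from \cite{GXY}, and then transport it back. First, I would pass to a reduced code: if $\delta>2$, apply Lemma \ref{lemma_bound_delta>2} to the optimal code $\cC$; if $\delta=2$, take $\cC'=\cC$ and invoke Corollary \ref{corollary_delta>2} directly. In either case one obtains a $[w(r+1),k,d']_q$ linear code $\cC'$ with all symbol $(r,2)$-locality whose parity-check matrix $P'$ splits the $w(r+1)$ coordinates into $w$ repair groups of size $r+1$ --- each group carrying a single local parity row with no zero entry --- together with $N:=wr-k=(w-u)r-v$ further ``global'' parity rows; moreover $d'\geq 2t+1$, where $t=\floorenv{(d-1)/\delta}$. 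Since $2t+1>4$ we have $t\geq 2$ and $d'\geq 5$. As $n=w(r+\delta-1)$, bounding $n$ amounts to bounding the number $w$ of repair groups.

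Second, I would set up the geometry as in the proof of \cite[Theorem~3.2]{GXY}. For group $i$, divide its $r+1$ columns of $P'$ by their local-parity coefficients (making the common local part a fixed standard basis vector) and read off the resulting global parts as vectors $h^{(i)}_1,\dots,h^{(i)}_{r+1}\in\F_q^{N}$. Two facts drive the argument. (i) Any three columns of $P'$ inside one group are linearly independent, since $d'\geq 4$; translated through the column shape, this says no three of $h^{(i)}_1,\dots,h^{(i)}_{r+1}$ are affinely collinear, whence the $r$ directions $h^{(i)}_1-h^{(i)}_2,\dots,h^{(i)}_1-h^{(i)}_{r+1}$ span $r$ \emph{distinct} points of $\mathrm{PG}(N-1,q)$, and distinct groups give disjoint point sets (using $d'\geq 3$). (ii) After clearing the local scalars, an $\F_q$-linear dependence among $t$ of the difference vectors $h^{(i)}_a-h^{(i)}_b$ chosen from $t$ distinct groups would be a vanishing combination of $2t$ columns of $P'$; since $d'\geq 2t+1$, no such dependence exists, i.e.\ any $t$ such difference vectors from distinct groups are linearly independent.

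Third, I would run the counting of \cite[Theorem~3.2]{GXY} verbatim, with $2t+1$ in place of the distance $5$ used there. With $e=\floorenv{t/2}$, property (ii) makes the $\geq wr$ difference-points a configuration in $\mathrm{PG}(N-1,q)$ in which any $2e$ points drawn from distinct groups are in general position, and the double-counting of \cite{GXY} --- a Sidon-type estimate on sums of $e$ difference-points from distinct groups, with two regimes according to the parity of $t$ --- bounds $wr$, and hence $w$, by $\tfrac{\floorenv{t/2}}{r(q-1)}\,q^{N/\floorenv{t/2}}$. Multiplying by $r+\delta-1$ and rewriting $2\floorenv{t/2}$ as $t$ (even $t$) or $t-1$ (odd $t$) yields exactly the displayed bounds on $n=w(r+\delta-1)$, and the $O(\cdot)$ form follows by bounding $q-1$ below by a constant times $q$ and writing the exponent as $\tfrac{(w-u)r-v}{\floorenv{t/2}}-1$. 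Finally, the reformulation $w-u=\floorenv{(d-1+v)/(r+\delta-1)}$ is a one-line computation from the optimality equality $d=n-k+1-(\ceilenv{k/r}-1)(\delta-1)$: substituting $n=w(r+\delta-1)$ and $k=ur+v$ gives $(w-u)(r+\delta-1)=d-1+v$ when $v>0$ and $(w-u)(r+\delta-1)=d-\delta$ when $v=0$, and in both cases $(d-1+v)-(w-u)(r+\delta-1)$ lies in $[0,r+\delta-1)$.

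The step I expect to require the most care is the interface between Lemma \ref{lemma_bound_delta>2} and \cite{GXY}: one must confirm that the reduced code $\cC'$ really inherits an all-nonzero local parity on every group --- so that the per-column normalisation in the second step is legitimate --- and that the number of global parity rows is \emph{exactly} $N=wr-k$, since $N$ alone governs the exponent of $q$ in the final bound. Once the structure of $\cC'$ is matched to the hypotheses used in \cite[Theorem~3.2]{GXY}, what remains is a routine re-run of that proof with the single substitution $5\mapsto 2t+1$.
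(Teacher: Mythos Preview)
Your proposal is correct and follows essentially the same approach as the paper: reduce via Lemma~\ref{lemma_bound_delta>2} to an $(r,2)$-locality code $\cC_1$ with $d_1\ge 2t+1$, form the difference vectors, and apply a packing bound. The paper's presentation is slightly more direct than your projective-geometry/Sidon framing: it observes that the $wr$ difference columns are the parity-check matrix of an ordinary $[wr,k,d_2\ge t+1]_q$ code $\cC_2$ (the independence holds for \emph{any} $t$ columns, not only those drawn from distinct groups, since a dependence among $t$ differences always lifts to a dependence among at most $2t$ columns of $\cC_1$'s parity-check), and then applies the classical Hamming bound to $\cC_2$.
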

\begin{proof}
By Lemma \ref{lemma_bound_delta>2}, we have a $[w(r+1),k,d_1\geq
  2t+1]_q$ linear code with all symbol $(r,2)$-locality and
parity-check matrix given by \eqref{eqn_matrix_p'}. Equivalently,
there is a $[w(r+1),k,d_1\geq 2t+1]_q$ linear code $\cC_1$ with all symbol
$(r,2)$-locality and parity-check matrix given by
\begin{equation}\label{eqn_matrix_p_1}
M_1=\begin{pmatrix}
J & 0 & 0 & \dots & 0 \\
0 &  J & 0 & \dots & 0 \\
0 & 0 & J & \dots & 0 \\
\vdots & \vdots & \vdots & \ddots & \vdots \\
0 & 0 & 0 & 0 & J \\
H^{(1)}_{1} & H^{(1)}_{2} & H^{(1)}_{3} & \dots & H^{(1)}_{w} \\
\end{pmatrix},
\end{equation}
where $J$ is the $1\times (r+1)$ all-ones matrix.

Let us denote the columns of $H^{(1)}_i$ by
$H^{(1)}_{i}=(h^{(1)}_{i,1},h^{(1)}_{i,2},\dots,
h^{(1)}_{i,r+1})$. Based on $M_1$ we can generate a matrix $M_2$ defined by
\begin{equation}\label{eqn_matrix_p_2}
M_2=
\begin{pmatrix}
  H^{(1)}_{1,2} & H^{(1)}_{2,2} & H^{(1)}_{3,2} & \dots & H^{(1)}_{w,2} \\
\end{pmatrix},
\end{equation}
where
$H^{(1)}_{i,2}=(h^{(1)}_{i,2}-h^{(1)}_{i,1},h^{(1)}_{i,3}-h^{(1)}_{i,1},
\dots, h^{(1)}_{i,r+1}-h^{(1)}_{i,1})$. Since $M_1$ is the
parity-check matrix of a $[w(r+1),k,d_1\geq 2t+1]_q$ linear code with
all symbol $(r,2)$-locality, by \eqref{eqn_matrix_p_1} and
\eqref{eqn_matrix_p_2}, we have that $M_2$ is the parity-check matrix
of a linear code $\cC_2$, with parameters $[wr,k=ur+v,d_2\geq t+1]_q$.

Now we apply the Hamming bound \cite{MS} to $\cC_2$. We distinguish
between two cases, depending on the parity of $t$.

Case 1: $t$ is odd.  In this case,  by the Hamming bound,
\begin{equation*}
q^{ur+v}\leq \frac{q^{wr}}{\sum_{0\leq i\leq \frac{t-1}{2}}\binom{wr}{i}(q-1)^i}\leq \frac{q^{wr}}{\binom{wr}{\frac{t-1}{2}}(q-1)^{\frac{t-1}{2}}}\leq \frac{q^{wr}}{\left(\frac{wr}{\frac{t-1}{2}}\right)^{\frac{t-1}{2}}(q-1)^{\frac{t-1}{2}}},
\end{equation*}
i.e.,
\begin{equation*}
wr\leq \frac{t-1}{2(q-1)}q^{\frac{2(w-u)r-2v}{t-1}}.
\end{equation*}
This is to say,
\[n\leq \frac{(r+\delta-1)(t-1)}{2r(q-1)}q^{\frac{2(w-u)r-2v}{t-1}}.\]

Case 2: $t$ is even.  Similarly,  by the Hamming bound, we have
\begin{equation*}
q^{ur}\leq \frac{q^{wr}}{\sum_{1\leq i\leq \frac{t}{2}}\binom{wr}{i}(q-1)^i}\leq \frac{q^{wr}}{\binom{wr}{\frac{t}{2}}(q-1)^{\frac{t}{2}}}\leq \frac{q^{wr}}{\left(\frac{wr}{\frac{t}{2}}\right)^{\frac{t}{2}}(q-1)^{\frac{t}{2}}},
\end{equation*}
which means
$$n\leq \frac{t(r+\delta-1)}{2r(q-1)}q^{\frac{2(w-u)r-2v}{t}}.$$

By Lemma \ref{lemma_bound_i}, $\mathcal{C}$ is optimal means that
\begin{equation*}
d-1=\begin{cases}
w(r+\delta-1)-ur-v-u(\delta-1), &\text{ if } v\ne 0,\\
w(r+\delta-1)-ur-(u-1)(\delta-1), &\text{ if } v= 0,\\
\end{cases}
\end{equation*}
i.e., $w-u=\lfloor(d-1+v)/(r+\delta-1)\rfloor$. This completes the proof.
\end{proof}

Recalling Corollary \ref{corollary_delta>2} again, we can improve the performance of
the bounds on the length of optimal locally repairable codes with all symbol
$(r,\delta)$-locality for the case $d>r+\delta$ by the following corollary.

\begin{corollary}\label{corollary_reduce_d}
Let $n=w(r+\delta-1)$, $\delta\geq 2$, $k=ur+v>r$, and additionally,
$r|k$ or $u\geq 2(r+1-v)$, where all parameters are integers. If there
exists an optimal $[n,k,d]_q$ linear code $\cC$ with $d>r+\delta$ and
all symbol $(r,\delta)$-locality, then there exists an optimal linear
code $\cC'$ with all symbol $(r,\delta)$-locality and parameters
$[n-\epsilon(r+\delta-1),k,d'=d-\epsilon(r+\delta-1)]_q,$ where
$\epsilon=\lceil (d-1)/(r+\delta-1)\rceil-1$.
\end{corollary}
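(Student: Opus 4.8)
The plan is to proceed by induction on $\epsilon$, where a single inductive step removes one repair set from the decomposition of Corollary~\ref{corollary_delta>2} and decreases both the length and the minimum distance by exactly $r+\delta-1$. Since $d>r+\delta$ we have $\epsilon\ge 1$, and it is convenient to record the identity $\epsilon=\ceilenv{(d-r-\delta)/(r+\delta-1)}$. It therefore suffices to establish the following single step: \emph{if $\cC$ is optimal with all symbol $(r,\delta)$-locality and satisfies $n=w(r+\delta-1)$, $k=ur+v>r$, $r\mid k$ or $u\ge 2(r+1-v)$, and in addition $d>r+\delta$, then there is an optimal $[n-(r+\delta-1),k,d-(r+\delta-1)]_q$ code with all symbol $(r,\delta)$-locality.} Granting this, the new code has the same $k,u,v$ (so the side condition persists), length $(w-1)(r+\delta-1)$ divisible by $r+\delta-1$, and $\epsilon$-value one smaller; iterating the step until $\epsilon$ reaches $0$ then yields an optimal $[n-\epsilon(r+\delta-1),k,d-\epsilon(r+\delta-1)]_q$ code with all symbol $(r,\delta)$-locality, as required.

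To carry out one step, I would first apply Corollary~\ref{corollary_delta>2} to obtain pairwise-disjoint $(r,\delta)$-repair sets $R_1,\dots,R_w$, each of size $r+\delta-1$, partitioning $[n]$, with each $\cC|_{R_i}$ an $[r+\delta-1,r,\delta]_q$ MDS code, and then consider the punctured code $\cC'=\cC|_{[n]\setminus R_w}$ of length $(w-1)(r+\delta-1)$. The heart of the argument is to show $\rank\parenv{\bigcup_{i=1}^{w-1}R_i}=k$, so that $\cC'$ still has dimension $k$. If instead that rank were at most $k-1$, then Lemma~\ref{lemma_rank_and_dist} applied to $\cC$ would give $(w-1)(r+\delta-1)=\abs{\bigcup_{i=1}^{w-1}R_i}\le n-d$, i.e.\ $n-(r+\delta-1)\le n-d$, hence $d\le r+\delta-1$, contradicting $d>r+\delta$.

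Once the dimension is pinned at $k$, the rest is bookkeeping. Each $R_i$ with $i\le w-1$ is still a valid $(r,\delta)$-repair set of $\cC'$ because $\cC'|_{R_i}=\cC|_{R_i}$ is unchanged, so $\cC'$ has all symbol $(r,\delta)$-locality. For the distance, deleting $r+\delta-1$ coordinates from a code whose dimension is preserved can lower the minimum distance by at most $r+\delta-1$, so $d(\cC')\ge d-(r+\delta-1)$; and since $\cC'$ has information $(r,\delta)$-locality, Lemma~\ref{lemma_bound_i} combined with the optimality of $\cC$ gives $d(\cC')\le\parenv{n-(r+\delta-1)}-k+1-\parenv{\ceilenv{k/r}-1}(\delta-1)=d-(r+\delta-1)$. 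Hence $d(\cC')=d-(r+\delta-1)$ and $\cC'$ attains the bound of Lemma~\ref{lemma_bound_i}, i.e.\ it is optimal, which closes the inductive step.

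The main obstacle is exactly the rank identity $\rank\parenv{\bigcup_{i<w}R_i}=k$ — that discarding an entire repair set does not collapse the dimension; once one notices that optimality together with $d>r+\delta$ forces it via Lemma~\ref{lemma_rank_and_dist}, everything else is routine. It is worth verifying explicitly that the step applies exactly $\epsilon$ times: since $\epsilon-1=\ceilenv{(d-r-\delta)/(r+\delta-1)}-1<(d-r-\delta)/(r+\delta-1)$, after $\epsilon-1$ steps the distance $d-(\epsilon-1)(r+\delta-1)$ still exceeds $r+\delta$, so the $\epsilon$-th step is legitimate, while after $\epsilon$ steps the distance has fallen to $d-\epsilon(r+\delta-1)\in[2,r+\delta]$, so no further step is needed — matching the stated parameters.
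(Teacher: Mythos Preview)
Your proof is correct and follows essentially the same approach as the paper: invoke Corollary~\ref{corollary_delta>2} to obtain the partition into MDS repair sets, puncture repair sets, observe that locality is inherited, and use Lemma~\ref{lemma_bound_i} together with the optimality of $\cC$ to pin down the resulting distance exactly. The only structural difference is that the paper removes all $\epsilon$ repair sets in a single puncturing step rather than inductively, and is less explicit than you are about why the dimension remains $k$ (though this is implicit there as well, since $\epsilon(r+\delta-1)\le d-1<d$).
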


\begin{proof}
By Corollary \ref{corollary_delta>2}, there are $R_1,R_2,\cdots,R_w$ such
that $\cC|_{R_i}$, $1\leq i\leq w$, is an $[r+\delta-1,r,\delta]_q$ MDS code.
Note that $\epsilon=\lceil (d-1)/(r+\delta-1)\rceil-1$. The fact $\cC$ is optimal means that
\begin{equation}\label{eqn_d_for_C}
d=n-k+1-\left(\left\lceil\frac{k}{r}\right\rceil-1\right)(\delta-1),
\end{equation}
by Lemma \ref{lemma_bound_i}.
Recall that $k>r$, $n=w(r+\delta-1)$, and $d>r+\delta$. Thus, we have
$1\leq \epsilon\leq w-1$. Now let $\cC'$ be the punctured code of $\cC$ over
the set $W=\bigcup_{\epsilon+1\leq i\leq w-1} R_i$, i.e., $\cC'=\cC|_{W}$.
The fact $\cC'|_{R_i}=\cC|_{R_i}$ for $\epsilon+1\leq i\leq w$ is an $[r+\delta-1,r,\delta]_q$
MDS code means that $\cC'$ has all symbol $(r,\delta)$-locality.
The fact $\cC'=\cC|_{W}$ implies $n'=n-\sum_{1\leq i\leq \epsilon}|R_i|=n-\epsilon(r+\delta-1)$
and
$$d'\geq d-\sum_{1\leq i\leq \epsilon}|R_i|=d-\epsilon(r+\delta-1).$$
However, by Lemma \ref{lemma_bound_i}, we have
\begin{equation*}
d'\leq n'-k+1-\left(\left\lceil\frac{k}{r}\right\rceil-1\right)(\delta-1)
=n-\epsilon(r+\delta-1)-k+1-\left(\left\lceil\frac{k}{r}\right\rceil-1\right)(\delta-1)
=d-\epsilon(r+\delta-1),
\end{equation*}
where the last equality follows by \eqref{eqn_d_for_C}.
Thus, we have $d'=d-\epsilon(r+\delta-1)$. Again by Lemma \ref{lemma_bound_i}
the code $\cC'$ is also an optimal linear code with all symbol
$(r,\delta)$-locality and parameters
$[n-\epsilon(r+\delta-1),k,d'=d-\epsilon(r+\delta-1)]_q,$
which completes the proof.
\end{proof}

By Corollary \ref{corollary_reduce_d}, we can firstly reduce the optimal
locally repairable code $\cC$ into an optimal locally repairable code
$\cC'$ with $d'\leq r+\delta$ and then apply Theorem \ref{theorem_bound_delta>2} ($\delta>2$)
and Corollary \ref{corollary_bound_2} ($\delta=2$) to get an upper bound for the length
of $\cC$.

\begin{corollary}
Let $n=w(r+\delta-1)$, $\delta\geq 2$, $k=ur+v>r$, and additionally,
$r|k$ or $u\geq 2(r+1-v)$, where all parameters are integers. If there
exists an optimal $[n,k,d]_q$ linear code $\cC$ with $d>r+\delta$ and
all symbol $(r,\delta)$-locality, then for $\delta=2$
\begin{equation*}
n\leq \epsilon(r+\delta-1)+
\begin{cases}
\frac{(d'-a)(r+1)}{4(q-1)r} q^{\frac{4(d'-2)}{d'-a}},& \text{if }a=1,2,\\
\frac{r+1}{r}\left(\frac{d'-a}{4(q-1)} q^{\frac{4(d'-3)}{d'-a}}+1\right),& \text{if }a=3,4,\\
\end{cases}
\end{equation*}
and for $\delta>2$
\begin{align*}
n&\leq \epsilon(r+\delta-1)+
\begin{cases}
\frac{(t-1)(r+\delta-1)}{2r(q-1)}q^{\frac{2(w'-u)r-2v}{t-1}}, &\text{ if } t \text{ is odd},  \\
\frac{t(r+\delta-1)}{2r(q-1)}q^{\frac{2(w'-u)r-2v}{t}}, &\text{ if } t \text{ is even},\\
\end{cases}\\
\end{align*}
where $\epsilon=\lceil(d-1)/(r+\delta-1)\rceil-1$, $d'=d-\epsilon(r+\delta-1)$,
$w'=w-\epsilon$, $a\equiv d'\pmod 4$, and $t=\lfloor(d'-1)/(\delta)\rfloor$ so that $2t+1>4$ holds.

\end{corollary}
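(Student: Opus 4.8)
The plan is to reduce to results already in hand rather than re-derive any estimate from scratch. Since $d>r+\delta$, the code $\cC$ satisfies the hypotheses of Corollary~\ref{corollary_reduce_d}, so I would first invoke it to produce an optimal $[n',k,d']_q$ linear code $\cC'$ with all symbol $(r,\delta)$-locality, where $n'=n-\epsilon(r+\delta-1)$, $d'=d-\epsilon(r+\delta-1)$, and $\epsilon=\lceil(d-1)/(r+\delta-1)\rceil-1$. Unwinding the ceiling, $\epsilon(r+\delta-1)<d-1\leq(\epsilon+1)(r+\delta-1)$, so $2\leq d'\leq r+\delta$; in particular the reduced code already lies in the range where the earlier length bounds are meant to be applied, and no second application of Corollary~\ref{corollary_reduce_d} is needed (nor would it apply).

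Next I would check that $\cC'$ inherits every structural hypothesis required downstream. Writing $w'=w-\epsilon$, the puncturing in Corollary~\ref{corollary_reduce_d} only deletes complete repair sets, so $n'=w'(r+\delta-1)$, the divisibility $(r+\delta-1)\mid n'$ is automatic, and the dimension $k=ur+v>r$ together with the side condition ``$r\mid k$ or $u\geq 2(r+1-v)$'' is untouched. I would also record that optimality of $\cC'$ (guaranteed by Corollary~\ref{corollary_reduce_d}) together with Lemma~\ref{lemma_bound_i} gives $w'-u=\floorenv{(d'-1+v)/(r+\delta-1)}$, matching the substitution made in Theorem~\ref{theorem_bound_delta>2}. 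Finally, note that with $t=\floorenv{(d'-1)/\delta}$ the standing assumption $2t+1>4$ specialises, when $\delta=2$, to $d'\geq 5$, which is exactly the distance hypothesis of Corollary~\ref{corollary_bound_2}.

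With this in place the conclusion is a two-case application followed by bookkeeping. For $\delta=2$ I would apply Corollary~\ref{corollary_bound_2} to $\cC'$, obtaining its bound with $d$ replaced by $d'$, and then substitute $n=n'+\epsilon(r+1)$. For $\delta>2$ I would apply Theorem~\ref{theorem_bound_delta>2} to $\cC'$, which outputs the stated expression with ``$w-u$'' replaced by ``$w'-u$'', and again add back $\epsilon(r+\delta-1)$. The only step that requires genuine care---and hence the main (mild) obstacle---is the verification in the second paragraph: that the reduction lands in the regime $d'\leq r+\delta$ while leaving the divisibility and inequality conditions on $n'$ and $k$ intact, so that Corollary~\ref{corollary_bound_2} and Theorem~\ref{theorem_bound_delta>2} can be quoted verbatim on $\cC'$. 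Everything else is arithmetic.
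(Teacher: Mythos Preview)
Your proposal is correct and follows exactly the approach the paper intends: the paper explicitly states (just before the corollary) that one should ``firstly reduce the optimal locally repairable code $\cC$ into an optimal locally repairable code $\cC'$ with $d'\leq r+\delta$'' via Corollary~\ref{corollary_reduce_d} and then apply Theorem~\ref{theorem_bound_delta>2} (for $\delta>2$) or Corollary~\ref{corollary_bound_2} (for $\delta=2$). Your verification that the structural hypotheses transfer to $\cC'$ and that $2\leq d'\leq r+\delta$ (with $d'\geq 5$ when $\delta=2$ under the assumption $2t+1>4$) is more explicit than what the paper writes, but the logic is identical.
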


In the next section, we will prove that the bound in Theorem
\ref{theorem_bound_delta>2} is asymptotically tight for some special
cases, i.e., there indeed exist some optimal linear codes with all
symbol $(r,\delta)$-locality and asymptotically optimal length. In addition,
we will also prove the condition $2t+1>4$ is necessary, by constructing linear codes with length
independent of the field size $q$ for the case $2t+1\leq 4$.

\section{Optimal Locally Repairable Codes with Super-Linear Length}\label{sec-construction}

In this section, we introduce a generic construction of locally
repairable codes. Next, we demonstrate applications by this construction by employing some combinatorial structures to generate  optimal locally
repairable codes with length $n$ that is super-linear in the field
size $q$.

\subsection{A general construction}
In the subsection, to streamline the presentation  we adopt a
slightly different notation than the previous one: we use
$n=w(r+\delta-1)$ and $k=(w-1)r+v$ for $0< v\leq r$, where all
parameters are integers.

\begin{construction}
  \label{cons}
  Let the $k$ information symbols be partitioned into $w$ sets, say,
  \begin{align*}
    I^{(i)}&=\{I_{i,1},I_{i,2},\dots,I_{i,r}\}, \quad \text{for $i\in[w-1]$,}\\
    I^{(w)}&=\{I_{w,1},I_{w,2},\dots,I_{w,v}\}.
  \end{align*}
  A linear code with length $n$ is constructed by describing a linear
   map from the information ${\bm I}=(I_{1,1},\dots, I_{w,v})\in
  \F^k_{q}$ to a codeword ${\bm C}({\bm
    I})=(c_{1,1},\dots,c_{w,r+\delta-1})\in \F^n_q$, thus the
  $[n,k]_q$ linear code is $\cC=\{{\bm C}({\bm I}) ~:~ {\bm I}\in
  \F_q^{k}\}$. This mapping is performed by the following three steps:

  \paragraph{Step 1 -- Partial parity check symbols}
  For $1\leq i\leq w-1$, let $S_i=\{\theta_{i,t} ~:~ 1\leq t\leq
  r+\delta-1\}\subseteq \F_q$ and let $f_i(x)$ be the unique
  polynomial over $\F_q$ with $\deg(f_i)\leq r-1$ that satisfies
  $f_i(\theta_{i,t})=I_{i,t}$ for $1\leq t\leq r$.  For $1\leq i\leq
  w-1$ and $1\leq t\leq r+\delta-1$, set $c_{i,t}=f_i(\theta_{i,t})$.

\paragraph{Step 2 -- Auxiliary symbols} Let $\{\alpha_t ~:~ 1\leq t\leq r-v\}\subseteq \F_q\setminus (\bigcup_{1\leq i\leq w-1}S_i)$. For $1\leq i\leq w-1$, and $1\leq t\leq r-v$, define
\begin{equation}\label{eqn_A_i}
a_{i,t}=\frac{f_i(\alpha_t)}{\prod_{\theta\in
    S_i}(\alpha_t-\theta)}.
\end{equation}

\paragraph{Step 3 -- Global parity check symbols}
Let $S_w=\{\theta_{w,t} ~:~ 1\leq t\leq r+\delta-1\}\subseteq
\F_q\setminus\{\alpha_t ~:~ 1\leq t\leq r-v\}$ and let $f_w(x)$ be the
unique polynomial over $\F_q$ with $\deg(f_w)\leq r-1$ that satisfies
$f_w(\theta_{w,t})=I_{w,t}$ for $1\leq t\leq v$, as well as
\begin{equation}\label{eqn_global_check}
\sum_{1\leq i\leq w}a_{i,t}=0 \text{ for }1\leq t\leq r-v,
\end{equation}
where $a_{w,t}=\frac{f_w(\alpha_t)}{\prod_{\theta\in
    S_w}(\alpha_t-\theta)}$ for $1\leq t\leq r-v$. Here, the
polynomial $f_w(x)$ can be viewed as a polynomial over $\F_q$
determined by $I_{w,j}$, $1\leq j\leq v$ and $a_{w,t}$ for $1\leq
t\leq r-v$. Thus, $f_w(x)$ is unique and well defined. Set
$c_{w,j}=f_w(\theta_{w,j})$, for $1\leq j\leq r+\delta-1$.
\end{construction}

\begin{remark}
At first glance there appears to be a distinction between
code symbols $c_{i,j}$ with $1\leq i\leq w-1$ and those with
$i=w$. However, careful thought reveals that the code symbols that
correspond to the sets $S_i$ for $1\leq i\leq w$ are essentially
symmetric, i.e., any $w-1$ sets of code symbols can determine $v$ code
symbols of the remaining set according to \eqref{eqn_global_check}.
\end{remark}

\begin{theorem}\label{thm_optimal_code}
Let $\mu$ be a positive integer, and let $S_i \subseteq \F_q$, $i\in[w]$ be the sets
defined in Construction \ref{cons}. If every subset $\cR\subseteq
\{S_i ~:~ 1\leq i\leq w\}$, $\abs{\cR}=\mu$, satisfies that for all
$S'\in\cR$,
\begin{equation}\label{eqn_cond_mu}
\left|S'\cap \left(\bigcup_{S\in \cR\setminus\{S'\}}S\right)\right|< \delta,
\end{equation}
then the code $\cC$ generated by Construction \ref{cons} is an
$[n,k,d]_q$ linear code, with $d\geq \min\{r-v+\delta,(\mu+1)\delta\}$
and with all symbol $(r,\delta)$-locality, where $n=w(r+\delta-1)$,
$k=(w-1)r+v$, $1\leq v\leq r$, and all parameters are integers.
\end{theorem}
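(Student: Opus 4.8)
The plan is to verify the three asserted properties of $\cC$ in turn: that it is $\F_q$-linear of the stated length, that it has all-symbol $(r,\delta)$-locality, and that its minimum distance is at least $\min\{r-v+\delta,(\mu+1)\delta\}$. Linearity is immediate from the construction, since every step -- polynomial interpolation, the definition of the $a_{i,t}$ in \eqref{eqn_A_i}, and the linear constraints \eqref{eqn_global_check} -- is $\F_q$-linear in the information vector ${\bm I}$; one only needs to observe that the system defining $f_w$ (namely $v$ interpolation conditions plus the $r-v$ equations \eqref{eqn_global_check}) has a unique solution because it amounts to prescribing $f_w$ at the $v$ points $\theta_{w,1},\dots,\theta_{w,v}$ together with the $r-v$ values $a_{w,t}$, and a degree-$\le r-1$ polynomial is determined by $r$ such independent linear functionals. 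Locality is also quick: for each $1\le i\le w$ the $r+\delta-1$ symbols $\{c_{i,t}\}_{t}$ are the evaluations of the degree-$\le r-1$ polynomial $f_i$ at the $r+\delta-1$ distinct points of $S_i$, so the punctured code $\cC|_{S_i}$ is (a subcode of) an $[r+\delta-1,r]_q$ Reed--Solomon code and hence has minimum distance $\ge\delta$; thus each $S_i$ is an $(r,\delta)$-repair set and every coordinate lies in one.

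The substance is the distance bound. I would argue via Lemma \ref{lemma_rank_and_dist}: it suffices to show that any nonzero codeword has weight at least $d_0:=\min\{r-v+\delta,(\mu+1)\delta\}$, equivalently that no nonzero ${\bm I}$ produces a codeword of weight $\le d_0-1$. Suppose ${\bm C}({\bm I})$ is nonzero and let $J=\{i : {\bm C}({\bm I})|_{S_i}\neq 0\}$ be the set of "active" blocks. Within an active block $i$, the sub-vector $(c_{i,1},\dots,c_{i,r+\delta-1})$ is a nonzero RS-codeword, so it has at least $\delta$ nonzero entries; hence $\mathrm{wt}({\bm C}({\bm I}))\ge |J|\,\delta$. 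So if $|J|\ge \mu+1$ we are done. It remains to handle $1\le |J|\le \mu$. Here is where the hypothesis \eqref{eqn_cond_mu} enters, via the auxiliary symbols: I want to show that if only a few blocks are active then in fact each active block's polynomial has low degree, which forces many zeros. Concretely, the global constraints \eqref{eqn_global_check} say $\sum_{i\in J} a_{i,t}=0$ for each $t\in[r-v]$, i.e. $\sum_{i\in J} f_i(\alpha_t)/\prod_{\theta\in S_i}(\alpha_t-\theta)=0$. Clearing denominators, the rational function $g(x):=\sum_{i\in J} f_i(x)\prod_{j\in J,\,j\neq i}\prod_{\theta\in S_j}(x-\theta)$ -- a polynomial of degree $\le (r-1)+(|J|-1)(r+\delta-1)$ -- vanishes at all $r-v$ points $\alpha_t$ (since the $\alpha_t$ avoid every $S_j$, the products there are nonzero).

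The key step is to combine this with the structure of the supports. For each active $i$, write the zero set of $f_i$ inside $S_i$: $f_i$ has degree $\le r-1$ and by hypothesis \eqref{eqn_cond_mu} applied to $\cR=\{S_j : j\in J\}$ (using $|J|\le\mu$, after padding $\cR$ up to size $\mu$ if necessary, which only enlarges the union and keeps the bound $<\delta$ -- actually monotonicity of \eqref{eqn_cond_mu} in one direction needs a line of care, which I flag as a subtlety), the overlap $|S_i\cap\bigcup_{j\in J\setminus\{i\}}S_j|<\delta$. The plan is: count zeros of $g$. On one hand $g(\alpha_t)=0$ for the $r-v$ values $\alpha_t$; on the other hand, each point $\theta\in S_i$ with $i\in J$ contributes: if $\theta$ lies only in $S_i$ (not in any other $S_j$, $j\in J$), then every term of $g$ except the $i$-th vanishes at $\theta$, so $g(\theta)=f_i(\theta)\prod_{j\neq i}\prod_{\theta'\in S_j}(\theta-\theta')$, which is zero iff $f_i(\theta)=0$, i.e. iff $c_{i,\cdot}=0$ at that coordinate. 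Summing the number of zero coordinates across active blocks and comparing with $\deg g$ will yield a lower bound on the total weight: if the weight were $\le d_0-1\le r-v+\delta-1$, there would be at least $\bigl(\sum_{i\in J}(r+\delta-1)\bigr)-(d_0-1)$ zero coordinates, "most" of which are private to their block (at most $\sum_i |S_i\cap\bigcup_{j\neq i}S_j| < |J|\delta$ are shared), forcing $g$ to have more roots than its degree -- unless $g\equiv 0$, which by a final interpolation/linear-independence argument (the polynomials $\prod_{j\neq i}\prod_{\theta\in S_j}(x-\theta)$ restricted to $S_i$ are units there) forces every $f_i\equiv 0$, contradicting ${\bm I}\neq 0$.

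The main obstacle, and the part needing the most care, is this root-counting bookkeeping: tracking exactly how many coordinates in each active block can simultaneously be zero, separating "private" zeros (which become genuine roots of $g$) from "shared" zeros (bounded by \eqref{eqn_cond_mu}), and checking that the degree of $g$, namely $(r-1)+(|J|-1)(r+\delta-1)$, is strictly smaller than the forced number of roots once $\mathrm{wt}\le \min\{r-v+\delta,(\mu+1)\delta\}-1$; the two cases in the minimum correspond to whether $|J|=1$ (where the bound $r-v+\delta$ comes purely from $f_i$ having $\le r-1$ roots plus the $r-v$ forced vanishings at the $\alpha_t$) or $|J|\ge 2$ (where the $(|J|+1)\delta$-type term and the overlap bound do the work). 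I would organize the final write-up as these two cases.
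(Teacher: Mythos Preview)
Your route differs substantially from the paper's. The paper argues by an erasure-recovery induction: it writes each $f_i$ as a known interpolant plus $\tau_i-\delta+1$ unknowns $\varpi_{i,t}$, obtains from \eqref{eqn_global_check} a $v_1\times(r-v)$ linear system, and shows full row rank by an inductive determinantal argument---a polynomial $h(x)\prod_u\prod_{\theta\in E_u}(x-\theta)$ of degree $<r-v$ with $r-v$ roots at the $\alpha_t$ must vanish identically, and \eqref{eqn_cond_mu} is used to reorder the erasures so that the functions $m_{\lambda_{i,j}}(x)\prod_u\prod_{\theta\in E_u}(x-\theta)$ are linearly independent. Your direct codeword-weight argument via $g(x)=\sum_{i\in J}f_i(x)\prod_{j\in J\setminus\{i\}}\prod_{\theta\in S_j}(x-\theta)$ avoids the induction entirely, but the bookkeeping you sketch does not close.

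The gap is that counting only \emph{simple} roots---the $\alpha_t$, the private zero positions, and the shared field elements once each---does not yield enough roots once the $S_i$ overlap: already for $|J|=2$ with $|S_1\cap S_2|=\delta-1$ and all $W=r-v+\delta-1$ nonzero coordinates sitting at private positions, the simple-root count gives only $(r-v)+|S_1\cup S_2|-W=2r$, which does not exceed $\deg g=2r+\delta-2$; and discarding ``at most $|J|\delta$ shared'' zero coordinates as you propose only makes it worse. The remedy is to count with multiplicity. For $\theta\in\bigcup_{i\in J}S_i$ set $\nu(\theta)=|\{i\in J:\theta\in S_i\}|$ and $z(\theta)=|\{i\in J:\theta\in S_i,\ f_i(\theta)=0\}|$. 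The $k$-th summand of $g$ vanishes at $\theta$ to order at least $\nu(\theta)$ unless $\theta\in S_k$ and $f_k(\theta)\ne 0$, in which case the order is $\nu(\theta)-1$; hence $g$ vanishes to order $\ge\nu(\theta)-1$ always and $\ge\nu(\theta)$ when $z(\theta)=\nu(\theta)$, so in either case the order is $\ge z(\theta)$. Summing, the total root multiplicity of $g$ inside $\bigcup_i S_i$ is at least $\sum_\theta z(\theta)=\sum_{i\in J}\bigl((r+\delta-1)-w_i\bigr)=|J|(r+\delta-1)-W$; adding the $r-v$ roots at the $\alpha_t$, this exceeds $\deg g\le(r-1)+(|J|-1)(r+\delta-1)$ exactly when $W<r-v+\delta$, forcing $g\equiv 0$. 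Only now does \eqref{eqn_cond_mu} enter: $|J|\delta\le W\le(\mu+1)\delta-1$ gives $|J|\le\mu$, so (after extending $J$ to a $\mu$-subset---your monotonicity worry is unfounded, since enlarging the union can only increase the left side of \eqref{eqn_cond_mu}) each $S_i$ has at least $r$ points private to $J$, and $g\equiv 0$ makes $f_i$ vanish on all of them, whence $f_i\equiv 0$. With this correction your argument goes through and is in fact shorter than the paper's induction.
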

\begin{proof}
By Steps $1$ and $3$, it is easy to check that the code $\cC$
generated by Construction \ref{cons} has all symbol
$(r,\delta)$-locality. By Definition \ref{def_r_delta_i}, the repair sets
are the coordinates of the code symbols $\{f_i(\theta):\theta\in S_i\}$ for
$1\leq i\leq w$. To simplify the notation, instead of define those coordinates, we directly use $S_i$, $1\leq i\leq w$ to denote
the repair sets in this proof.
 The code $\cC$ is an $[n,k]_q$ linear code with
$n=w(r+\delta-1)$ and $k=(w-1)r+v$ according to Construction
\ref{cons}. To complete the proof, we only need to show that $d\geq
d_1=\min\{r-v+\delta, (\mu+1)\delta\} $, i.e., the code $\cC$ can
recover from any $d_1-1$ erasures.

According to the all symbol $(r,\delta)$-locality, it is sufficient to
consider those repair sets containing strictly more
than $\delta-1$ erasures, where for the code $\cC$ the repair sets
correspond to $S_i$ for $1\leq i\leq w$. Since the maximum number
 of erasures we should consider is $d_1-1$, there are at most $\frac{d_1-1}{\delta}$
 repair sets which can have size larger than or equal to $\delta$. Without loss of generality, we
assume that there are $\ell$ sets, $S_1,\dots,S_\ell$, that contain at
least $\delta$ erasures each, and those erasures are located in
coordinates $E_i\subseteq S_i$ for $1\leq i\leq \ell\leq
\frac{d_1-1}{\delta}$. Denote $|E_i|=\tau_i\geq \delta$ for $1\leq
i\leq \ell$ and $\sum_{1\leq i\leq \ell}\tau_i\leq d_1-1\leq
r-v+\delta-1$. In what follows, we prove the claim by induction on
both $\ell$ and the total number of erasures $\sum_{1\leq i\leq
  \ell}\tau_i$.

For the induction base consider the case of $\ell=1$ and $\delta \le |E_1|\leq
d_1-1$. By
Steps $1$ and $3$, we know $f_i(x)$ for $2\leq i\leq w$, i.e.,
$a_{i,t}$ is available for $2\leq i\leq w$ and $1\leq t\leq r-v$. By
\eqref{eqn_global_check}, $a_{1,t}$ for $1\leq t\leq r-v$ can be
calculated. Recall that $|E_1|\leq d_1-1\leq r-v+\delta-1$. We know at
least $v$ values $f_1(\theta)$ for $\theta\in S_1\setminus E_1$, which
together with $f_1(\alpha_t)=a_{1,t}\prod_{1\leq j\leq
  r+\delta-1}(\alpha_t-\theta_{1,j})$ for $1\leq t\leq r-v$ show that
$f_1(x)$ can be recovered. Here we use the fact that $\{\alpha_t ~:~
1\leq t\leq r-v\}\cap S_1=\phi$, i.e., $\prod_{1\leq j\leq
  r+\delta-1}(\alpha_t-\theta_{1,j})\ne 0$.  This is to say, we can
recover all the code symbols $f_1(\theta)$ for $\theta\in E_1$.
We emphasize that in this case, the $S_i$'s are not required to satisfy \eqref{eqn_cond_mu},
so the restriction on the size of the finite field in this case is $q \geq 2r+\delta-v-1$.

For the induction hypothesis assume that for the case $1\leq
\ell=s<\frac{d_1-1}{\delta}$ and $\sum_{1\leq i\leq s}\tau_i=T< d_1-1$,
the code symbols $f_i(\theta)$ for $\theta\in E_i$ and $1\leq i\leq s$
are recoverable.

The induction step is divided into two cases. For the first case,
assume an erasure pattern with $\sum_{1\leq i\leq s}\tau_i=T+1\leq
d_1-1$. Note that if $s=1$ the claim holds by the induction
base. Therefore, we only consider $s\geq 2$. Since $s <
\frac{d_1-1}{\delta}\leq \frac{(\mu+1)\delta-1}{\delta}$, we have
$s\leq \mu.$ Thus, by \eqref{eqn_cond_mu},
$$\left|E_i\cap \left(\bigcup_{\substack{1\leq j\leq s\\ j\ne i}} E_j\right)\right|\leq \left|S_i\cap \left(\bigcup_{\substack{1\leq j\leq s\\ j\ne i}} S_j\right)\right|\leq \delta-1,$$
which means that the elements of each $E_i$ may be indexed $E_i=\{e_{i,t} ~:~ 1\leq t\leq \tau_i\}$ such that
\begin{equation}\label{eqn_reorder}
\{e_{i,t} ~:~ 1\leq t\leq \tau_i-\delta+1\}\cap E_j=\phi\text{ for }1\leq i\ne j\leq s.
\end{equation}
By polynomial interpolation, $f_i(x)$ for $1\leq i\leq s$ with $\deg(f_i(x))\leq r-1$ is represented as
\begin{equation}\label{eqn_f_i_x}
\begin{split}
  f_i(x)&=\sum_{\theta\in S_i\setminus\{e_{i,j} ~:~ \tau_i-\delta+2\leq j\leq \tau_i\}}\frac{f_i(\theta)\prod\limits_{\tau_i-\delta+2\leq j\leq \tau_i}(\theta-e_{i,j})}{\prod\limits_{\theta_1\in S_i\setminus\{\theta\}}(\theta-\theta_1)}\cdot \frac{\prod\limits_{\theta_1\in S_i}(x-\theta_1)}{(x-\theta)\prod\limits_{\tau_i-\delta+2\leq j\leq \tau_i}(x-e_{i,j})}\\
&=\sum_{\theta_{i,t}\in S_i\setminus E_i}\frac{c_{i,t}\prod_{\tau_i-\delta+2\leq j\leq \tau_i}(\theta_{i,t}-e_{i,j})}{\prod_{\theta_1\in S_i\setminus\{\theta_{i,t}\}}(\theta_{i,t}-\theta_1)}\cdot\frac{\prod_{\theta\in S_i}(x-\theta)}{(x-\theta_{i,t})\prod\limits_{\tau_i-\delta+2\leq j\leq \tau_i}(x-e_{i,j})}\\
&\quad+\sum_{1\leq t\leq \tau_i-\delta+1}\varpi_{i,t}\frac{\prod_{\theta\in S_i}(x-\theta)}{(x-e_{i,t})\prod\limits_{\tau_i-\delta+2\leq j\leq \tau_i}(x-e_{i,j})}\\
&=g_i(x)+\sum_{1\leq t\leq \tau_i-\delta+1}\varpi_{i,t}\frac{\prod_{\theta\in S_i}(x-\theta)}{(x-e_{i,t})\prod\limits_{\tau_i-\delta+2\leq j\leq \tau_i}(x-e_{i,j})},
\end{split}
\end{equation}
where $g_i(x)$ is determined by the accessible code symbols
corresponding to $S_i\setminus E_i$ and
$$\varpi_{i,t}=f_i(e_{i,t})\frac{\prod_{\tau_i-\delta+2\leq j\leq \tau_i}(e_{i,t}-e_{i,j})}{\prod_{\theta_1\in S_i\setminus\{e_{i,t}\}}(e_{i,t}-\theta_1)},$$
with ${\prod_{\tau_i-\delta+2\leq j\leq
    \tau_i}(e_{i,t}-e_{i,j})}/{\prod_{\theta_1\in
    S_i\setminus\{e_{i,t}\}}(e_{i,t}-\theta_1)}$ being a nonzero
constant for $1\leq i\leq s$ and $1\leq t\leq
\tau_i-\delta+1$. Combining \eqref{eqn_f_i_x} with
\eqref{eqn_global_check}, we have
\begin{equation}\label{eqn_cond_recov}
\begin{split}
&\left(\varpi_{1,1},\dots,\varpi_{1,\tau_1-\delta+1},\dots,\varpi_{s,\tau_s-\delta+1}\right)M\\
=&\left(\varpi_{1,1},\dots,\varpi_{1,\tau_1-\delta+1},\dots,\varpi_{s,\tau_s-\delta+1}\right)\begin{pmatrix}
 m_{\lambda_{1,1},1} & m_{\lambda_{1,1},2} & \dots  & m_{\lambda_{1,1},r-v} \\
 m_{\lambda_{1,2},1} & m_{\lambda_{1,2},2} & \dots  & m_{\lambda_{1,2},r-v} \\
\vdots & \vdots & \dots  & \vdots \\
 m_{\lambda_{1,\tau_1-\delta+1},1} & m_{\lambda_{1,\tau_1-\delta+1},2} & \dots  & m_{\lambda_{1,\tau_1-\delta+1},r-v} \\
 \vdots & \vdots & \dots  & \vdots \\
 m_{\lambda_{s,\tau_s-\delta+1},1} & m_{\lambda_{s,\tau_s-\delta+1},2} & \dots  & m_{\lambda_{t_{s,\tau_s-\delta+1}},r-v} \\
 \end{pmatrix}_{v_1\times (r-v)}\\
 =&(w_1,w_2,\dots,w_{r-v}),
\end{split}
\end{equation}
where $(w_1,w_2,\dots,w_{r-v})$
 is a constant vector determined by the accessible code symbols with
\begin{equation*}
w_i=-\sum_{1\leq j\leq s}\frac{g_j(\alpha_i)}{\prod_{\theta\in S_j}(\alpha_i-\theta)} -\sum_{s+1\leq j\leq w}\frac{f_j(\alpha_i)}{\prod_{\theta\in S_j}(\alpha_i-\theta)}
\text{ for } 1\leq i\leq r-v,
\end{equation*}
  $v_1=\sum_{1\leq j\leq s}(\tau_i-\delta+1)\leq r-v-(s-1)(\delta-1)<r-v$ and
\begin{equation*}
m_{\lambda_{i,j},z}=\frac{1}{(\alpha_z-e_{i,j})\prod_{\tau_i-\delta+2\leq t\leq \tau_i}(\alpha_z-e_{i,t})}
\end{equation*}
for $1\leq i\leq s$, $1\leq j\leq \tau_i-\delta+1$, and $1\leq z\leq r-v$.

Recall that ${\prod_{\tau_i-\delta+2\leq j\leq
    \tau_i}(e_{i,t}-e_{i,j})}/{\prod_{\theta_1\in
    S_i\setminus\{e_{i,t}\}}(e_{i,t}-\theta_1)}$ is a nonzero constant
for $1\leq i\leq s$ and $1\leq t\leq \tau_i-\delta+1$. Thus,
recovering the vector
$(f_1(e_{1,1}),\dots,f_1(e_{1,\tau_1-\delta+1}),\dots,
f_s(e_{s,\tau_s-\delta+1}))$ is equivalent to recovering the vector
$(\varpi_{1,1},\dots,\varpi_{1,\tau_1-\delta+1},\dots,\varpi_{s,\tau_s-\delta+1})$. Note
that the equation \eqref{eqn_cond_recov} has at least one
solution, namely, the solution that corresponds to the original
codeword.  Thus, by \eqref{eqn_cond_recov},
$(f_1(e_{1,1}),\dots,f_1(e_{1,\tau_1-\delta+1}),\dots,
f_s(e_{s,\tau_s-\delta+1}))$ is recoverable if and only if the
solution is unique, i.e., the rank of $M$ is $v_1$, or equivalently, there exist $v_1$
columns of $M$ forming a non-singular sub-matrix.  Recall that by the
induction hypothesis, the erasure pattern
$E_1,E_2,\dots,E_s\setminus\{e_{s,\tau_s-\delta+1}\}$ is recoverable,
i.e., there exists a $(v_1-1)\times (v_1-1)$ matrix with
\begin{equation}\label{eqn_induc}
  \det
 \begin{pmatrix}
 m_{\lambda_{1,1},t_1} & m_{\lambda_{1,1},t_2} & \dots  & m_{\lambda_{1,1},t_{v_1-1}} \\
 m_{\lambda_{1,2},t_1} & m_{\lambda_{1,2},t_2} & \dots  & m_{\lambda_{1,2},t_{v_1-1}} \\
\vdots & \vdots & \dots  & \vdots \\
 m_{\lambda_{1,\tau_1-\delta+1},t_1} & m_{\lambda_{1,\tau_1-\delta+1},t_2} & \dots  & m_{\lambda_{1,\tau_1-\delta+1},t_{v_1-1}} \\
 \vdots & \vdots & \dots  & \vdots \\
 m_{\lambda_{s,\tau_s-\delta},t_1} & m_{\lambda_{s,\tau_s-\delta},t_2} & \dots  & m_{\lambda_{s,\tau_s-\delta},t_{v_1-1}} \\
 \end{pmatrix}\ne0.
\end{equation}

If the erasure pattern $E_1,E_2,\dots,E_s$ is not recoverable, then each $v_1\times v_1$ sub-matrix of $M$ is singular.
Thus, $\alpha_i$ for $1\leq i\leq r-v$ are roots of $h(x)=0$ with
\begin{equation}\label{eqn_h_x}
h(x)=\det\begin{pmatrix}
 m_{\lambda_{1,1},t_1} & m_{\lambda_{1,1},t_2} & \dots  & m_{\lambda_{1,1},t_{v_1-1}} & m_{\lambda_{1,1}}(x) \\
 m_{\lambda_{1,2},t_1} & m_{\lambda_{1,2},t_2} & \dots  & m_{\lambda_{1,2},t_{v_1-1}} & m_{\lambda_{1,2}}(x) \\
\vdots & \vdots & \dots  & \vdots \\
 m_{\lambda_{1,\tau_1-\delta+1},t_1} & m_{\lambda_{1,\tau_1-\delta+1},t_2} & \dots  & m_{\lambda_{1,\tau_1-\delta+1},t_{v_1-1}} & m_{\lambda_{1,\tau_1-\delta+1}}(x)\\
 \vdots & \vdots & \dots  & \vdots \\
 m_{\lambda_{s,\tau_s-\delta+1},t_1} & m_{\lambda_{s,\tau_s-\delta+1},t_2} & \dots  & m_{\lambda_{s,\tau_s-\delta+1},t_{v_1-1}} & m_{\lambda_{s,\tau_s-\delta+1}}(x)\\
 \end{pmatrix},
\end{equation}
where
\begin{equation}\label{eqn_m_lambda}
m_{\lambda_{i,j}}(x)=\frac{1}{(x-e_{i,j})\prod_{\tau_i-\delta+2\leq t\leq \tau_i}(x-e_{i,t})}\text{ for }1\leq i\leq s\text{ and }1\leq j\leq \tau_i-\delta+1.
\end{equation}
Note that $h(x)\prod_{1\leq u\leq s}\prod_{\theta\in E_u}(x-\theta)$
is a polynomial with degree less than $\sum_{1\leq i\leq
  s}\tau_i-\delta\leq r-v+\delta-1-\delta=r-v-1$ and $\alpha_i$ for
$1\leq i\leq r-v$ are its roots, hence $h(x)\prod_{1\leq u\leq
  s}\prod_{\theta\in E_u}(x-\theta)\equiv0$. However, for $1\leq i,i_1\leq
s$, $1\leq j\leq \tau_i-\delta+1$ and $1\leq j_1\leq
\tau_{i_1}-\delta+1$, \eqref{eqn_reorder} means that
$e_{i,j}\not\in\{e_{i_1,j_1}\}\cup \{e_{i_1,t} ~:~ \tau_i-\delta+2\leq
t\leq \tau_i\}$ when $(i,j)\ne (i_1,j_1)$.  It follows that for $1\leq
i\leq s$ and $1\leq j\leq \tau_i-\delta+1$, $e_{i,j}$ is a root of
$m_{\lambda_{i_1,j_1}}(x)\prod_{1\leq u\leq s}\prod_{\theta\in
  E_u}(x-\theta)=0$ for all $(i_1,j_1)\ne (i,j)$ with $1\leq i_1\leq
s$ and $1\leq j_1\leq \tau_{i_1}-\delta+1$. Again by
\eqref{eqn_reorder}, $e_{i,j}$ for $1\leq i\leq s$ and $1\leq j\leq
\tau_i-\delta+1$ only appears in one of $E_t$ for $1\leq t\leq s$,
i.e.,
$$(x-e_{i,j})\bigg| \prod_{1\leq u\leq s}\prod_{\theta\in
 E_u}(x-\theta),$$
however,
$$(x-e_{i,j})^2\not\bigg|(\prod_{1\leq u\leq s}\prod_{\theta\in
  E_u}(x-\theta)),$$ for $1\leq i\leq s$ and $1\leq j\leq
\tau_i-\delta+1$.  By \eqref{eqn_m_lambda}, we have that $e_{i,j}$ is
not a root of $m_{\lambda_{i,j}}(x)\prod_{1\leq u\leq
  s}\prod_{\theta\in E_u}(x-\theta)=0$ for $1\leq i\leq s$ and $1\leq
j\leq \tau_i-\delta+1$.  Thus, the polynomials
$m_{\lambda_{i,j}}(x)\prod_{1\leq u\leq s}\prod_{\theta\in
  E_u}(x-\theta)$ for $1\leq i\leq s$ and $1\leq j\leq
\tau_i-\delta+1$ are linearly independent over $\F_q$. Therefore,
$h(x)\prod_{1\leq u\leq s}\prod_{\theta\in E_u}(x-\theta)\equiv0$ implies
that the coefficients of $m_{\lambda_{i,j}}(x)\prod_{1\leq u\leq
  s}\prod_{\theta\in E_u}(x-\theta)$ for $1\leq i\leq s$ and $1\leq
j\leq \tau_i-\delta+1$ in $h(x)\prod_{1\leq u\leq s}\prod_{\theta\in
  E_u}(x-\theta)$ are $0$. This is to say, the coefficient of
$m_{\lambda_{s,\tau_{s}-\delta+1}}(x)\prod_{1\leq u\leq
  s}\prod_{\theta\in E_u}(x-\theta)$ in $h(x)\prod_{1\leq u\leq
  s}\prod_{\theta\in E_u}(x-\theta)$ is zero, i.e.,
\begin{equation*}
\det\begin{pmatrix}
 m_{\lambda_{1,1},t_1} & m_{\lambda_{1,1},t_2} & \dots  & m_{\lambda_{1,1},t_{v_1-1}} \\
 m_{\lambda_{1,2},t_1} & m_{\lambda_{1,2},t_2} & \dots  & m_{\lambda_{1,2},t_{v_1-1}} \\
\vdots & \vdots & \dots  & \vdots \\
 m_{\lambda_{1,\tau_1-\delta+1},t_1} & m_{\lambda_{1,\tau_1-\delta+1},t_2} & \dots  & m_{\lambda_{1,\tau_1-\delta+1},t_{v_1-1}} \\
 \vdots & \vdots & \dots  & \vdots \\
 m_{\lambda_{s,\tau_s-\delta},t_1} & m_{\lambda_{s,\tau_s-\delta},t_2} & \dots  & m_{\lambda_{s,\tau_s-\delta},t_{v_1-1}} \\
 \end{pmatrix}=0,
 \end{equation*}
 which is a contradiction with \eqref{eqn_induc}. Thus, the erasure
 pattern $E_1,E_2,\dots,E_s$ is also recoverable.

For the second case of the induction step, assume $\ell=s+1\leq
\frac{d_1-1}{\delta}$ sets and $|E_{s+1}|=\delta$, when
$T<d_1-\delta\leq r-v$.  In this case, by a similar analysis, we have $s+1\leq
\mu$, and thus we also have
\begin{equation*}
\{e_{i,t} ~:~ 1\leq t\leq \tau_i-\delta+1\}\cap E_j=\phi\text{ for }1\leq i\ne j\leq s+1,
\end{equation*}
with $E_i=\{e_{i,t} ~:~ 1\leq t\leq \tau_i\}$ for $1\leq i\leq s+1$,
and
\begin{equation*}
\begin{split}
&\left(\varpi_{1,1},\dots,\varpi_{1,\tau_1-\delta+1},\dots,\varpi_{s,\tau_s-\delta+1},\varpi_{s+1,1}\right)M_{s+1}\\
=&\left(\varpi_{1,1},\dots,\varpi_{1,\tau_1-\delta+1},\dots,\varpi_{s,\tau_s-\delta+1},\varpi_{s+1,1}\right)
\begin{pmatrix}
 m_{\lambda_{1,1},1} & m_{\lambda_{1,1},2} & \dots  & m_{\lambda_{1,1},r-v} \\
 m_{\lambda_{1,2},1} & m_{\lambda_{1,2},2} & \dots  & m_{\lambda_{1,2},r-v} \\
\vdots & \vdots & \dots  & \vdots \\
 m_{\lambda_{1,\tau_1-\delta+1},1} & m_{\lambda_{1,\tau_1-\delta+1},2} & \dots  & m_{\lambda_{1,\tau_1-\delta+1},r-v} \\
 \vdots & \vdots & \dots  & \vdots \\
 m_{\lambda_{s+1,1},1} & m_{\lambda_{s+1,1},2} & \dots  & m_{\lambda_{t_{s+1,1}},r-v} \\
 \end{pmatrix}_{v_2\times (r-v)}\\
 =&(w_1,w_2,\dots,w_{r-v}),
\end{split}
\end{equation*}
where $(w_1,w_2,\dots,w_{r-v})$ is a constant vector determined by the accessible code symbols, $v_2=\sum_{1\leq j\leq s+1}(\tau_i-\delta+1)\leq T+\delta-(s+1)(\delta-1)
<r-v+1-s(\delta-1)\leq r-v$, and
\begin{equation*}
m_{\lambda_{i,j},z}=\frac{1}{(\alpha_z-e_{i,j})\prod_{\tau_i-\delta+2\leq t\leq \tau_i}(\alpha_z-e_{i,t})}
\end{equation*}
for $1\leq i\leq s+1$, $1\leq j\leq \tau_i+\delta-1$, and $1\leq z\leq
r-v$.  Again by the induction hypothesis, there should exists a
$(v_2-1)\times (v_2-1)$ matrix with
\begin{equation}\label{eqn_induc_2}
 \det\begin{pmatrix}
 m_{\lambda_{1,1},t_1} & m_{\lambda_{1,1},t_2} & \dots  & m_{\lambda_{1,1},t_{v_2-1}} \\
 m_{\lambda_{1,2},t_1} & m_{\lambda_{1,2},t_2} & \dots  & m_{\lambda_{1,2},t_{v_2-1}} \\
\vdots & \vdots & \dots  & \vdots \\
 m_{\lambda_{1,\tau_1-\delta+1},t_1} & m_{\lambda_{1,\tau_1-\delta+1},t_2} & \dots  & m_{\lambda_{1,\tau_1-\delta+1},t_{v_2-1}} \\
 \vdots & \vdots & \dots  & \vdots \\
 m_{\lambda_{s,\tau_s-\delta},t_1} & m_{\lambda_{s,\tau_s-\delta},t_2} & \dots  & m_{\lambda_{s,\tau_s-\delta},t_{v_2-1}} \\
 \end{pmatrix}\ne0,
\end{equation}
i.e., the erasure pattern
$E_1,E_2,\dots,E_s,(E_{s+1}\setminus\{e_{s+1,1}\})$ is
recoverable. Here, $f_{s+1}(\theta)$ for $\theta\in
E_{s+1}\setminus\{e_{s+1,1}\}$ is recovered by the
$(r,\delta)$-locality independently, since
$|E_{s+1}\setminus\{e_{s+1,1}\}|=\delta-1$. If $E_1,E_2,\dots,E_{s+1}$
is not recoverable, then all the $v_2\times v_2$ sub-matrices of $M_{s+1}$
are singular.  Therefore, by the same analysis, the polynomials
$m_{\lambda_{i,j}}(x)\prod_{1\leq u\leq s+1}\prod_{\theta\in
  E_u}(x-\theta)$ for $1\leq i\leq s+1$ and $1\leq j\leq
\tau_i-\delta+1$ are linearly independent over $\F_q$. This is also a
contradiction with \eqref{eqn_induc_2} and all the $v_2\times v_2$
sub-matrices of $M_{s+1}$ are singular, by the same analysis as the
previous case. Thus, the erasure pattern $E_1,E_2,\dots,E_{s+1}$ is
also recoverable.

Therefore, by mathematical induction, the distance of $\cC$ satisfies
$d\geq d_1$, which completes the proof.
\end{proof}

\subsection{Explicit locally repairable codes with $n>q$}

According to the bound of Lemma \ref{lemma_bound_i}, the minimal Hamming
distance of the code $\cC$ generated by Construction \ref{cons}, i.e, $n=w(r+\delta-1)$ and $k=(w-1)r+v$ for $0< v\leq r$, is at
most $r-v+\delta$. In fact,
the key point in applying Theorem \ref{thm_optimal_code} is to find sets
$S_1,\dots,S_w$ of evaluation points, that both allow optimal code
construction with  the minimal Hamming distance $d=r-v+\delta$ as well a long code. In this subsection, based on Construction A,
we analyze special structures of $S_1,\dots,S_w$ that can yield optimal locally
repairable codes with $n>q$.

\vspace{2mm}
\noindent\textbf{Two trivial optimal locally repairable codes with $n>q$}

\begin{corollary}\label{corollary_no_limited_length}
  Let $n=w(r+\delta-1)$, $k=(w-1)r+v$, $1\leq v\leq r$, be
  integers. If $r-v \leq \delta$ and $q\geq 2r+\delta-v-1$, then there exists
  an optimal $[n,k,d=r-v+\delta]_q$ linear code with all symbol
  $(r,\delta)$-locality, where optimality is with respect to the bound
  in Lemma \ref{lemma_bound_i}.
\end{corollary}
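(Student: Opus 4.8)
The plan is to invoke Theorem \ref{thm_optimal_code} with the parameter $\mu=1$, so that the only condition \eqref{eqn_cond_mu} that needs to be checked is vacuous (a singleton family $\cR=\{S'\}$ has empty union on the right-hand side), and then observe that the resulting distance bound $d\geq\min\{r-v+\delta,(\mu+1)\delta\}=\min\{r-v+\delta,2\delta\}$ collapses to exactly $r-v+\delta$ precisely because of the hypothesis $r-v\leq\delta$. Concretely, first I would note that since $r-v\leq\delta$ we have $r-v+\delta\leq 2\delta$, so $\min\{r-v+\delta,2\delta\}=r-v+\delta$. Thus Theorem \ref{thm_optimal_code} with $\mu=1$ yields an $[n,k,d]_q$ linear code with all symbol $(r,\delta)$-locality and $d\geq r-v+\delta$.

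Next I would verify that Construction \ref{cons} can actually be carried out over $\F_q$ under the field-size hypothesis $q\geq 2r+\delta-v-1$. The construction requires, for each $i\in[w-1]$, a set $S_i$ of $r+\delta-1$ distinct evaluation points, a set $\{\alpha_t:1\leq t\leq r-v\}$ of $r-v$ further distinct points disjoint from every $S_i$, and a set $S_w$ of $r+\delta-1$ points disjoint from the $\alpha_t$'s. The point is that the sets $S_i$ for different $i$ need \emph{not} be disjoint from one another — condition \eqref{eqn_cond_mu} with $\mu=1$ imposes no restriction across distinct repair sets — so it suffices to exhibit a \emph{single} subset of $\F_q$ of size $(r+\delta-1)+(r-v)=2r+\delta-v-1$, take $S_1=\dots=S_{w-1}=S_w$ to be its first $r+\delta-1$ elements and the $\alpha_t$'s to be the remaining $r-v$ elements. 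This is possible exactly when $q\geq 2r+\delta-v-1$, which is our hypothesis; indeed this matches the remark in the proof of Theorem \ref{thm_optimal_code} that in the $\ell=1$ (equivalently $\mu=1$) regime the field-size requirement is $q\geq 2r+\delta-v-1$.

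Finally I would confirm optimality with respect to Lemma \ref{lemma_bound_i}. Since $k=(w-1)r+v$ with $1\leq v\leq r$, we have $\lceil k/r\rceil=w$, and the bound of Lemma \ref{lemma_bound_i} reads $d\leq n-k+1-(w-1)(\delta-1)=w(r+\delta-1)-((w-1)r+v)+1-(w-1)(\delta-1)=r-v+\delta$. Combining this upper bound with the lower bound $d\geq r-v+\delta$ established above forces $d=r-v+\delta$, so the code attains the bound of Lemma \ref{lemma_bound_i} with equality and is therefore optimal. I do not anticipate a serious obstacle here: the only slightly delicate point is the bookkeeping that shows the union-of-repair-sets disjointness requirements of Construction \ref{cons} (namely $S_i\cap\{\alpha_t\}=\emptyset$ and $S_w\cap\{\alpha_t\}=\emptyset$) are compatible with reusing the \emph{same} evaluation set across the $S_i$'s, and that the count $2r+\delta-v-1$ is exactly right; once that is in place, everything else is a direct application of Theorem \ref{thm_optimal_code} and Lemma \ref{lemma_bound_i}.
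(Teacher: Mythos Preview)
Your proposal is correct and follows essentially the same route as the paper. The paper's proof is terser: it observes that $r-v\leq\delta$ forces the induction in the proof of Theorem~\ref{thm_optimal_code} to reduce to the base case $\ell=1$, where the field-size requirement $q\geq 2r+\delta-v-1$ is already stated and \eqref{eqn_cond_mu} is not needed; your argument packages this same observation as ``apply Theorem~\ref{thm_optimal_code} with $\mu=1$'' and then spells out the bookkeeping (reusing a single evaluation set for all $S_i$, computing $\lceil k/r\rceil=w$), which is fine.
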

\begin{proof}
  By Lemma \ref{lemma_bound_i}, a code with the given $n$ and $k$ is
  optimal if $d=r-v+\delta$.  Since $r-v \leq \delta$, in the proof of
  Theorem \ref{thm_optimal_code} we only need to consider the case
  that there is only one repair set containing strictly more than $\delta-1$ erasures, which easily holds.
\end{proof}

\begin{remark}
  We remark that in the case described in Corollary
  \ref{corollary_no_limited_length}, we can let $S_i=S_j$ for $1\leq
  i\ne j\leq w$.  In this case,  $r-v \le \delta$
  and $q \ge 2r+\delta-v-1$ are sufficient for the code
  generated by Construction A to be optimal. This is to say, the value
  $w$ is independent of $q$.
  Thus, the length $n=w(r+\delta-1)$ of the code $\cC$ can be as long as
  we wish. This result is already known for the case $\delta=2$ and
  $d\leq 4$ (see \cite{LXY}), and is, to the best of our knowledge,
  new for the case of $\delta>2$. This result also shows that the condition
  $2t+1>4$ is necessary for Theorem \ref{theorem_bound_delta>2}, since the code
   length is unbounded for the
  case $2t+1\leq 4$, i.e., $t\leq 1$ corresponding to the case $r-v\leq\delta$, where $t=\lfloor(d-1)/\delta\rfloor=\lfloor\frac{r+v+\delta-1}{\delta}\rfloor$.
\end{remark}

\begin{corollary}\label{corollary_optimal_flexible}
  Let $n=w(r+\delta-1)$, $k=(w-1)r+v$, $1\leq v\leq r$, be
  integers. Let $S\subseteq\F_q\setminus\{\alpha_i ~:~ 1\leq i\leq
  r-v\}$, $\abs{S}=\delta-1$, be a fixed subset. Take $S_i\subseteq
  \F_q\setminus\{\alpha_i ~:~ 1\leq i\leq r-v\}$ for $1\leq i\leq
  w$. If $S_i\cap S_j\subseteq S$ for $1\leq i\ne j\leq w$, then the
  code $\cC$ generated by Construction \ref{cons} is an optimal
  $[n,k,d=r-v+\delta]_q$ linear code with all symbol
  $(r,\delta)$-locality, where optimality is with respect to the bound
  in Lemma \ref{lemma_bound_i}.
\end{corollary}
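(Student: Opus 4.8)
The plan is to show that the hypothesis $S_i\cap S_j\subseteq S$ with $\abs{S}=\delta-1$ forces any subfamily of the $S_i$'s to satisfy the overlap condition \eqref{eqn_cond_mu} of Theorem \ref{thm_optimal_code}, so that the distance bound from that theorem already gives $d\geq r-v+\delta$, which by Lemma \ref{lemma_bound_i} is exactly optimal for the given $n$ and $k$. Concretely, I would apply Theorem \ref{thm_optimal_code} with $\mu=w$ (i.e., with no restriction on the size of the subfamily considered).

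\textbf{Step 1: Verify the overlap condition.} Fix any $\cR\subseteq\{S_i : 1\leq i\leq w\}$ and any $S'\in\cR$; without loss of generality $S'=S_{i_0}$. Then
\begin{equation*}
S_{i_0}\cap\Bigl(\bigcup_{S_j\in\cR\setminus\{S_{i_0}\}}S_j\Bigr)=\bigcup_{j\neq i_0}(S_{i_0}\cap S_j)\subseteq S,
\end{equation*}
since each $S_{i_0}\cap S_j\subseteq S$ by hypothesis. Hence the left-hand side has cardinality at most $\abs{S}=\delta-1<\delta$, so \eqref{eqn_cond_mu} holds for every subfamily, i.e., we may take $\mu=w$ in Theorem \ref{thm_optimal_code}.

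\textbf{Step 2: Extract the distance bound.} Theorem \ref{thm_optimal_code} then yields that $\cC$ is an $[n,k,d]_q$ code with all symbol $(r,\delta)$-locality and $d\geq\min\{r-v+\delta,(w+1)\delta\}$. It remains only to observe that $r-v+\delta\leq(w+1)\delta$: indeed $r-v\leq r-1<r+\delta-1\leq w(r+\delta-1)$ trivially when $w\geq 1$, and more simply $r-v<r+\delta\leq(w+1)\delta$ since $r\leq r+\delta-1$ and $\delta\geq 1$ — so the minimum is $r-v+\delta$ and $d\geq r-v+\delta$.

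\textbf{Step 3: Optimality.} Since $n=w(r+\delta-1)$ and $k=(w-1)r+v$ with $1\leq v\leq r$, we have $\lceil k/r\rceil=w$ when $v\geq 1$, so the bound of Lemma \ref{lemma_bound_i} reads
\begin{equation*}
d\leq n-k+1-(w-1)(\delta-1)=w(r+\delta-1)-(w-1)r-v+1-(w-1)(\delta-1)=r-v+\delta.
\end{equation*}
Combining with Step 2 gives $d=r-v+\delta$, so $\cC$ attains the bound of Lemma \ref{lemma_bound_i} with equality and is therefore optimal. The only point requiring care — and really the only place the hypothesis is used — is Step 1; everything else is bookkeeping with the parameters, and since Theorem \ref{thm_optimal_code} has already done the heavy lifting of proving the erasure-correction guarantee, there is no genuine obstacle here beyond confirming that $\mu=w$ is admissible, i.e., that the combinatorial condition $S_i\cap S_j\subseteq S$ is exactly what \eqref{eqn_cond_mu} demands.
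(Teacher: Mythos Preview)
Your overall strategy is exactly the intended one (the paper states this corollary without proof, implicitly as a consequence of Theorem~\ref{thm_optimal_code}), and Steps~1 and~3 (the optimality computation) are correct. However, Step~2 contains a genuine gap.

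You claim $r-v+\delta\leq(w+1)\delta$, i.e.\ $r-v\leq w\delta$, but neither of your arguments proves it: the first chain only gives $r-v<w(r+\delta-1)$, and the second asserts $r+\delta\leq(w+1)\delta$, equivalently $r\leq w\delta$, which is false in general. For a concrete counterexample take $w=2$, $r=10$, $v=1$, $\delta=2$: then $r-v=9>4=w\delta$, and Theorem~\ref{thm_optimal_code} applied literally with $\mu=w=2$ yields only $d\geq\min\{11,6\}=6$, not $d\geq 11$.

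The repair is short but requires looking inside the proof of Theorem~\ref{thm_optimal_code} rather than citing its statement. The $(\mu+1)\delta$ term in the theorem is there solely to guarantee that the number $s$ of repair sets carrying at least $\delta$ erasures satisfies $s\leq\mu$, so that \eqref{eqn_cond_mu} may be applied to the corresponding $s$-subset. In your situation you have established \eqref{eqn_cond_mu} for \emph{every} subfamily, and since there are only $w$ repair sets altogether one has $s\leq w$ trivially. Re-running the induction of Theorem~\ref{thm_optimal_code} with the target $d_1=r-v+\delta$ and using $s\leq w$ directly (in place of the chain $s<(d_1-1)/\delta\leq\mu$) then gives $d\geq r-v+\delta$ with no further restriction. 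Once you make this observation explicit, your proof is complete.
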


\begin{corollary}\label{corollary_optimal_flexible-m1}
  Let $n=w(r+\delta-1)$, $k=(w-1)r+v$, $1\leq v\leq r$,  be
  integers. If $q \ge (w+1)r+\delta-v-1$, then there exists an optimal
  $[n,k,d=r-v+\delta]_q$ linear code with all symbol
  $(r,\delta)$-locality, where optimality is with respect to the bound
  in Lemma \ref{lemma_bound_i}.
\end{corollary}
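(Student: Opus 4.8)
The plan is to instantiate Construction~\ref{cons} with a well-chosen family of evaluation sets and then appeal to Corollary~\ref{corollary_optimal_flexible}. First note that for $n=w(r+\delta-1)$ and $k=(w-1)r+v$ with $1\le v\le r$ one has $\lceil k/r\rceil=w$, so Lemma~\ref{lemma_bound_i} gives $d\le n-k+1-(w-1)(\delta-1)=r-v+\delta$; hence being ``optimal'' here means precisely having minimum distance equal to $r-v+\delta$. Corollary~\ref{corollary_optimal_flexible} delivers exactly such a code as soon as we exhibit a fixed set $S\subseteq\F_q\setminus\{\alpha_1,\dots,\alpha_{r-v}\}$ with $\abs{S}=\delta-1$, together with sets $S_1,\dots,S_w\subseteq\F_q\setminus\{\alpha_1,\dots,\alpha_{r-v}\}$ of size $r+\delta-1$ each, satisfying $S_i\cap S_j\subseteq S$ for all $i\ne j$. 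So the whole task reduces to a small packing problem inside $\F_q$.

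The design I would use is ``shared core plus private blocks''. Exploiting the hypothesis $q\ge(w+1)r+\delta-v-1$, pick inside $\F_q$ the following pairwise disjoint sets: the $r-v$ auxiliary evaluation points $\alpha_1,\dots,\alpha_{r-v}$; a core $S$ of size $\delta-1$; and $w$ private blocks $T_1,\dots,T_w$, each of size $r$. Altogether these require $(r-v)+(\delta-1)+wr=(w+1)r+\delta-v-1$ distinct elements of $\F_q$, which is exactly the number guaranteed by the hypothesis, so the choice is feasible. Now set $S_i=S\cup T_i$ for $1\le i\le w$. Then $\abs{S_i}=(\delta-1)+r=r+\delta-1$ as required, and since the $T_i$ are pairwise disjoint and each is disjoint from $S$, we get $S_i\cap S_j=(S\cup T_i)\cap(S\cup T_j)=S$ for every $i\ne j$, in particular $S_i\cap S_j\subseteq S$. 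Finally $S$ and every $T_i$ were chosen disjoint from $\{\alpha_1,\dots,\alpha_{r-v}\}$, so $S,S_1,\dots,S_w\subseteq\F_q\setminus\{\alpha_1,\dots,\alpha_{r-v}\}$, and all the disjointness requirements built into Steps~1--3 of Construction~\ref{cons} are met.

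With this data, Corollary~\ref{corollary_optimal_flexible} applies directly and produces an optimal $[n,k,d=r-v+\delta]_q$ linear code with all symbol $(r,\delta)$-locality, which is the assertion to be proved. I do not expect a genuine obstacle: once Corollary~\ref{corollary_optimal_flexible} (and behind it the correctness analysis of Construction~\ref{cons} in Theorem~\ref{thm_optimal_code}) is in hand, the argument is purely a counting/packing step. The only point that merits care is the bookkeeping --- one must keep the $r-v$ interpolation aides $\alpha_t$, the $\delta-1$ coordinates of the shared core $S$, and the $wr$ coordinates of the private blocks all pairwise distinct so that the tally is exactly $(w+1)r+\delta-v-1$, and one must remember that $S_w$ too must avoid the $\alpha_t$ (Step~3), which the disjointness of the chosen blocks handles automatically.
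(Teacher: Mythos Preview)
Your proposal is correct and follows essentially the same approach as the paper. The paper's own proof is a one-line invocation of Corollary~\ref{corollary_optimal_flexible} via a sunflower construction with center $S$ (i.e., $S_i\cap S_j=S$ for all $i\ne j$), which is precisely your ``shared core plus private blocks'' design; your write-up simply makes the counting $(r-v)+(\delta-1)+wr=(w+1)r+\delta-v-1$ and the disjointness from the $\alpha_t$'s explicit.
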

\begin{proof}
  When $q \ge (w+1)r+\delta-v-1$, those $S_i$'s in Corollary \ref{corollary_optimal_flexible}
  can be easily constructed by letting $|S|=\delta-1$ and $S_i \cap S_j = S$ for all $1 \le i \ne j \le w$,
  which form a sunflower with center $S$.
\end{proof}

\begin{remark}
 When $w > 1 + \frac{r-v}{\delta-1}$,
  the optimal linear codes with all symbol
  $(r,\delta)$-locality in Corollary \ref{corollary_optimal_flexible-m1} are all with $n>q$.
  In \cite{KBTY}, optimal locally repairable codes are also
  constructed with flexible parameters. However, in \cite{KBTY} the
  construction is based on the so-called good polynomials \cite{TB,LMC} and $n\leq q$.
\end{remark}

\vspace{2mm}
\noindent\textbf{Optimal locally repairable codes with $n>q$ based on union-intersection-bounded family}

A combinatorial structure that captures the interaction between the
evaluation-point sets, $S_1,\dots,S_w$, in Construction \ref{cons} is
a union-intersection-bounded family \cite{GM}. Its definition is now
given:

\begin{definition}[\cite{GM}]\label{def_UIBF}
Let $n_1,\tau,\delta,t,s$ be positive integers such that $n_1\geq
\tau\geq 2$, $\tau\geq \delta$ and $t\geq s$. The
$(s,t;\delta)$-union-intersection-bounded family (denoted by
$(s,t;\delta)$-UIBF$(\tau,n_1)$) is a pair $(\cX,\cS)$, where $\cX$ is a
set of $n_1$ elements (called points) and $\cS\subseteq 2^\cX$ is a
collection of $\tau$-subsets of $\cX$ (called blocks), such that any
$s+t$ distinct blocks $A_1,A_2,\dots,A_s,B_1,B_2,\dots,B_t\in \cS$
satisfy
$$\left|\left(\bigcup_{1\leq i\leq s}A_i\right)\bigcap \left(\bigcup_{1\leq i\leq t}B_i\right)\right|<\delta.$$
\end{definition}

The following corollary follows from Theorem
\ref{thm_optimal_code} and Lemma \ref{lemma_bound_i}.

\begin{corollary}\label{corollary_optimal_code}
Let $n=w(r+\delta-1)$, $k=(w-1)r+v$, $1\leq v\leq r$, be integers, and
let $\mu$ be a positive integer with $\mu\delta\geq r-v$.  If
$(\F_q\setminus\{\alpha_t ~:~ 1\leq t\leq r-v\},\cS=\{S_i ~:~ 1\leq
i\leq w\})$ is a $(1,\mu-1;\delta)$-UIBF$(r+\delta-1,q-r+v)$ then the
code $\cC$ generated by Construction \ref{cons} is an optimal
$[n,k,d=r-v+\delta]_q$ linear code with all symbol
$(r,\delta)$-locality, where optimality is with respect to the bound in Lemma \ref{lemma_bound_i}.
\end{corollary}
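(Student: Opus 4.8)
The plan is to read the statement off from Theorem~\ref{thm_optimal_code} together with the Singleton-type bound of Lemma~\ref{lemma_bound_i}; all that is really needed is to see that the prescribed $\mu$ makes the hypothesis of Theorem~\ref{thm_optimal_code} hold, and then that the distance it guarantees meets the bound of Lemma~\ref{lemma_bound_i} with equality.

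First I would translate the UIBF condition into hypothesis~\eqref{eqn_cond_mu}. By Definition~\ref{def_UIBF}, a $(1,\mu-1;\delta)$-UIBF$(r+\delta-1,q-r+v)$ on the point set $\F_q\setminus\{\alpha_t ~:~ 1\leq t\leq r-v\}$ has all blocks of size $r+\delta-1$, so the sets $S_i$ have exactly the cardinality required in Construction~\ref{cons}, and it guarantees that any $1+(\mu-1)=\mu$ distinct blocks $A_1,B_1,\dots,B_{\mu-1}$ satisfy $\abs{A_1\cap(\bigcup_{1\leq i\leq\mu-1}B_i)}<\delta$. Since any one of the $\mu$ chosen blocks may be taken as $A_1$ here, this is precisely the assertion that for every $\mu$-subset $\cR\subseteq\{S_i ~:~ 1\leq i\leq w\}$ and every $S'\in\cR$ we have $\abs{S'\cap(\bigcup_{S\in\cR\setminus\{S'\}}S)}<\delta$, i.e.\ exactly~\eqref{eqn_cond_mu}. (When $\mu=1$, or when $w<\mu$, the sets $\cR$ either do not exist or have $\cR\setminus\{S'\}=\emptyset$, so~\eqref{eqn_cond_mu} holds trivially.)

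Next I would apply Theorem~\ref{thm_optimal_code} to conclude that $\cC$ is an $[n,k,d]_q$ linear code with all symbol $(r,\delta)$-locality and $d\geq\min\{r-v+\delta,(\mu+1)\delta\}$. Since $\mu\delta\geq r-v$, we get $(\mu+1)\delta=\mu\delta+\delta\geq r-v+\delta$, so the minimum equals $r-v+\delta$, whence $d\geq r-v+\delta$. Finally, invoking Lemma~\ref{lemma_bound_i}: as $k=(w-1)r+v$ with $1\leq v\leq r$ we have $\lceil k/r\rceil=w$, so the bound reads $d\leq n-k+1-(w-1)(\delta-1)$, and substituting $n=w(r+\delta-1)$, $k=(w-1)r+v$ simplifies this to $d\leq r-v+\delta$. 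Combining the two inequalities gives $d=r-v+\delta$, which is exactly the equality case of Lemma~\ref{lemma_bound_i}, i.e.\ $\cC$ is optimal.

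I do not expect a genuine obstacle: the argument is essentially bookkeeping, the substantive work having been done inside Theorem~\ref{thm_optimal_code}. The only points needing a little care are the precise matching of the UIBF parameters ($s=1$, $t=\mu-1$, block size $r+\delta-1$, point set $\F_q\setminus\{\alpha_t ~:~ 1\leq t\leq r-v\}$) against~\eqref{eqn_cond_mu}, and the short computation that reduces the Singleton-type bound to $r-v+\delta$.
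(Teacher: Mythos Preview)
Your proof is correct and follows essentially the same approach as the paper: translate the $(1,\mu-1;\delta)$-UIBF condition into~\eqref{eqn_cond_mu}, apply Theorem~\ref{thm_optimal_code}, and compare against the bound of Lemma~\ref{lemma_bound_i}. You are simply a bit more explicit than the paper in checking that $(\mu+1)\delta\geq r-v+\delta$ and in carrying out the arithmetic $\lceil k/r\rceil=w$ to reduce the bound to $r-v+\delta$.
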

\begin{proof}
By Definition \ref{def_UIBF}, each $\mu$-subset $\cR\subseteq\cS$ satisfies that for any $S'\in\cR$,
\begin{equation*}
\left|S'\bigcap \left(\bigcup_{S\in \cR\setminus\{S'\}}S\right)\right|< \delta.
\end{equation*}
By Lemma \ref{lemma_bound_i} we have $d\leq r-v+\delta$.  Thus, the desired
conclusion follows from Theorem \ref{thm_optimal_code} and Lemma
\ref{lemma_bound_i}.
\end{proof}

In \cite{GM}, a lower bound on the size of
$(1,\mu-1;\delta)$-UIBF$(r+\delta-1,q)$ is given, which immediately
implies a lower bound on the length of the codes generated by
Construction \ref{cons} according to Corollary
\ref{corollary_optimal_code}.

\begin{lemma}[\cite{GM}]\label{lemma_lower_B_UIBF}
Let $\mu,\delta,r, n_1$ be positive integers. Then there exists a
$(1,\mu-1;\delta)$-UIBF$(r+\delta-1,n_1)$ $(\cX,\cS)$ with
$|\cS|=\Omega({n_1}^{\frac{\delta}{\mu-1}})$, where $r$, $\delta$, $\mu$ are
regarded as constants.
\end{lemma}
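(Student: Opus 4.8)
The plan is to build the UIBF by a probabilistic deletion argument, which is the standard technique for packing-type existence results and is exactly what one expects behind a bound of the form $|\cS| = \Omega(n_1^{\delta/(\mu-1)})$. First I would fix $r$, $\delta$, $\mu$ as constants and set $\tau = r+\delta-1$. Choose a parameter $N = c\, n_1^{\delta/(\mu-1)}$ for a small constant $c$ to be pinned down at the end, and select $N$ blocks $A_1,\dots,A_N$ uniformly and independently at random among the $\binom{n_1}{\tau}$ $\tau$-subsets of an $n_1$-point set $\cX$. The forbidden configurations are the "bad" $\mu$-tuples: by Definition \ref{def_UIBF} with $s=1$, $t=\mu-1$, a $\mu$-tuple of distinct blocks $(S', S_1,\dots,S_{\mu-1})$ is bad exactly when $|S' \cap (S_1\cup\cdots\cup S_{\mu-1})| \geq \delta$. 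I would then bound the probability that a fixed ordered $\mu$-tuple of independent random blocks is bad: conditioning on the $\mu-1$ blocks $S_1,\dots,S_{\mu-1}$, their union has size at most $(\mu-1)\tau = O(1)$, so the probability that the remaining random block $S'$ meets it in at least $\delta$ points is at most $\binom{(\mu-1)\tau}{\delta}\binom{n_1-\delta}{\tau-\delta}/\binom{n_1}{\tau} = O(n_1^{-\delta})$, since each of the $\delta$ "hits" costs a factor $\Theta(1/n_1)$.

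The expected number of bad $\mu$-tuples among our $N$ blocks is therefore at most $N^\mu \cdot O(n_1^{-\delta})$, and with $N = c\, n_1^{\delta/(\mu-1)}$ this is $O(c^\mu\, n_1^{\delta\mu/(\mu-1) - \delta}) = O(c^\mu\, n_1^{\delta/(\mu-1)}) = O(c^{\mu-1}\cdot N)$. Choosing $c$ small enough makes this expectation at most $N/2$, so by Markov's inequality there is a choice of the $N$ blocks with at most $N/2$ bad $\mu$-tuples. (One should also ensure the $N$ blocks are distinct and that duplicates are deleted first; since the collision probability is $O(N^2/n_1^{\tau}) = o(N)$ for $\tau \geq 2$, this only costs another $o(N)$ blocks.) Deleting one block from each remaining bad $\mu$-tuple yields a collection $\cS$ of at least $N/2 - N/2 = \Omega(N) = \Omega(n_1^{\delta/(\mu-1)})$ blocks in which no bad $\mu$-tuple survives; that is, every $\mu$ distinct blocks $S', S_1,\dots,S_{\mu-1}\in\cS$ satisfy $|S'\cap(S_1\cup\cdots\cup S_{\mu-1})| < \delta$, which is precisely the $(1,\mu-1;\delta)$-UIBF property. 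Taking $\cX$ as the point set and $\cS$ as the blocks gives the desired $(1,\mu-1;\delta)$-UIBF$(r+\delta-1,n_1)$.

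The main obstacle — really the only delicate point — is getting the exponents to balance: one needs the single-tuple bad probability to be $\Theta(n_1^{-\delta})$ (not merely $O(n_1^{-\delta+1})$), which relies on the union $S_1\cup\cdots\cup S_{\mu-1}$ having bounded size so that forcing $\delta$ common points really costs $\delta$ independent factors of $1/n_1$; and then one needs $\mu \cdot \frac{\delta}{\mu-1} - \delta = \frac{\delta}{\mu-1}$ so that the expected number of bad configurations is only a constant fraction of $N$, which is what makes the deletion step lose only a constant factor. Everything else is routine: the independence of the random blocks makes all the probability estimates elementary, and the constants $r,\delta,\mu$ being fixed absorbs all the binomial-coefficient prefactors into the $\Omega(\cdot)$.
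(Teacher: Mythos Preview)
The paper does not supply a proof of this lemma; it is simply quoted from \cite{GM} as a known result, so there is no in-paper argument to compare against. Your probabilistic alteration argument is the standard route to such packing-type existence bounds and is correct in outline.

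Two minor slips worth fixing. First, the arithmetic ``at least $N/2 - N/2 = \Omega(N)$'' is garbled: you start with $N$ blocks and delete at most $N/2$ of them (one per bad tuple), leaving at least $N - N/2 = N/2 = \Omega(N)$; the duplicate-removal step costs only an additional $o(N)$. Second, Markov's inequality is not what you want here---if the expected number of bad $\mu$-tuples is at most $N/2$, then some realization has at most $N/2$ bad tuples simply because an average cannot be strictly exceeded by every term. With those corrections the argument goes through cleanly.
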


Based on Corollary \ref{corollary_optimal_code} and Lemma \ref{lemma_lower_B_UIBF}, we have the following:

\begin{corollary}\label{corollary_lower_B_length}
Let $n=w(r+\delta-1)$, $k=(w-1)r+v$, $1\leq v\leq r$, be integers, and
let $\mu$ be a positive integer with $\mu\delta\geq r-v$. Then
Construction \ref{cons} can generate an optimal (with respect to the
bound in Lemma \ref{lemma_bound_i}) $[n,k,d=r-v+\delta]_q$ linear code
$\cC$ with all symbol $(r,\delta)$-locality and length
$n=\Omega(q^{\frac{\delta}{\mu-1}})$ where we regard $r$, $\delta$, and $\mu$ as constants.
\end{corollary}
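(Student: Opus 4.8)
Corollary \ref{corollary_lower_B_length} combines two results that are already available in the excerpt, so the plan is essentially an assembly argument. First I would invoke Lemma \ref{lemma_lower_B_UIBF} with $n_1 = q - r + v$: since $r$ and $v$ are treated as constants, $n_1 = \Theta(q)$, so the lemma guarantees a $(1,\mu-1;\delta)$-UIBF$(r+\delta-1, q-r+v)$, call it $(\cX,\cS)$, with $\abs{\cS} = \Omega((q-r+v)^{\delta/(\mu-1)}) = \Omega(q^{\delta/(\mu-1)})$. The point set $\cX$ has $q-r+v$ elements; I would identify it with $\F_q \setminus \{\alpha_t : 1\le t\le r-v\}$, which is legitimate since both sets have the same cardinality. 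This fixes the evaluation-point sets $S_i$ for Construction \ref{cons}: set $w = \abs{\cS}$ and let $\{S_i : 1\le i\le w\}$ be the block collection $\cS$.

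Next I would check that the hypotheses of Corollary \ref{corollary_optimal_code} are met. The condition $\mu\delta \ge r-v$ is assumed, and by construction $(\F_q\setminus\{\alpha_t\},\{S_i\})$ is precisely a $(1,\mu-1;\delta)$-UIBF$(r+\delta-1,q-r+v)$. Hence Corollary \ref{corollary_optimal_code} applies directly and yields that the code $\cC$ produced by Construction \ref{cons} on these evaluation points is an optimal $[n,k,d=r-v+\delta]_q$ linear code with all symbol $(r,\delta)$-locality, where $n = w(r+\delta-1)$ and $k = (w-1)r+v$. Since $r+\delta-1$ is a constant, $n = w(r+\delta-1) = \Theta(w) = \Theta(\abs{\cS}) = \Omega(q^{\delta/(\mu-1)})$, which is exactly the claimed length bound.

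There is essentially no obstacle here: the corollary is a direct corollary, as its statement advertises. The only point that deserves a sentence of care is the asymptotic bookkeeping — making sure that absorbing $r$, $\delta$, $\mu$ (and hence $r-v$, $r+\delta-1$) into the hidden constants is done consistently, so that both the substitution $n_1 = q-r+v = \Theta(q)$ inside the $\Omega(\cdot)$ and the final multiplication by the constant $r+\delta-1$ are harmless. One should also note in passing that $v$ ranges over $1 \le v \le r$, so $r-v$ is a nonnegative constant and $q - r + v \ge q - r + 1 > 0$ for $q$ large, so the UIBF is nonempty and the identification with $\F_q \setminus \{\alpha_t : 1 \le t \le r-v\}$ makes sense. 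With these remarks, the proof is a two-line chain: Lemma \ref{lemma_lower_B_UIBF} $\Rightarrow$ existence of the required UIBF $\Rightarrow$ Corollary \ref{corollary_optimal_code} $\Rightarrow$ the stated optimal code of length $\Omega(q^{\delta/(\mu-1)})$.
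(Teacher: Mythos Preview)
Your proposal is correct and matches the paper's own approach exactly: the paper states the corollary immediately after the sentence ``Based on Corollary \ref{corollary_optimal_code} and Lemma \ref{lemma_lower_B_UIBF}, we have the following,'' giving no further argument, and your two-step chain (Lemma \ref{lemma_lower_B_UIBF} on $n_1=q-r+v$ to produce the UIBF, then Corollary \ref{corollary_optimal_code} to obtain the optimal code) is precisely the intended derivation. Your remarks about identifying $\cX$ with $\F_q\setminus\{\alpha_t\}$ and absorbing the constants $r,\delta,\mu$ into the asymptotic notation are the only things worth spelling out, and you have handled them correctly.
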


\vspace{2mm}
\noindent\textbf{Optimal locally repairable codes with $n>q$ based on packings or Steiner systems}

In the following, we consider some special sufficient conditions for \eqref{eqn_cond_mu}
to construct optimal linear codes with all symbol
$(r,\delta)$-locality.

\begin{theorem}\label{theorem_optimal_code_cond_a}
Let $n=w(r+\delta-1)$, $k=(w-1)r+v$, $1\leq v\leq r$, be integers, and
let $a$ be a positive integer.  If $|S_i\cap S_j|\leq a$ for $1\leq i\ne j\leq w$ and
$r-v\le\frac{\delta^2}{a}$, then the code $\cC$
generated by Construction \ref{cons} is an optimal
$[n,k,d=r-v+\delta]_q$ linear code with all symbol
$(r,\delta)$-locality, where optimality is with respect to the bound
in Lemma \ref{lemma_bound_i}.
\end{theorem}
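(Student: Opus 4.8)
The plan is to obtain Theorem~\ref{theorem_optimal_code_cond_a} as a direct consequence of Theorem~\ref{thm_optimal_code}, the only real work being to choose the auxiliary parameter $\mu$ correctly and to verify two elementary inequalities. First I would observe that with $k=(w-1)r+v$ and $1\le v\le r$ one always has $\ceilenv{k/r}=w$, so Lemma~\ref{lemma_bound_i} already forces $d\le r-v+\delta$; hence it suffices to prove the reverse inequality $d\ge r-v+\delta$, after which optimality and the all-symbol $(r,\delta)$-locality (which is immediate from Construction~\ref{cons}) follow at once.

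To get $d\ge r-v+\delta$ I would apply Theorem~\ref{thm_optimal_code} with $\mu=\max\bigl\{1,\ceilenv{(r-v)/\delta}\bigr\}$, so that $\mu\delta\ge r-v$ and therefore $\min\{r-v+\delta,(\mu+1)\delta\}=r-v+\delta$. The remaining task is to check that this $\mu$ meets the hypothesis \eqref{eqn_cond_mu}. For any $\mu$-subset $\cR\subseteq\{S_i:1\le i\le w\}$ and any $S'\in\cR$, a union bound together with the standing assumption $\abs{S_i\cap S_j}\le a$ gives
\[
\left|S'\cap\Bigl(\bigcup_{S\in\cR\setminus\{S'\}}S\Bigr)\right|
\le\sum_{S\in\cR\setminus\{S'\}}\abs{S'\cap S}\le(\mu-1)a,
\]
so it is enough to establish $(\mu-1)a\le\delta-1$.

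This last inequality is precisely where the hypothesis $r-v\le\delta^2/a$ is used. If $r-v\le\delta$ then $\mu=1$ and there is nothing to check. Otherwise $\mu=\ceilenv{(r-v)/\delta}$, and $(r-v)/\delta\le\delta/a$ combined with monotonicity of the ceiling yields $\mu\le\ceilenv{\delta/a}$, whence
\[
(\mu-1)a\le\bigl(\ceilenv{\delta/a}-1\bigr)a=\floorenv{(\delta-1)/a}\cdot a\le\delta-1,
\]
where I invoke the identity $\ceilenv{\delta/a}=\floorenv{(\delta-1)/a}+1$ valid for positive integers. Assembling the pieces gives $d\ge r-v+\delta$, hence $d=r-v+\delta$, and Lemma~\ref{lemma_bound_i} then certifies that $\cC$ is optimal.

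I do not expect a genuine obstacle here: the substantive content is entirely carried by Theorem~\ref{thm_optimal_code}, and the only delicate point is the bookkeeping that simultaneously secures $\mu\delta\ge r-v$ and $(\mu-1)a\le\delta-1$ out of the single assumption $r-v\le\delta^2/a$. That is the step I would write out most carefully, but it is a short arithmetic argument rather than a real difficulty.
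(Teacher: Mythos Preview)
Your proposal is correct and follows essentially the same route as the paper: reduce to Theorem~\ref{thm_optimal_code} via a union bound on pairwise intersections, with the key arithmetic inequality $(\lceil\delta/a\rceil-1)a\le\delta-1$ doing the work. The only cosmetic difference is that the paper takes $\mu=\lceil\delta/a\rceil$ directly (so that $\mu\delta\ge\delta^2/a\ge r-v$ and the intersection bound is immediate), whereas you choose the minimal $\mu$ with $\mu\delta\ge r-v$ and then bound it above by $\lceil\delta/a\rceil$; both arrive at the same place.
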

\begin{proof}
Denote $\cS=\mathset{S_1,\dots,S_w}$, and let
$\mu=\lceil\frac{\delta}{a}\rceil$. Then the fact that $|S_i\cap
S_j|\leq a$ means that for any $\mu$-subset, $\cR\subseteq\cS$, and
for any $S'\in\cR$, we have
\begin{equation*}
  \left|S'\cap \left(\bigcup_{S\in\cR\setminus\mathset{S'}}S\right)\right|\leq
  (\mu-1)a=\left(\left\lceil\frac{\delta}{a}\right\rceil-1\right)a
  \leq \delta-1.
\end{equation*}
Since $\mu\delta\geq \frac{\delta^2}{a}\ge r-v$, the conclusion follows by Theorem \ref{thm_optimal_code}.
\end{proof}

\begin{definition}(\cite[VI. 40]{CD})\label{def_Packing}
Let $n_1\geq 2$ be an integer and $u$ a positive integer.  A
$\tau$-$(n_1,t,1)$-\textit{packing} is a pair $(\cX,\cS)$, where $\cX$ is
a set of $n_1$ elements (called points) and $\cS\subseteq 2^\cX$ is a
collection of $t$-subsets of $\cX$ (called blocks), such that each
$\tau$-subset of $\cX$ is contained in at most one block of
$\cS$. Furthermore, if each $\tau$-subset of $\cX$ is contained in
exactly one block of $\cS$, then $(\cX,\cS)$ is also called a
$(\tau,t,n_1)$-\textit{Steiner system}.
\end{definition}

The following corollary follows directly from Theorem
\ref{theorem_optimal_code_cond_a}.
\begin{corollary}\label{corollary_optimal_code_packing}
Let $n_1=q-r+v$. If there exists a
$(\tau+1)$-$(n_1,r+\delta-1,1)$-packing with blocks $\cS$
and $0 \le r-v \le \frac{\delta^2}{\tau}$, then
there exists an optimal $[n,k,d]_q$ linear
code with all symbol $(r,\delta)$-locality, where
$n=|\cS|(r+\delta-1)$, $k=(|\cS|-1)r+v$, and $d=r-v+\delta$.
\end{corollary}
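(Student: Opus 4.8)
The plan is to derive Corollary~\ref{corollary_optimal_code_packing} as a direct instantiation of Theorem~\ref{theorem_optimal_code_cond_a}, so the entire argument is a matter of checking that the hypotheses of that theorem hold. First I would set $n_1 = q-r+v$, fix a $(\tau+1)$-$(n_1, r+\delta-1, 1)$-packing $(\cX, \cS)$, and identify the point set $\cX$ with $\F_q \setminus \{\alpha_t : 1 \le t \le r-v\}$ (which has exactly $n_1$ elements), so that each block of $\cS$ becomes a subset of $\F_q$ of size $r+\delta-1$ avoiding the auxiliary points $\alpha_t$. Let $w = \abs{\cS}$ and take $\{S_i : 1 \le i \le w\}$ to be an enumeration of the blocks of $\cS$; these are precisely the evaluation-point sets required by Construction~\ref{cons}.

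The key step is the intersection bound. Because $(\cX, \cS)$ is a $(\tau+1)$-$(n_1, r+\delta-1, 1)$-packing, any $(\tau+1)$-subset of $\cX$ lies in at most one block; hence two distinct blocks $S_i, S_j$ cannot share $\tau+1$ common points, i.e. $\abs{S_i \cap S_j} \le \tau$ for all $1 \le i \ne j \le w$. Thus the hypothesis of Theorem~\ref{theorem_optimal_code_cond_a} is met with $a = \tau$, and the remaining hypothesis there, namely $r-v \le \delta^2/a = \delta^2/\tau$, is exactly the assumption $0 \le r-v \le \delta^2/\tau$ given in the corollary. Applying Theorem~\ref{theorem_optimal_code_cond_a} then yields that the code $\cC$ produced by Construction~\ref{cons} on these $S_i$ is an optimal $[n, k, d = r-v+\delta]_q$ linear code with all symbol $(r,\delta)$-locality, where $n = w(r+\delta-1) = \abs{\cS}(r+\delta-1)$ and $k = (w-1)r+v = (\abs{\cS}-1)r+v$, which is the claimed conclusion.

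A couple of bookkeeping points need care but present no real obstacle. One must check that the parameters of Construction~\ref{cons} are legitimate: we need $1 \le v \le r$ (assumed) and $n_1 = q-r+v \ge r+\delta-1$ so that blocks of the stated size can exist at all — but this is implicit in the existence of the packing, so no separate verification is needed. One should also note that the corollary leaves the auxiliary points $\{\alpha_t\}$ unconstrained beyond lying outside $\cX$; since $\cX$ already has $q - (r-v)$ points, the complement in $\F_q$ has exactly $r-v$ elements, which is precisely the number of $\alpha_t$'s needed, so the identification of $\cX$ with $\F_q \setminus \{\alpha_t\}$ is forced and consistent. The only mildly delicate part is confirming that Theorem~\ref{theorem_optimal_code_cond_a} indeed requires nothing more than $\abs{S_i \cap S_j} \le a$ and $r - v \le \delta^2/a$ — which it does — so the corollary is immediate once the packing's defining property is translated into the intersection bound.
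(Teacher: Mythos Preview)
Your proposal is correct and matches the paper's own approach: the paper simply states that the corollary follows directly from Theorem~\ref{theorem_optimal_code_cond_a}, and you have filled in precisely the routine verification needed, namely that a $(\tau+1)$-$(n_1,r+\delta-1,1)$-packing forces $|S_i\cap S_j|\le\tau$ so that Theorem~\ref{theorem_optimal_code_cond_a} applies with $a=\tau$.
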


The number of blocks of a packing is upper bounded by the following Johnson bound \cite{J}:

\begin{lemma}[\cite{J}]\label{lemma_johnson}
  The maximum possible number of blocks of a
  $(\tau+1)$-$(n_1,r+\delta-1,1)$-packing $\cS$ is bounded by
\begin{equation*}
\begin{split}
|\cS|
\leq\left\lfloor \frac{n_1}{r+\delta-1} \left\lfloor\frac{n_1-1}{r+\delta-2} \left\lfloor\frac{n_1-2}{r+\delta-3}
\dots\left\lfloor\frac{n_1-\tau}{r+\delta-1-\lambda} \right\rfloor\dots\right\rfloor\right\rfloor\right\rfloor.
\end{split}
\end{equation*}
\end{lemma}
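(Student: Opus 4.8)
The plan is to prove this classical bound of Johnson by induction on $\tau$, using a double-counting argument on point--block incidences. It is convenient to write $D(\tau+1;k,n_1)$ for the largest possible number of blocks in any $(\tau+1)$-$(n_1,k,1)$-packing, where here $k=r+\delta-1$; the goal is then precisely the nested-floor upper bound on $D(\tau+1;k,n_1)$ stated above. For the base case, a $1$-$(n_1,k,1)$-packing is just a family of pairwise-disjoint $k$-subsets of an $n_1$-point set, so it contains at most $\floorenv{n_1/k}$ blocks, which matches the claim.

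For the inductive step, let $(\cX,\cS)$ be a $(\tau+1)$-$(n_1,k,1)$-packing and fix a point $x\in\cX$. I would pass to the derived family $\cS_x=\mathset{B\setminus\mathset{x} ~:~ B\in\cS,\ x\in B}$ on the reduced point set $\cX\setminus\mathset{x}$, and first verify that $(\cX\setminus\mathset{x},\cS_x)$ is a $\tau$-$(n_1-1,k-1,1)$-packing: each of its blocks has size $k-1$, and if two of them shared a common $\tau$-subset $T$, then the two corresponding blocks of $\cS$ would share the $(\tau+1)$-subset $T\cup\mathset{x}$, contradicting the packing property of $\cS$. Consequently the number of blocks of $\cS$ through any fixed point is at most $D(\tau;k-1,n_1-1)$.

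Double-counting incidences then gives
\begin{equation*}
k\abs{\cS}=\sum_{B\in\cS}\abs{B}=\sum_{x\in\cX}\abs{\mathset{B\in\cS ~:~ x\in B}}\leq n_1\, D(\tau;k-1,n_1-1),
\end{equation*}
so $\abs{\cS}\leq \frac{n_1}{k}D(\tau;k-1,n_1-1)$. Since $\abs{\cS}$ is an integer and, by the induction hypothesis, $D(\tau;k-1,n_1-1)$ is already bounded above by the nested-floor expression in the shifted parameters $(\tau;k-1,n_1-1)$ — which is an integer — applying the outer floor yields
\begin{equation*}
\abs{\cS}\leq \floorenv{\frac{n_1}{k}\floorenv{\frac{n_1-1}{k-1}\floorenv{\cdots\floorenv{\frac{n_1-\tau}{k-\tau}}\cdots}}},
\end{equation*}
which is the assertion after substituting $k=r+\delta-1$ (the innermost denominator being $r+\delta-1-\tau$).

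There is no genuine obstacle here: the only things requiring care are (i) checking that the derived structure $\cS_x$ really is a packing with the claimed parameters, and (ii) observing that it is the \emph{integrality} of the inductive bound on $D(\tau;k-1,n_1-1)$ that licenses pulling the outermost floor through the inequality $\abs{\cS}\le \frac{n_1}{k}D(\tau;k-1,n_1-1)$ without any loss. I read the symbol $\lambda$ appearing in the innermost denominator of the statement as $\tau$.
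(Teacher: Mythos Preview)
Your proof is correct and is precisely the classical induction/double-counting argument due to Johnson. Note, however, that the paper does not give its own proof of this lemma at all: it is quoted as a known result with a citation to \cite{J}, so there is no ``paper's proof'' to compare against. Your reading of the innermost denominator as $r+\delta-1-\tau$ (rather than the stray $\lambda$) is also the intended one.
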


Thus, the number of blocks for a
$(\tau+1)$-$(n_1,r+\delta-1,1)$-packing can be as large as
$O(n_1^{\tau+1})$, when $\tau$, $r$, and $\delta$ are regarded as constants.

\begin{corollary}\label{corollary_code_via_packings}
Let $n_1=q-r+v$. If there exists a
$(\tau+1)$-$(n_1,r+\delta-1,1)$-packing with blocks $\cS$,
$|\cS|=O(n_1^{\tau+1})$,
and $0 \le r-v\le \frac{\delta^2}{\tau}$,
then there
exists an optimal $[n,k,d]_q$ linear code with all symbol
$(r,\delta)$-locality, where $n=|\cS|(r+\delta-1)=O(q^{\tau+1})$,
$k=(|\cS|-1)r+v$ and $d=r-v+\delta$. In particular, for the case
$w-1\geq 2(r-v+1)$, $r-v=\delta+1$, i.e., $d=2\delta+1$ and
$\tau=\delta-1$, the code based on the
$(\tau+1)$-$(n_1,r+\delta-1,1)$-packing has asymptotically optimal
length, where $r$ and $\delta$ are regarded as constants.
\end{corollary}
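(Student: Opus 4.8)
The plan is to verify the corollary by stringing together three facts that are already available in the excerpt: the existence of large packings, the Johnson-type counting bound, and the optimality criterion coming from Corollary~\ref{corollary_optimal_code_packing}. First I would note that, with $\tau$, $r$, $\delta$ treated as constants, Lemma~\ref{lemma_johnson} gives that a $(\tau+1)$-$(n_1,r+\delta-1,1)$-packing can have $|\cS|=O(n_1^{\tau+1})$ blocks, and that packings achieving this order of magnitude do exist (for instance, Steiner systems or near-optimal packings; alternatively one may invoke the probabilistic/greedy construction giving $\Omega(n_1^{\tau+1})$ blocks). Substituting $n_1=q-r+v=\Theta(q)$ then yields $n=|\cS|(r+\delta-1)=O(q^{\tau+1})$, and simultaneously $\Omega(q^{\tau+1})$, so $n=\Theta(q^{\tau+1})$. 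Combined with the hypothesis $0\le r-v\le \delta^2/\tau$, Corollary~\ref{corollary_optimal_code_packing} applies verbatim and produces an optimal $[n,k,d]_q$ code with all symbol $(r,\delta)$-locality, $k=(|\cS|-1)r+v$, and $d=r-v+\delta$. This disposes of the general statement.

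Next I would turn to the ``in particular'' claim, which specializes to $r-v=\delta+1$, hence $d=r-v+\delta=2\delta+1$, and $\tau=\delta-1$. I would first check the arithmetic compatibility with Corollary~\ref{corollary_optimal_code_packing}: we need $0\le r-v\le \delta^2/\tau$, i.e.\ $\delta+1\le \delta^2/(\delta-1)=\delta+1+\tfrac{1}{\delta-1}$, which holds for every $\delta\ge 2$ (and is in fact tight up to the fractional slack). So the construction goes through and delivers a code of length $n=|\cS|(r+\delta-1)$ with $|\cS|=\Theta(n_1^{\delta})=\Theta(q^{\delta})$, that is, $n=\Theta(q^{\delta})$. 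The remaining point is to argue this length is \emph{asymptotically optimal}, which is where I would invoke the bound of Theorem~\ref{theorem_bound_delta>2}. For that theorem I must confirm its hypotheses are met: we have $n=w(r+\delta-1)$, $k=ur+v$ with the side condition ``$r\mid k$ or $u\ge 2(r+1-v)$''; under the extra assumption $w-1\ge 2(r-v+1)$ stated in the corollary one verifies $u=w-1\ge 2(r-v+1)=2(r+1-v)$ (using $v\le r$), so the condition holds. One also needs $2t+1>4$ where $t=\lfloor (d-1)/\delta\rfloor = \lfloor 2\delta/\delta\rfloor = 2$, giving $2t+1=5>4$, as required.

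With $t=2$ (even), Theorem~\ref{theorem_bound_delta>2} gives $n=O\!\big(\tfrac{t(r+\delta)}{r}\, q^{((w-u)r-v)/\lfloor t/2\rfloor-1}\big)$ with $w-u=\lfloor (d-1+v)/(r+\delta-1)\rfloor$. Here $d-1+v=2\delta+v$ and $r+\delta-1=(v+\delta+1)+\delta-1=v+2\delta$, so $w-u=1$, hence the exponent is $(r-v)/1-1=r-v-1=\delta$. Thus the upper bound reads $n=O(q^{\delta})$ with $r,\delta$ constants, which matches the $\Omega(q^{\delta})$ lower bound from the packing construction; therefore the constructed code has order-optimal length. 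I expect the only genuinely delicate step to be this last bookkeeping: one must carefully substitute $r-v=\delta+1$ and $d=2\delta+1$ into the formulas for $t$ and $w-u$ and check that the resulting exponent in Theorem~\ref{theorem_bound_delta>2} equals $\tau+1=\delta$, matching the Johnson-bound growth rate of the packing; the rest is assembling previously proved statements.
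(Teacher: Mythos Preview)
Your proposal is correct and follows essentially the same route as the paper: invoke Corollary~\ref{corollary_optimal_code_packing} to obtain the optimal code with $n=|\cS|(r+\delta-1)=O(q^{\tau+1})$, and for the special case $r-v=\delta+1$, $\tau=\delta-1$ apply Theorem~\ref{theorem_bound_delta>2} with $t=2$ and $w-u=1$ to get the matching upper bound $n=O(q^{r-v-1})=O(q^{\delta})$. Your write-up is in fact more thorough than the paper's, since you explicitly verify the compatibility condition $\delta+1\le\delta^2/(\delta-1)$, the hypothesis $u=w-1\ge 2(r+1-v)$, and the computation of $w-u$ via $r+\delta-1=v+2\delta$.
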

\begin{proof}
By Corollary \ref{corollary_optimal_code_packing}, we have
$n=|\cS|(r+\delta-1)=O(q^{\tau+1})$ for the code generated by
Construction \ref{cons}. For the case $r-v=\delta+1$, $w-1\geq
2(r-v+1)$, $d=2\delta+1$, and $t=\floorenv{(d-1)/\delta}=2$, by Theorem \ref{theorem_bound_delta>2} we have
$$n\leq
\frac{t(r+\delta-1)}{2r(q-1)}q^{\frac{2(w-w+1)r-2v}{t}}\leq\frac{t(r+\delta-1)}{2r(q-1)}q^{r-v}=O(q^{r-v-1}).$$
Thus, for the case $r-v=\delta+1$ and $\tau=\delta-1$, the code $\cC$
has length $n=O(q^{\tau+1})=O(q^{\delta})$, which is asymptotically
optimal with respect to the bound in Theorem
\ref{theorem_bound_delta>2}, when $r$ and $\delta$ are regarded as
constants.
\end{proof}

As an example, we also analyze the length of the codes based on
Steiner systems.

\begin{corollary}
Let $n_1=q-r+v$. If there exists a $(\tau+1,r+\delta-1,n_1)$-Steiner
system and $0\le r-v\leq \frac{\delta^2}{\tau}$,
then there exists an
optimal $[n,k,d]_q$ linear code with all symbol $(r,\delta)$-locality,
where
\begin{align*}
n&=\frac{\binom{n_1}{\tau+1}(r+\delta-1)}{\binom{r+\delta-1}{\tau+1}},\\
k&=\parenv{\frac{\binom{n_1}{\tau+1}}{\binom{r+\delta-1}{\tau+1}}-1}r+v,
\end{align*}
and $d=r-v+\delta$.  In particular, for the case $w-1\geq 2(r-v+1)$,
$r-v=\delta+1$, i.e., $d=2\delta+1$ and $\tau=\delta-1$, the code
based on the $(\delta,r+\delta-1,q-\delta-1)$-Steiner system has asymptotically
optimal length, where $r$ and $\delta$ are regarded as constants.
\end{corollary}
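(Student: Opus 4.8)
The plan is to obtain this corollary as a direct specialization of Corollary~\ref{corollary_optimal_code_packing}, using only the observation that a Steiner system is a packing together with the elementary block count of a Steiner system; the asymptotic part is then handled exactly as in Corollary~\ref{corollary_code_via_packings}.

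First I would note that, by Definition~\ref{def_Packing}, a $(\tau+1,r+\delta-1,n_1)$-Steiner system $(\cX,\cS)$ is in particular a $(\tau+1)$-$(n_1,r+\delta-1,1)$-packing: if every $(\tau+1)$-subset of $\cX$ lies in exactly one block, then it certainly lies in at most one block. Since the hypothesis $0\le r-v\le\delta^2/\tau$ is precisely what Corollary~\ref{corollary_optimal_code_packing} requires, that corollary immediately yields an optimal $[n,k,d]_q$ linear code with all symbol $(r,\delta)$-locality, with $n=|\cS|(r+\delta-1)$, $k=(|\cS|-1)r+v$, and $d=r-v+\delta$.

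Next I would evaluate $|\cS|$ by the standard double count: counting pairs $(T,B)$ with $|T|=\tau+1$ and $T\subseteq B\in\cS$, each of the $\binom{n_1}{\tau+1}$ subsets $T$ lies in exactly one block, while each block contains exactly $\binom{r+\delta-1}{\tau+1}$ such subsets, so $|\cS|=\binom{n_1}{\tau+1}/\binom{r+\delta-1}{\tau+1}$. Substituting this value into the expressions for $n$ and $k$ above yields the displayed formulas.

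For the ``in particular'' assertion, I would take $\tau=\delta-1$ and $r-v=\delta+1$, so that $d=2\delta+1$; this choice is consistent with $0\le r-v\le\delta^2/\tau$ since $(\delta+1)(\delta-1)\le\delta^2$. Regarding $r$ and $\delta$ as constants, $n_1=q-r+v=\Theta(q)$ gives $|\cS|=\binom{n_1}{\delta}/\binom{r+\delta-1}{\delta}=\Theta(q^{\delta})$ and hence $n=\Theta(q^{\delta})$. On the other hand, since here $k=(w-1)r+v$, we have $u=w-1$ and $w-u=1$ in the notation of Theorem~\ref{theorem_bound_delta>2}, while $t=\floorenv{(d-1)/\delta}=2$; as $2t+1=5>4$, that theorem (in the $t$ even case) gives $n\le\frac{t(r+\delta-1)}{2r(q-1)}q^{(2(w-u)r-2v)/t}=\frac{r+\delta-1}{r(q-1)}q^{r-v}=O(q^{\delta})$, matching the construction up to a constant factor, which is the claimed order-optimality. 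Since every step is either a citation of an earlier result or a routine count, I anticipate no real obstacle here beyond keeping the parameter bookkeeping straight --- in particular checking that $v=r-\delta-1$ still satisfies $1\le v\le r$, which constrains $r\ge\delta+2$.
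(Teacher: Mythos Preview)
Your proposal is correct and follows essentially the same approach as the paper's own proof: invoke Corollary~\ref{corollary_optimal_code_packing} via the observation that a Steiner system is a packing, compute $|\cS|$ by the standard double count, and for the asymptotic part apply Theorem~\ref{theorem_bound_delta>2} with $t=2$ and $w-u=1$ to obtain the matching $O(q^{\delta})$ upper bound. Your write-up is in fact slightly more careful than the paper's (you explicitly verify $(\delta+1)(\delta-1)\le\delta^2$, check $2t+1>4$, and note the implicit constraint $r\ge\delta+2$), but the underlying argument is identical.
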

\begin{proof}
The first part of the corollary follows directly from Corollary
\ref{corollary_optimal_code_packing} and Definition
\ref{def_Packing}. For the second part, the fact $\tau=\delta-1$ means
that $r-v=\delta+1<\frac{\delta^2}{\delta-1}$ is possible, which also
means the code $\cC$ has length
$(r+\delta-1)\binom{q-\delta+1}{\delta}/\binom{r+\delta-1}{\delta}$
and $d=2\delta+1$. Since $w-1\geq 2(r-v+1)$, $u=w-1$, $r-v=\delta-1$, and $d=2\delta+1$, i.e., $t=2$, by Theorem \ref{theorem_bound_delta>2}, we have
$$n\leq
\frac{t(r+\delta-1)}{2r(q-1)}q^{\frac{2(w-u)r-2v}{t}}\leq\frac{t(r+\delta-1)}{2r(q-1)}q^{r-v}=O(q^{\delta}).$$
Now the conclusion comes from the fact
that the upper bound is $O(q^\delta)$ and the constructed code has length
$n=\Omega(q^\delta)$,
where we assume $r$ and $\delta$ are constants.
\end{proof}

\begin{remark}
For the case $\delta=2$ and $d=5$, optimal linear codes with all
symbol $(r,2)$-locality and asymptotically optimal length
$\Theta(q^2)$ have been introduced in \cite{GXY,Jin,BCGLP}.
\end{remark}
\begin{remark}
  Given positive integers $\tau$, $r$ and $\delta>2$, the natural
    necessary conditions for the existence of a
$(\tau+1,r+\delta-1,q-r+v)$-Steiner system are that
$\binom{q-r+v-i}{\tau+1-i}|\binom{r+\delta-1-i}{\tau+1-i}$ for
all $0\leq i\leq \tau$. It was shown in \cite{K} that these
  conditions are also sufficient except perhaps for finitely many
  cases. While $q$ might not be a prime power, any prime power $\overline{q}\geq
  q$ will suffice for our needs. It is known, for example, that there
  is always a prime in the interval $[q,q+q^{21/40}]$ (see
  \cite{BHP}). Thus, Construction \ref{cons} provides infinitely many
  optimal linear $[n,k,d]_{\overline{q}}$ locally repairable codes, with all symbol
  $(r,\delta)$-locality, and
  \begin{align*}
   n&=(r+\delta-1)\cdot\frac{\binom{q-r+v}{\tau+1}}{\binom{r+\delta-1}{\tau+1}}=\Omega(q^{\tau+1})=\Omega(\overline{q}^{\tau+1}),\\
    k&= \parenv{\frac{\binom{q-r+v}{\tau+1}}{\binom{r+\delta-1}{\tau+1}}-1}r+v,\\
    d&=r-v+\delta,
  \end{align*}
  i.e., with length super-linear in the field size.
\end{remark}

\section{Concluding Remarks}\label{sec-conclusion}
In this paper, we first derived an upper bound for the length of
optimal locally repairable codes when $\delta>2$. As a byproduct, we
also extended the range of parameters for the known bound (the case
$\delta=2$) and improve its performance for the case $d>r+\delta$.
A general construction of locally repairable codes was
introduced. By the construction, locally repairable codes with length
super-linear in the field size can be generated. In particular, for
some cases those codes have asymptotically optimal length with
respect to the new bound.

Several combinatorial structures, e.g., union-intersection-bounded
families,
packings, and Steiner systems, satisfy \eqref{eqn_cond_mu}
and play a key role in determining the length of the codes generated
by Construction \ref{cons}.  If more of those structures with a large
number of blocks can be constructed, more good codes with length $n>q$
can be generated. Finding more such combinatorial structures and
explicit constructions for them, is left for future research.

\section*{Appendix}

\noindent\textbf{Proof of Lemma \ref{lemma_for_D(B)}}

 We first construct a uniform $\ocB$ from $\cB$, by arbitrarily
  adding elements to sets in $\cB$ that contain less than $r+\delta-1$
  elements. Note that $\ocB$ is not necessarily an ECF. Obviously
  $D(\ocB)\geq D(\cB)$. We contend now that $D(\ocB)>0$. If
  $D(\cB)\neq 0$ this is immediate, since we have $D(\ocB)\geq D(\cB)
  > 0$. If $\cB$ is not uniform, at least one set $B\in\cB$ has
  $\abs{B}<r+\delta-1$, and adding elements to it in the process of
  creating $\ocB$ necessarily increases the overlap, i.e.,
  $D(\ocB)>D(\cB)\geq 0$. We also observe that,
  \begin{eqnarray*}
    D(\ocB)=\sum_{\oB\in \ocB }|\oB|-\left|\bigcup_{\oB\in \ocB }\oB\right|=
    |\ocB|(r+\delta-1)-n\equiv -m\pmod{r+\delta-1}.
  \end{eqnarray*}

Next, we partition $\overline{\mathcal{B}}$ into two subsets, $\overline{\mathcal{B}}_1$ and $\overline{\mathcal{B}}_2$, where
\begin{equation*}\label{eqn_def_B_1}
\overline{\mathcal{B}}_1=\{\overline{B}\in\overline{\mathcal{B}} ~:~\exists \overline{B}'\in\overline{\mathcal{B}}, \overline{B}'\neq \overline{B}, \overline{B}\cap \overline{B}'\ne \emptyset\},
\end{equation*}
and
\begin{equation*}
\overline{\mathcal{B}}_2=\overline{\mathcal{B}}\setminus \overline{\mathcal{B}}_1.
\end{equation*}
For convenience, denote $\overline{\mathcal{B}}_1=\{\overline{B}_{1},\dots,\overline{B}_{K}\}$ and
$\overline{\mathcal{B}}_2=\{\overline{B}_{{K+1}},\dots,\overline{B}_{T}\}$ where $0\le K\le T$.

Let $1\le t\le T$ be a positive integer.
Obviously, if $t\ge K$, then $\overline{\mathcal{B}}'=\{\overline{B}_{1},\dots,\overline{B}_{K},\dots,
\overline{B}_{t}\}$ is a $t$-subset satisfying
\begin{equation}\label{Eqn_DB1}
D(\overline{\mathcal{B}}')=\sum_{i=1}^{t}|\overline{B}_{i}|-\left|\bigcup_{i=1}^{t}\overline{B}_{i}\right|=  D(\overline{\mathcal{B}}).
\end{equation}
For the case $0\leq t\leq 1$, the fact $\lfloor t/2\rfloor=0$ means
that the lemma follows trivially.
For the case $2\leq t < K$, we claim that we can select a $t$-subset
$\overline{\mathcal{B}}'\subseteq\overline{\mathcal{B}}_1$ containing $\lfloor t/2\rfloor$ different pairs
of sets $\{\overline{B}_{\tau_{2i-1}},\overline{B}_{\tau_{2i}}\}$ for $1\leq i\leq \lfloor
t/2\rfloor$ with
\begin{eqnarray*}
\sum_{\overline{B}\in {\cB}_j}|\overline{B}|-\left|\bigcup_{\overline{B}\in {\cB}_j}\overline{B}\right|
&\ge&1+\sum_{\overline{B}\in {\cB}_{j-1}}|\overline{B}|-\left|\bigcup_{\overline{B}\in {\cB}_{j-1}}\overline{B}\right|\\
&\ge& j,
\end{eqnarray*}
for ${\cB}_0=\emptyset$ and ${\cB}_j=\{\overline{B}_{\tau_i}: 1\le i\le 2j\}$,  $1\leq j\leq \left\lfloor\frac{t}{2}\right\rfloor$, especially
$\overline{\mathcal{B}}'\supseteq{\cB}_{\left\lfloor\frac{t}{2}\right\rfloor}$ satisfying
\begin{eqnarray}\label{Eqn_DB2}
\sum_{\overline{B}\in \overline{\mathcal{B}}'}|\overline{B}|-\left|\bigcup_{\overline{B}\in\overline{\mathcal{B}}'}\overline{B}\right|
&\ge&\sum_{\overline{B}\in {\cB}_{\left\lfloor\frac{t}{2}\right\rfloor}}|\overline{B}|-\left|\bigcup_{\overline{B}\in {\cB}_{\left\lfloor\frac{t}{2}\right\rfloor}}\overline{B}\right|\nonumber\\
&\ge& \left\lfloor\frac{t}{2}\right\rfloor.
\end{eqnarray}

Otherwise, there exists a subset
$\overline{\mathcal{B}}_1^*\subseteq\overline{\mathcal{B}}_1$ with size at most
$2(\lfloor\frac{t}{2}\rfloor-1)$ such that for any $\overline{B}'\in
\overline{\mathcal{B}}_1\setminus \overline{\mathcal{B}}^*_1, \overline{B}''\in \overline{\mathcal{B}}_1$,
$$\sum_{\overline{B}\in \overline{\mathcal{B}}^*_{1}\cup\{\overline{B}',\overline{B}''\}}|B|-\abs{\bigcup_{\overline{B}\in \overline{\mathcal{B}}^*_{1}\cup\{\overline{B}',\overline{B}''\}}\overline{B}}\leq\sum_{\overline{B}\in \overline{\mathcal{B}}^*_{1}}|\overline{B}|-\abs{\bigcup_{\overline{B}\in \overline{\mathcal{B}}^*_1}\overline{B}},$$
which implies
\begin{eqnarray*}
\left\{\begin{array}{ll}
|\overline{B}'|+|\overline{B}''|\leq \left|(\overline{B}'\cup \overline{B}'')\setminus\bigcup_{\overline{B}\in\overline{\mathcal{B}}^*_1} \overline{B}\right|, & \mathrm{if}~\overline{B}''\in\overline{\mathcal{B}}_1\setminus \overline{\mathcal{B}}^*_1,\\
|\overline{B}'|\leq \left|\overline{B}'\setminus\bigcup_{\overline{B}\in\overline{\mathcal{B}}^*_1} \overline{B}\right|, & \mathrm{if}~\overline{B}''\in\overline{\mathcal{B}}^*_1.\\
\end{array}
\right.
\end{eqnarray*}
However, this means that every $\overline{B}'\in\overline{\mathcal{B}}_1\setminus \overline{\mathcal{B}}^*_1$ has an
empty intersection with any other set in $\overline{\mathcal{B}}_1$, which contradicts
the definition of $\overline{\mathcal{B}}_1$.

By combining \eqref{Eqn_DB1} and \eqref{Eqn_DB2}, for any given $0\leq t\leq |\cB|$, there
  exists a $t$-subset, say
  $\ocB'=\mathset{\oB_1,\oB_2,\dots,\oB_{t}}\subseteq \ocB$, such that
  \begin{equation}\label{eqn_u_sets_*}
    D(\ocB')=\sum_{\oB\in \ocB'}|\oB|-\left|\bigcup_{\oB\in \ocB'}\oB\right|\geq \min \left\{D(\ocB),\left\lfloor{t}/{2}\right\rfloor\right\}\geq \min\left\{r+\delta-1-m,\left\lfloor t/2\right\rfloor\right\},
  \end{equation}
  where the last inequality holds since $D(\ocB)> 0$ and
  $D(\ocB)\equiv -m \pmod{r+\delta-1}$.

  If $\oB_i\in\ocB'$ was created from $B_i\in\cB$, i.e.,
  $B_i\subseteq\oB_i$, then by \eqref{eqn_u_sets_*} we have,
\begin{eqnarray*}
t(r+\delta-1)-\left|\bigcup_{i=1}^{t}B_i\right|
=\sum_{i=1}^{t}|\oB_i|-\left|\bigcup_{i=1}^{t}B_i\right|
\geq \sum_{i=1}^{t}|\oB_i|-\left|\bigcup_{i=1}^{t}\oB_i\right|
\geq \min\left\{r+\delta-1-m,\left\lfloor t/2\right\rfloor\right\}.
\end{eqnarray*}
Now set $\cB'=\mathset{B_1,\dots,B_t}$ to complete the proof. \qed

\vspace{2mm}
\noindent\textbf{Proof of Lemma \ref{lemma_initialization}}

 By Definition \ref{def_r_delta_i}, $\Gamma$ contains at least one repair set for each
  code symbol, hence
  \begin{equation}\label{eqn_Gamma} \bigcup_{R\in\Gamma}R=[n].
  \end{equation}
  If for each $R\in \Gamma$, $R\not\subseteq \bigcup_{R'\in
    \Gamma\setminus\{R\}}R'$, then set $\cR=\Gamma$ and the lemma
  follows.  Otherwise, set $\Gamma_1=\Gamma\setminus\{R\}$, where $R \in \Gamma$ satisfies that
  $R\subseteq \bigcup_{R'\in\Gamma\setminus\{R\}}R'$. Thus, by \eqref{eqn_Gamma}, we
  conclude that
\begin{equation*}
\bigcup_{R'\in \Gamma\setminus\{R\}}R'=[n].
\end{equation*}
 Since $|\Gamma_1|<|\Gamma|$, and
$\Gamma_1$ also satisfies \eqref{eqn_Gamma}, we can repeat the
elimination procedure to obtain the desired set $\cR$. The facts
$\rank(\bigcup_{R\in \cR}R)=k$ and $\rank(R)\leq r$ imply that $|\cR|\geq \lceil\frac{k}{r}\rceil$,
which completes the proof.\qed

\vspace{2mm}
\noindent\textbf{Proof of Lemma \ref{lemma_find_V_1}}

Before proving  Lemma \ref{lemma_find_V_1}, we need to discuss the structures of the repair sets
in more details in three lemmas.

\begin{lemma}\label{lemma_for_rank_B_i}
  Let $\cC$ be an $[n,k]_q$ linear code with all symbol
  $(r,\delta)$-locality. Let $\cR$ be the ECF given by Lemma
  \ref{lemma_initialization}. If for a subset $\cV\subseteq \cR$, and for
  all $R'\in\cV$,
 \begin{equation}\label{eqn_condition_joint}
 \left|R'\bigcap \parenv{\bigcup_{R\in \cV\setminus\{R'\}}R}\right|\leq
 |R'|-\delta+1,
 \end{equation}
then we have
\begin{equation*}
\rank\left(\bigcup_{R\in \cV}R\right)\leq \left|\bigcup_{R\in \cV}R\right|-|\cV|(\delta-1).
\end{equation*}
\end{lemma}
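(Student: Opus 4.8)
The plan is to prove the rank bound by induction on $\abs{\cV}$, peeling off one repair set at a time. The base case $\abs{\cV}=1$ is just the statement that $\rank(R')\le\abs{R'}-\delta+1$ for a single repair set $R'\in\cR$, which follows directly from Definition \ref{def_r_delta_i}: the punctured code $\cC|_{R'}$ has minimum Hamming distance at least $\delta$, so by the Singleton bound its dimension is at most $\abs{R'}-\delta+1$, and that dimension is exactly $\rank(R')$. For the inductive step, suppose $\abs{\cV}\ge 2$ and the claim holds for all smaller subsets. Pick any $R_0\in\cV$ and set $\cV'=\cV\setminus\{R_0\}$; note that $\cV'$ still satisfies the hypothesis \eqref{eqn_condition_joint}, since intersections can only shrink when we remove a set, so the induction hypothesis applies to $\cV'$.

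The key step is to bound $\rank\parenv{\bigcup_{R\in\cV}R}$ in terms of $\rank\parenv{\bigcup_{R\in\cV'}R}$. Write $U'=\bigcup_{R\in\cV'}R$ and $U=U'\cup R_0$. By submodularity of rank (or directly, since adding the columns indexed by $R_0\setminus U'$ can increase the rank by at most $\abs{R_0\setminus U'}$, and moreover those columns lie in the span of $\cC|_{R_0}$ which has rank $\rank(R_0)\le\abs{R_0}-\delta+1$), we get
\begin{equation*}
\rank(U)\le \rank(U') + \min\bracenv{\abs{R_0\setminus U'},\ \rank(R_0) - \rank(R_0\cap U')}.
\end{equation*}
Actually the clean bound to use is $\rank(U)\le\rank(U')+\rank(R_0)-\rank(R_0\cap U')$. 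Now the hypothesis \eqref{eqn_condition_joint} applied to $R'=R_0$ gives $\abs{R_0\cap U'}=\abs{R_0\cap\bigcup_{R\in\cV'}R}\le\abs{R_0}-\delta+1$, hence $\abs{R_0\setminus U'}\ge\delta-1$. Combining this with $\rank(R_0)\le\abs{R_0}-(\delta-1)$ and the trivial bound $\rank(R_0\cap U')\ge 0$, we obtain
\begin{equation*}
\rank(U)\le\rank(U')+\abs{R_0}-(\delta-1)-\rank(R_0\cap U')\le\rank(U')+\abs{R_0\setminus U'}-(\delta-1),
\end{equation*}
where the last inequality uses $\abs{R_0}-\rank(R_0\cap U')\ge\abs{R_0}-\abs{R_0\cap U'}=\abs{R_0\setminus U'}$. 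Finally, applying the induction hypothesis $\rank(U')\le\abs{U'}-(\abs{\cV}-1)(\delta-1)$ and noting $\abs{U}=\abs{U'}+\abs{R_0\setminus U'}$ yields $\rank(U)\le\abs{U}-\abs{\cV}(\delta-1)$, completing the induction.

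The main obstacle I anticipate is getting the rank inequality $\rank(R_0)-\rank(R_0\cap U')\le\abs{R_0\setminus U'}-(\delta-1)$ to come out with the right constant; one has to be careful to spend the ``$\delta-1$ saving'' exactly once per added repair set and not accidentally double-count the overlap. The cleanest route is to track the quantity $\abs{U}-\rank(U)$ (the ``rank deficiency'') and show it increases by at least $\delta-1$ each time a new repair set is adjoined, which is essentially what the computation above does. Everything else is routine: the Singleton bound for the base case and the subadditivity/submodularity of rank for the step.
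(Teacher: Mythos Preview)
Your inductive plan is sound and in spirit matches the paper's argument, but the displayed chain of inequalities contains a genuine error. You write
\[
\rank(U)\le\rank(U')+\abs{R_0}-(\delta-1)-\rank(R_0\cap U')\le\rank(U')+\abs{R_0\setminus U'}-(\delta-1),
\]
and justify the second inequality by $\abs{R_0}-\rank(R_0\cap U')\ge\abs{R_0\setminus U'}$. But that inequality points the wrong way: to replace $\abs{R_0}-\rank(R_0\cap U')$ by the smaller quantity $\abs{R_0\setminus U'}$ on the right-hand side of a $\le$, you would need $\abs{R_0}-\rank(R_0\cap U')\le\abs{R_0\setminus U'}$, i.e., $\rank(R_0\cap U')\ge\abs{R_0\cap U'}$, which is false in general. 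Submodularity together with the bare Singleton bound $\rank(R_0)\le\abs{R_0}-(\delta-1)$ is simply not strong enough here; the obstacle you anticipated is real and your computation does not resolve it.

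The fix---and this is exactly what the paper does---is to use the full strength of the $(r,\delta)$-locality: not merely that $\rank(R_0)\le\abs{R_0}-\delta+1$, but that \emph{any} $\abs{R_0}-\delta+1$ coordinates of $R_0$ already span $R_0$. Since the hypothesis gives $\abs{R_0\setminus U'}\ge\delta-1$, choose a $(\delta-1)$-subset $T\subseteq R_0\setminus U'$; then $\spn(R_0\setminus T)=\spn(R_0)$, so
\[
\rank(U)=\rank\bigl(U'\cup(R_0\setminus T)\bigr)\le\rank(U')+\abs{(R_0\setminus T)\setminus U'}=\rank(U')+\abs{R_0\setminus U'}-(\delta-1),
\]
and now the induction goes through. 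The paper carries out this same idea non-inductively: it selects all the $(\delta-1)$-subsets $R'_i\subseteq R_i\setminus\bigcup_{j\ne i}R_j$ simultaneously (they are pairwise disjoint by construction), deletes them all at once, and bounds the rank of what remains by its cardinality.
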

\begin{proof}
Denote $|\cV|=\ell$ and $\cV=\{R_1,\dots,R_\ell\}\subseteq \cR$. For
each $R_{i}\in \cV$, \eqref{eqn_condition_joint} means that there
exists a $(\delta-1)$-subset $R'_{i}\subseteq R_{i}$ such that
$R'_{i}\cap(\bigcup_{j\in[\ell]\setminus\mathset{i}}R_j)=\emptyset$. Thus,
we can get $\ell$ pairwise disjoint subsets $R'_{1},R'_{2},\dots,
R'_{\ell}$.

By Definition \ref{def_r_delta_i}, $\rank(R_{i})=\rank(R_{i}\setminus
R'_{i})$ for $1\leq i\leq \ell$.  Therefore, we have
\begin{equation*}
\begin{split}
\rank\left(\bigcup_{R\in \cV}R\right)=\rank\left(\bigcup_{i\in[\ell]}(R_{i}\setminus R'_{i})\right)\leq \left|\bigcup_{i\in[\ell]}(R_{i}\setminus R'_{i})\right|&=\left|\bigcup_{R\in\cV}R\right|-\sum_{i\in[\ell]}|R'_{i}|\\
&=\left|\bigcup_{R\in \cV}R\right|-|\cV|(\delta-1).
\end{split}
\end{equation*}
\end{proof}

We note that when $\delta=2$, \eqref{eqn_condition_joint} is always
satisfied by the ECF $\cR$. We now continue with our exploration of
the properties of $\cR$.

\begin{lemma}\label{lemma_extend}
  Let $\cC$ be an $[n,k]_q$ linear code with all symbol
  $(r,\delta)$-locality. Let $\cR$ be the ECF given by Lemma
  \ref{lemma_initialization}.  If there are subsets $\cV\subseteq
  \cR'\subseteq \cR$ with $|\cV|\leq \lceil\frac{k}{r}\rceil-1$,
  $\rank(\bigcup_{R\in \cR'}R)=k$, and
  \begin{equation}\label{eqn_condition_joint_extend}
    \rank\left(\bigcup_{R\in \cV}R\right)\leq \left|\bigcup_{R\in \cV}R\right|-|\cV|(\delta-1)
  \end{equation}
  then we can obtain a $(\lceil\frac{k}{r}\rceil-1)$-set $\cV'$ with $\cV\subseteq\cV'\subseteq \cR'$ such that
  \begin{equation*}
    \rank\left(\bigcup_{R\in \cV'}R\right)\leq \left|\bigcup_{R\in \cV'}R\right|-|\cV'|(\delta-1).
  \end{equation*}
\end{lemma}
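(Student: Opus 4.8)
The plan is to build $\cV'$ from $\cV$ by greedily adding one repair set at a time from $\cR'$, maintaining the invariant that the current collection $\cW$ satisfies $\rank(\bigcup_{R\in\cW}R)\le|\bigcup_{R\in\cW}R|-|\cW|(\delta-1)$. We start with $\cW=\cV$, which satisfies the invariant by hypothesis \eqref{eqn_condition_joint_extend}. As long as $|\cW|<\lceil k/r\rceil-1$, I claim we can find some $R^\ast\in\cR'\setminus\cW$ to adjoin while preserving the invariant; iterating until $|\cW|=\lceil k/r\rceil-1$ produces the desired $\cV'$.

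The key step is the selection of $R^\ast$. First I would observe that $\bigcup_{R\in\cW}R\ne\bigcup_{R\in\cR'}R$: indeed, since $\rank(\bigcup_{R\in\cR'}R)=k$ but $\rank(\bigcup_{R\in\cW}R)\le|\bigcup_{R\in\cW}R|-|\cW|(\delta-1)$ together with $|\cW|\le\lceil k/r\rceil-1$ and the fact that each $|R|\le r+\delta-1$ forces $\rank(\bigcup_{R\in\cW}R)\le k-(\delta-1)<k$ (here one uses $|\bigcup_{R\in\cW}R|\le|\cW|(r+\delta-1)$), so the union over $\cW$ is a proper subset of the union over $\cR'$ and in particular $\cW\subsetneq\cR'$. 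Hence there is at least one candidate $R^\ast\in\cR'\setminus\cW$ with $R^\ast\not\subseteq\bigcup_{R\in\cW}R$; among all such candidates I would pick one that maximizes the overlap $|R^\ast\cap\bigcup_{R\in\cW}R|$, or — following the style of Lemma \ref{lemma_for_D(B)} — argue by a counting/extremality dichotomy. The point to verify is that for this $R^\ast$, setting $\cW'=\cW\cup\{R^\ast\}$,
\[
\rank\Big(\bigcup_{R\in\cW'}R\Big)\le\rank\Big(\bigcup_{R\in\cW}R\Big)+\rank(R^\ast\setminus{\textstyle\bigcup_{R\in\cW}R}),
\]
and then bound $\rank(R^\ast\setminus\bigcup_{R\in\cW}R)$ using the $(r,\delta)$-locality of $R^\ast$: because $\cC|_{R^\ast}$ has minimum distance $\ge\delta$, any $\delta-1$ coordinates of $R^\ast$ are redundant, so $\rank(R^\ast\setminus\bigcup_{R\in\cW}R)\le|R^\ast\setminus\bigcup_{R\in\cW}R|-(\delta-1)$ provided $R^\ast$ contributes enough new points — which is exactly what must be arranged by the choice of $R^\ast$ (this is where the non-uniform/overlap structure of the ECF $\cR$, and the argument of Lemma \ref{lemma_for_D(B)}, gets reused). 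Combining these two inequalities with the invariant for $\cW$ gives the invariant for $\cW'$, since the new term $|R^\ast\setminus\bigcup_{R\in\cW}R|$ is precisely $|\bigcup_{R\in\cW'}R|-|\bigcup_{R\in\cW}R|$.

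The main obstacle I anticipate is the case analysis needed to guarantee that the adjoined set $R^\ast$ genuinely decreases the rank deficit by at least $\delta-1$ relative to its new-point contribution — i.e., ruling out the degenerate situation where every remaining set in $\cR'$ either is already contained in the current union or meets it in fewer than $\delta-1$ points (so that the locality redundancy cannot be "spent"). Resolving this will likely require a sub-argument very much in the spirit of Lemma \ref{lemma_for_rank_B_i} and the proof of Lemma \ref{lemma_for_D(B)}: either some candidate overlaps the current union in at least $\delta-1$ coordinates and we win immediately, or all candidates overlap in fewer, in which case the candidates are "almost disjoint" from the union and one shows directly that adding such a set keeps the deficit inequality intact because the small overlap itself is bounded by $\delta-1$. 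Once that dichotomy is established the induction closes cleanly, and since $|\cV|\le\lceil k/r\rceil-1$ we reach size exactly $\lceil k/r\rceil-1$ in finitely many steps.
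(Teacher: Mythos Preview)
Your greedy-extension strategy matches the paper's, but the selection criterion you propose for $R^\ast$ is too weak and the dichotomy you sketch to rescue it does not work. You pick $R^\ast\in\cR'\setminus\cW$ merely with $R^\ast\not\subseteq\bigcup_{R\in\cW}R$, then hope to show $\rank(R^\ast\setminus\bigcup_{R\in\cW}R)\le\abs{R^\ast\setminus\bigcup_{R\in\cW}R}-(\delta-1)$. This inequality is false in general: if $R^\ast$ meets $\bigcup_{R\in\cW}R$ in all but one point, the left side is $1$ while the right side is $2-\delta\le 0$. Your dichotomy also has the cases inverted: large overlap is precisely the \emph{bad} case (the rank does not increase, yet the invariant demands an extra $\delta-1$ deficit that the single new point cannot supply), not the one where you ``win immediately''; and in the small-overlap case you still need $\abs{R^\ast}\ge 2\delta-3$ to guarantee $\ge\delta-1$ new points, which is not given. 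Neither Lemma~\ref{lemma_for_D(B)} nor Lemma~\ref{lemma_for_rank_B_i} helps here---they address a different combinatorial situation.

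The paper's fix is a one-line change of criterion: choose $R'\in\cR'\setminus\cW$ with $\rank\bigl(R'\cup\bigcup_{R\in\cW}R\bigr)>\rank\bigl(\bigcup_{R\in\cW}R\bigr)$. Such an $R'$ exists because $\rank(\bigcup_{R\in\cW}R)\le|\cW|r<(\lceil k/r\rceil-1)r<k=\rank(\bigcup_{R\in\cR'}R)$. The rank increase forces $\abs{R'\cap\bigcup_{R\in\cW}R}<|R'|-\delta+1$, since any $|R'|-\delta+1$ coordinates of $R'$ already span $R'$ (as $\cC|_{R'}$ has distance $\ge\delta$); hence $R'$ contributes at least $\delta$ new points. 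Taking any $(\delta-1)$-subset $R''$ of those new points and using $\spn(R')=\spn(R'\setminus R'')$ gives
\[
\rank\Bigl(\bigcup_{R\in\cW\cup\{R'\}}R\Bigr)=\rank\Bigl(\bigcup_{R\in\cW}R\cup(R'\setminus R'')\Bigr)\le\rank\Bigl(\bigcup_{R\in\cW}R\Bigr)+\abs{R'\setminus\bigcup_{R\in\cW}R}-(\delta-1),
\]
from which the invariant for $\cW\cup\{R'\}$ follows directly, with no case analysis needed.
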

\begin{proof}
If $|\cV|=\lceil\frac{k}{r}\rceil-1$, then the lemma follows by
setting $\cV'=\cV$. Otherwise, we have
$|\cV|<\lceil\frac{k}{r}\rceil-1$. Since every $R\in\cR$ is an
$(r,\delta)$-repair set, $\rank(R)\leq r$. This means that
$\rank\left(\bigcup_{R\in\cV}R\right)<(\lceil\frac{k}{r}\rceil-1)r<k$.
Note that by the lemma requirements, $\rank\left(\bigcup_{R\in
  \cR'}R\right)=k$, which implies that there exists a $R'\in
\cR'\setminus \cV$ such that $\rank(R'\cup (\bigcup_{R\in
  \cV}R))>\rank(\bigcup_{R\in \cV}R)$.
We recall, however, that since $R'$ is an $(r,\delta)$-repair set, if $R^*\subseteq R'$, $\abs{R^*}\geq \abs{R'}-\delta+1$, then $\spn(R^*)=\spn(R')$. It follows that $R'$ cannot have a large intersection with $\bigcup_{R\in \cV}R$, namely,
$$\abs{R'\cap \left(\bigcup_{R\in \cV}R\right)}<|R'|-\delta+1.$$
Hence, there exists a $R''\subseteq R'\setminus \left(\bigcup_{R\in
  \cV}R\right)$ with $|R''|=\delta-1$. Again, using the fact that $R'$
is an $(r,\delta)$-repair set and $\abs{R'\setminus
  R''}=\abs{R'}-\delta+1$, we have $\rank(R')=\rank(R'\setminus R'')$, and
therefore,
\begin{equation*}
\begin{split}
\rank\left(\bigcup_{R\in \cV\cup\{R'\}}R\right)&=\rank\left(\left(\bigcup_{R\in \cV\cup\{R'\}}R\right)\setminus R''\right)\\
&\leq \left|R'\setminus \left(\left(\bigcup_{R\in \cV}R\right)\cup R''\right)\right|+\rank\left(\bigcup_{R\in \cV}R\right)\\
&\leq \left|R'\setminus \left(\bigcup_{R\in \cV}R\right)\right|-\delta+1+\left|\bigcup_{R\in \cV}R\right|-|\cV|(\delta-1)\\
&=\left|\bigcup_{R\in \cV\cup\{R'\}}R\right|-|\cV\cup\{R'\}|(\delta-1),
\end{split}
\end{equation*}
where the last inequality holds by the fact $R''\subseteq R'\setminus \left(\bigcup_{R\in \cV}R\right)$ and \eqref{eqn_condition_joint_extend}.
Therefore, repeating the above operations, we can extend $\cV$ to a $(\lceil\frac{k}{r}\rceil-1)$-subset $\cV'\subseteq \cR'$ such
that
$$\rank\left(\bigcup_{R\in \cV'}R\right)\leq \left|\bigcup_{R\in \cV'}R\right|-|\cV'|(\delta-1).$$
\end{proof}

\begin{lemma}\label{lemma_specail_cases}
  Let $\cC$ be an $[n,k]_q$ linear code with all symbol
  $(r,\delta)$-locality. Let $\cR$ be the ECF given by Lemma
  \ref{lemma_initialization}.  Assume $\cV\subseteq \cR$ such that
  $|\cV|\leq \lceil\frac{k}{r}\rceil-1$. If there exists a $R'\in \cV$
  such that
  \begin{equation}\label{eqn_condition_joint>r}
    \left|R'\bigcap \parenv{\bigcup_{R\in \cV\setminus\{R'\}}R}\right|> |R'|-\delta+1,
  \end{equation}
  then there exists $S\subseteq [n]$ with
  $\rank(S)=k-1$ and
\begin{equation*}
  |S|\geq
  k+\parenv{\left\lceil\frac{k}{r}\right\rceil-1}(\delta-1).
\end{equation*}
\end{lemma}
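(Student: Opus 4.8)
The plan is to argue by induction on $|\cV|$, using Lemmas~\ref{lemma_for_rank_B_i} and~\ref{lemma_extend} together with the defining property of the ECF $\cR$; one may assume $\left\lceil\frac{k}{r}\right\rceil\ge 2$, since otherwise $\cV=\emptyset$ and the hypothesis is void. The crux is a pair of observations about $R'$. First, since $R'$ satisfies the strict inequality~\eqref{eqn_condition_joint>r}, the set $I:=R'\cap\bigcup_{R\in\cV\setminus\{R'\}}R$ has $|R'\setminus I|\le\delta-1$; puncturing the $(r,\delta)$-code $\cC|_{R'}$ at those fewer than $\delta$ positions preserves its dimension, so $\rank(I)=\rank(R')$ and hence $\spn(R')=\spn(I)\subseteq\spn\left(\bigcup_{R\in\cV\setminus\{R'\}}R\right)$. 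Since $\cV\setminus\{R'\}\subseteq\cR\setminus\{R'\}$ and $\bigcup_{R\in\cR}R=[n]$, this forces $\spn\left(\bigcup_{R\in\cR\setminus\{R'\}}R\right)=\spn([n])$, i.e.\ $\rank\left(\bigcup_{R\in\cR\setminus\{R'\}}R\right)=k$. Second, because $([n],\cR)$ is an ECF we have $\bigcup_{R\in\cR\setminus\{R'\}}R\ne[n]$, and as $\cR$ covers $[n]$ there is a coordinate $p\in R'$ with $p\notin\bigcup_{R\in\cR\setminus\{R'\}}R$. This \emph{private} coordinate $p$ is exactly what will let $|S|$ exceed the Singleton-type value by one.

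For the induction step, put $\cV_0:=\cV\setminus\{R'\}$, so $|\cV_0|=|\cV|-1\le\left\lceil\frac{k}{r}\right\rceil-2$. If some $R''\in\cV_0$ satisfies inequality~\eqref{eqn_condition_joint>r} with $\cV$ replaced by $\cV_0$, then the induction hypothesis applied to $\cV_0$ (same $k$, fewer blocks) already produces the desired $S$. Otherwise every $R''\in\cV_0$ obeys~\eqref{eqn_condition_joint}, so Lemma~\ref{lemma_for_rank_B_i} gives
\[
\rank\left(\bigcup_{R\in\cV_0}R\right)\le\left|\bigcup_{R\in\cV_0}R\right|-|\cV_0|(\delta-1).
\]
Now apply Lemma~\ref{lemma_extend} with the ambient family taken to be $\cR':=\cR\setminus\{R'\}$: this is permissible because $\cV_0\subseteq\cR'$, $|\cV_0|\le\left\lceil\frac{k}{r}\right\rceil-1$, and $\rank\left(\bigcup_{R\in\cR'}R\right)=k$ by the previous paragraph. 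It yields a family $\cV'$ with $\cV_0\subseteq\cV'\subseteq\cR\setminus\{R'\}$, $|\cV'|=\left\lceil\frac{k}{r}\right\rceil-1$, and
\[
\rank\left(\bigcup_{R\in\cV'}R\right)\le\left|\bigcup_{R\in\cV'}R\right|-\left(\left\lceil\frac{k}{r}\right\rceil-1\right)(\delta-1).
\]

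Finally, set $U:=\bigcup_{R\in\cV'}R$. Each repair set has rank at most $r$, so $\rank(U)\le\left(\left\lceil\frac{k}{r}\right\rceil-1\right)r\le k-1$. Since $\cV_0\subseteq\cV'$ we get $p\in\spn(R')\subseteq\spn\left(\bigcup_{R\in\cV_0}R\right)\subseteq\spn(U)$, while $p\notin U$ because $p$ is private to $R'$ and $R'\notin\cV'$. Hence $U\cup\{p\}$ has the same rank as $U$ but one more element, so
\[
\left|U\cup\{p\}\right|-\rank\left(U\cup\{p\}\right)\ge\left(\left\lceil\frac{k}{r}\right\rceil-1\right)(\delta-1)+1 .
\]
Enlarging $U\cup\{p\}$ by coordinates that strictly increase the rank until the rank equals $k-1$ yields a set $S\subseteq[n]$ with $\rank(S)=k-1$ for which $|S|-(k-1)$ equals the left-hand side above, whence $|S|\ge k+\left(\left\lceil\frac{k}{r}\right\rceil-1\right)(\delta-1)$, as claimed.

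The delicate point — and the reason the straightforward count, which would only reach $|S|=k-1+\left(\left\lceil\frac{k}{r}\right\rceil-1\right)(\delta-1)$, is improved by exactly one — is the handling of the single coordinate $p$: its existence must be extracted from the ECF structure, and the extension in Lemma~\ref{lemma_extend} must be carried out inside $\cR\setminus\{R'\}$ so that $p$ provably remains outside $U$. Everything else is routine: the reduction to a sub-family satisfying~\eqref{eqn_condition_joint} is absorbed into the induction, and the rank bookkeeping is standard.
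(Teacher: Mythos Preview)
Your proof is correct and follows essentially the same route as the paper's: the paper phrases your induction as a one-shot minimality argument (take a minimal $\cV'\subseteq\cV$ still satisfying \eqref{eqn_condition_joint>r}, so that $\cV'\setminus\{R'\}$ automatically satisfies the hypothesis of Lemma~\ref{lemma_for_rank_B_i}), and then, exactly as you do, extends inside $\cR\setminus\{R'\}$ via Lemma~\ref{lemma_extend} and uses the ECF property to gain the extra coordinate. The only cosmetic difference is that the paper adjoins all of $R'$ to $\bigcup_{R\in\cV''}R$ (noting that $R'\not\subseteq\bigcup_{R\in\cV''}R$ gives at least one new point), whereas you explicitly isolate the single private point $p$; the bookkeeping is otherwise identical.
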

\begin{proof}
  Assume $\cV$ satisfies \eqref{eqn_condition_joint>r}. Let
  $\cV'\subseteq \cV$ be a minimal subset for which
  \eqref{eqn_condition_joint>r} holds, i.e., there exists a set $R'\in
  \cV'$ with $|R'\cap (\bigcup_{R\in \cV'\setminus
    \{R'\}}R)|>|R'|-\delta+1$, which in turn implies that
  $\spn(R')\subseteq \spn(\bigcup_{R\in \cV'\setminus \{R'\}}R)$. By
  the minimality of $\cV'$, the set $\cV'\setminus \{R'\}$ satisfies
  the requirements of Lemma \ref{lemma_for_rank_B_i}, which implies
  \begin{equation*}
    \rank\left(\bigcup_{R\in \cV'\setminus \{R'\}}R\right)\leq \left|\bigcup_{R\in \cV'\setminus \{R'\}}R\right|-|\cV'\setminus \{R'\}|(\delta-1).
  \end{equation*}

  As noted before, $\spn(R')\subseteq \spn(\bigcup_{R\in \cV'\setminus
    \{R'\}}R)$, and since trivially $\rank(\bigcup_{R\in \cR}R)=k$, we
  also necessarily have $\rank(\bigcup_{R\in
    \cR\setminus\{R'\}\}}R)=k$.  Therefore, by Lemma
  \ref{lemma_extend}, we can extend $\cV'\setminus \{R'\}$ to a
  $(\lceil\frac{k}{r}\rceil-1)$-subset $\cV''\subseteq\cR\setminus
  \{R'\}$ such that
  $$\rank\left(\bigcup_{R\in \cV''} R \right)\leq \left|\bigcup_{R\in \cV''}R\right|-|\cV''|(\delta-1)=\left|\bigcup_{R\in \cV''}R\right|-\left(\left\lceil\frac{k}{r}\right\rceil-1\right)(\delta-1).$$

  Considering the set $\cV''\cup \{R'\}$, we have
  \begin{equation}\label{eqn_V_2}
    \begin{split}
      \rank\left(\bigcup_{R\in \cV''\cup\{R'\}}R\right)=\rank\left(\bigcup_{R\in \cV''}R\right)&\leq \left|\bigcup_{R\in \cV''}R\right|-\left(\left\lceil\frac{k}{r}\right\rceil-1\right)(\delta-1)\\
      &\leq \left|\bigcup_{R\in \cV''\cup\{R'\}}R\right|-1-\left(\left\lceil\frac{k}{r}\right\rceil-1\right)(\delta-1),
    \end{split}
  \end{equation}
  where the last inequality holds due to the fact that
  $R'\not\subseteq \bigcup_{R\in \cV''}R$ by the properties of the ECF
  $\cR$.

  Since
  \[\rank\left(\bigcup_{R\in
    \cV''\cup\{R'\}}R\right)=\rank\left(\bigcup_{R\in
    \cV''}R\right)\leq
  \parenv{\left\lceil\frac{k}{r}\right\rceil-1}r\leq k-1,\] we can
  find a set $S$ with $\rank(S)=k-1$ by taking $\bigcup_{R\in
    \cV''\cup\{R'\}}R$ and adding arbitrary coordinates until
  reaching the desired rank. This set $S$ has size
  \[|S|\geq k-1-\rank\left(\bigcup_{R\in
    \cV''\cup\{R'\}}R\right)+\abs{\bigcup_{R\in \cV''\cup\{R'\}}R}\geq
  k+\parenv{\left\lceil\frac{k}{r}\right\rceil-1}(\delta-1),\]
  which follows from \eqref{eqn_V_2}.
\end{proof}

\textbf{\textit{ Proof of Lemma \ref{lemma_find_V_1}:}}
If the requirements of Lemma \ref{lemma_specail_cases} hold for
$\cV$, then the desired $S$ may be obtained by Lemma
\ref{lemma_specail_cases}, and we are done. Otherwise, $\cV$ does not satisfies
the requirements of Lemma \ref{lemma_specail_cases}, and then using
Lemmas \ref{lemma_for_rank_B_i} and \ref{lemma_extend} (setting
$\cR'=\cR$ in the latter), $\cV$ may be extended to a set
$\cV'\subseteq\cR$ with $\lceil \frac{k}{r}\rceil-1$ elements
satisfying
\begin{equation*}
\rank\left(\bigcup_{R\in \cV'}R\right)\leq \left|\bigcup_{R\in \cV'}R\right|-|\cV'|(\delta-1)=\left|\bigcup_{R\in \cV'}R\right|-\left(\left\lceil\frac{k}{r}\right\rceil-1\right)(\delta-1).
\end{equation*}
Recall that $k=ru+v$, with $0\leq v\leq r-1$. It now follows that
\begin{equation}\label{eqn_rank_S_2}
\begin{split}
k-1-\rank\left(\bigcup_{R\in \cV'}R\right)
&\geq ru+v-1-\left|\bigcup_{R\in \cV'}R\right|+|\cV'|(\delta-1)\\
&=
\begin{cases}
 u(r+\delta-1)-\left|\bigcup_{R\in \cV'}R\right|+v-1, &\text{ if $v\ne 0$},\\
 r+(u-1)(r+\delta-1)-\left|\bigcup_{R\in \cV'}R\right|+v-1, &\text{ if $v=0$},\\
\end{cases}\\
&=
\begin{cases}
 |\cV'|(r+\delta-1)-\left|\bigcup_{R\in \cV'}R\right|+v-1, &\text{ if } v\ne 0,\\
 r+|\cV'|(r+\delta-1)-\left|\bigcup_{R\in \cV'}R\right|-1, &\text{ if } v=0,\\
\end{cases}\\
&\overset{(a)}{\geq}\begin{cases}
 |\cV|(r+\delta-1)-\left|\bigcup_{R\in \cV}R\right|+v-1, &\text{ if } v\ne 0,\\
 r+|\cV|(r+\delta-1)-\left|\bigcup_{R\in \cV}R\right|-1, &\text{ if } v=0,\\
\end{cases}\\
&\overset{(b)}{\geq} \begin{cases}
 \Delta+v-1, &\text{ if } v\ne 0,\\
 r+\Delta-1, &\text{ if } v=0,\\
\end{cases}\\
\end{split}
\end{equation}
where $(a)$ follows from the fact that $|R|\leq r+\delta-1$ for all
$R\in \cV'$, and $(b)$ follows from \eqref{eqn_delta}.

For the case $v\ne 0$,
$\lceil\frac{k+\Delta}{r}\rceil=u+\lceil\frac{v+\Delta}{r}\rceil>\lceil
\frac{k}{r}\rceil=u+1$ means that $\Delta+v> r$, i.e., $\Delta+v-1\geq
r$. Thus, by \eqref{eqn_rank_S_2} and $\Delta>0$,
\begin{equation}\label{eqn_rank_V3}
\rank\left(\bigcup_{R\in \cV'}R\right)\leq k-1- r,
\end{equation}
for both $v=0$ and $v\ne 0$.

Again, by the same analysis as in Lemma \ref{lemma_extend}, we can
obtain yet another set $R'\in \cR\setminus \cV'$ with $\rank(R'\cup
(\bigcup_{R\in \cV'})R)>\rank(\bigcup_{R\in \cV'}R)$ and then
\begin{equation}\label{eqn_rank_V_3_R}
\rank\left(\bigcup_{R\in \cV'\cup \{R'\}}R\right)\leq \left|\bigcup_{R\in \cV'\cup \{R'\}}R\right|-\left|\cV'\cup \{R'\}\right|(\delta-1)=\left|\bigcup_{R\in \cV'\cup \{R'\}}R\right|-\left\lceil\frac{k}{r}\right\rceil(\delta-1).
\end{equation}
Note that $\rank(\bigcup_{R\in \cV'\cup \{R'\}}R)\leq
\rank(\bigcup_{R\in \cV'}R)+r\leq k-1$ by \eqref{eqn_rank_V3}.
Therefore, construct $S$ by adding coordinates to $\bigcup_{R\in
  \cV'\cup \{R'\}}R$ until reaching sufficient rank, $\rank(S)=k-1$, and then by
\eqref{eqn_rank_V_3_R} we have
\[|S|\geq k-1-\rank\left(\bigcup_{R\in \cV'\cup \{R'\}}R\right)+\left|\bigcup_{R\in \cV'\cup \{R'\}}R\right| \geq k-1+\left\lceil\frac{k}{r}\right\rceil(\delta-1)\geq k+\parenv{\left\lceil\frac{k}{r}\right\rceil-1}(\delta-1),\]
which completes the proof.\qed

\end{document}